\newlength{\figurewidth}
\newlength{\figureheight}
\newtheorem{property}{Proposition}[section]
\newtheorem{definition}{Definition}[section]
\newtheorem{theorem}{Theorem}[section]
\newtheorem{lemma}{Lemma}[section]
\newtheorem{corollary}{Corollary}[section]
\newtheorem{assumption}{Assumption}
\begin{document}
	\setcounter{page}{1}
	
	\title{No-arbitrage implies power-law market impact and\\
	 rough volatility}

	\author{
		Paul Jusselin\footnote{paul.jusselin@polytechnique.edu}~~and Mathieu Rosenbaum\footnote{mathieu.rosenbaum@polytechnique.edu} \\
		\'Ecole Polytechnique, CMAP
		}

	\maketitle
	
	\begin{abstract}
\noindent	Market impact is the link between the volume of a (large) order and the price move during and after the execution of this order. We show that under no-arbitrage assumption, the market impact function can only be of power-law type. Furthermore, we prove that this implies that the macroscopic price is diffusive with rough volatility, with a one-to-one correspondence between the exponent of the impact function and the Hurst parameter of the volatility. Hence we simply explain the universal rough behavior of the volatility as a consequence of the no-arbitrage property. From a mathematical viewpoint, our study relies in particular on new results about hyper-rough stochastic Volterra equations. 
	\end{abstract}

\noindent \textbf{Keywords:} No-arbitrage property, market impact, rough volatility, rough Heston model, hyper-rough Heston model, Hawkes processes.

	\section{Introduction}
	It is now well-admitted that volatility is rough. This stylized fact first established in \cite{GJR14} and confirmed in \cite{BLP16,LMPR17} means that the (log-)volatility process of an asset essentially behaves as a fractional Brownian motion (fBm for short) with Hurst parameter of order $0.1$. Recall that a fBm $(W^H_t)_{t\geq 0}$ with Hurst parameter $H\in(0,1)$ is a Gaussian process that can be written under the Mandelbrot-van Ness representation as
	\begin{equation*}
	W^H_t = \int_{-\infty}^{0}\big((t-s)^{H-\frac{1}{2}}-(-s)^{\frac{1}{2}}\big)\mathrm{d}B_s + \int_{0}^{t}(t-s)^{H-\frac{1}{2}}\mathrm{d}B_s,
	\end{equation*}
	with $(B_t)_{t\geq 0}$ a classical Brownian motion.	For any $\varepsilon>0$, the sample paths of $(W^H_t)_{t\geq 0}$ are almost surely $H-\varepsilon$ H\"older (and not $H$ H\"older). Therefore the trajectories are very rough when $H$ is small.\\
	
	Various rough volatility models have been recently introduced in the literature, notably in the purpose of risk management of derivatives, such as the rough Heston model of \cite{EER16} where the asset price $(P_t)_{t\geq 0}$ satisfies\\
	\begin{equation*}
		\frac{\mathrm{d}P_t}{P_t} = \sqrt{V_t}\big(\rho\mathrm{d}B^2_t + \sqrt{1-\rho^2}\mathrm{d}B_t^{1}\big)
	\end{equation*}
	with
	\begin{equation}
	\label{eq:intro_rough_vol}
		V_t = V_0 + \frac{\lambda}{\Gamma(H+\frac{1}{2})}\int_{0}^{t}(t-s)^{H-\frac{1}{2}}\big(\theta(t)-V_s\big)\mathrm{d}s +\frac{\nu}{\Gamma(H+\frac{1}{2})} \int_{0}^{t}(t-s)^{H-\frac{1}{2}}\sqrt{V_s}\mathrm{d}B^1_s,
	\end{equation}
	where $B_1$ and $B_2$ are independent Brownian motions, $\lambda$ and $\nu$ two positive constants, $\theta$ a deterministic non-negative function and $\rho\in(-1,1)$ a correlation factor. The particular interest of this model is that, as for the classical Heston model, semi-explicit pricing and hedging formulas can be obtained, see \cite{EER16,EER17}.\\
	
A puzzling question is the origin of the universal rough volatility property of financial assets. A first explanation is proposed in \cite{EEFR16}. In this work, the authors place themselves in a highly endogenous market, meaning that most orders are sent in reaction to other orders and without economic motivation. They show that in this context, the widely used trading practice of metaorders splitting (see below for definition of a metaorder) leads to the rough Heston dynamic \eqref{eq:intro_rough_vol} for the macroscopic price. However, this result is found using a quite specific parametric model for the high frequency price.\\
	
In this paper, we wish to obtain a fundamental explanation underlying the rough volatility property. In fact we prove that rough volatility is simply a consequence of the no-arbitrage principle together with the existence of market impact.\\
	
Market impact is the fact that on average, a buy order moves the price up and a sell order moves the price down. The impact of a single order being very difficult to assess, one usually considers large sets of orders split by brokers, so-called metaorders. Empirical studies of market impact have shown that for a buy metaorder (and symmetrically for a sell metaorder) market impact can be decomposed in two phases: a transient phase with a concave rise of the price during the metaorder execution, and a decay phase, where the price decreases towards a long-term level after the execution is completed.\\
	
Let us consider a buy (say) metaorder and let $(q_t)_{t\geq 0}$ be the cumulative volume of this metaorder executed between the initial time $0$ and time $t$. The market impact function of this metaorder is defined as
	\begin{equation*}
	MI(t) = \mathbb{E}[P^{(q_s)_{s\leq t}}_t - P_0],
	\end{equation*}
	where we put the superscript $(q_s)_{s\leq t}$ on $P$ to insist on the fact that the price dynamic depends on the execution process of the metaorder. Of course the
above formula only makes sense in a model where $P^{(q_s)_{s\leq t}}_t$ is a well-defined stochastic process, as will be the case in the next sections.\\ 
	
The permanent market impact (PMI for short) of this metaorder is given by the quantity
	\begin{equation*}
		PMI =  \underset{t\rightarrow+\infty}{\lim}MI(t).
	\end{equation*}
	It is shown in \cite{G10, HS04} that the absence of price manipulation on a market\footnote{A price manipulation is a round-trip (strategy starting and finishing with null inventory) whose expected cost is negative.} implies that the permanent market impact is proportional to the total volume of the metaorder. In particular it does not depend on the metaorder execution strategy. This linear permanent market impact property has consequences on the price dynamics. Indeed, assuming moreover that the price $P$ is a martingale, it is shown in \cite{J15} that up to a negligible martingale term,
	\begin{equation}
	\label{eq:price_process_def}
		P_t = \underset{s\rightarrow+\infty}{\lim}\mathbb{E}\big[V^a_s - V^b_s|\mathcal{F}_t\big],
	\end{equation}
	where $V^a$ (resp. $V^b$) is the cumulated volume of buy (resp. sell) market orders since the initial time $0$ and $(\mathcal{F}_t)_{t\geq 0}$ corresponds to the filtration generated by the order flow process. Hence the price moves when orders arrive on the market since market participants revise their anticipation about the long term cumulative imbalance of the order flow.\\
	
As for the transient part of the market impact, empirical measurements show that provided the execution rate of the metaorder is relatively constant, the function $MI$ is close to a power-law with respect to time, that is $MI(t)\sim t^{1-\alpha}$ with $\alpha\in (0,1)$, see \cite{BLL15, B10, FLM03, BDKLLT11}. More precisely, the coefficient $\alpha$ is found to be about $1/2$ so that the so-called square root law is approximately satisfied. Actually, it is proved in \cite{PRS17} that under some leverage neutrality assumption, the square root law can be simply derived from dimensional analysis.\\
	
We show in this work that under no-arbitrage assumption (represented by the linear permanent impact, the martingale price and thus \eqref{eq:price_process_def}), the market impact function has indeed to be a power-law of the form $MI(t) \sim t^{1-\alpha}$. Then we prove that for any $\alpha\in (0,1)$, the scaling limit of the price \eqref{eq:price_process_def} exists and satisfies
	\begin{equation*}
	\widehat{P}_t = B_{X_t}
	\end{equation*}
	with
	\begin{equation}
	\label{eq:intro:price_var_eq}
		X_t = \frac{2}{\delta}\int_{0}^{t}F^{\alpha, \lambda}(s) \mathrm{d}s + \frac{1}{\delta\sqrt{ \lambda}}\int_{0}^{t}F^{\alpha, \lambda}(t-s)\mathrm{d}W_{X_s },
	\end{equation}
where $W$ and $B$ are two Brownian motions, $\delta$ and $\lambda$ two positive constants and $F^{\alpha, \lambda}$ is the Mittag-Leffler cumulative distribution function, see Appendix \ref{annex:mittag} for definition. The correlation between the Brownian motions $B$ and $W$ is stochastic and related to the order flow imbalance. The above equation is a generalization of the rough Heston model \eqref{eq:intro_rough_vol}. Indeed we can show that when $ \alpha>1/2$, after differentiation, Equation \eqref{eq:intro:price_var_eq} can be rewritten under the form of \eqref{eq:intro_rough_vol} (up to a stochastic correlation factor)
with associated Hurst parameter $H = \alpha-1/2$. For $\alpha\leq 1/2$, we prove that $X$ is not continuously differentiable but has H\"older regularity $2\alpha-\varepsilon$ for any $\varepsilon>0$. Therefore we give to \eqref{eq:intro:price_var_eq} the name {\it hyper-rough Heston model} when $\alpha\leq 1/2$. Hence we are able to define rough Heston models for Hurst parameter in $(-\frac{1}{2}, \frac{1}{2}]$.\\
	
To obtain our results, our only modeling assumption is a dynamic for the order flow. More precisely, we consider for buy and sell market order arrivals two independent Hawkes processes and assume that each order is of unit size, see \cite{EEFR16,JR16b}. Recall that a Hawkes process $N$ is a self-exciting point process whose intensity $(\lambda_t)_{t\geq 0}$ is defined by
	\begin{equation*}
	\lambda_t = \mu + \int_{0}^{t}\phi(t-s)\mathrm{d}N_s,
	\end{equation*}
	with $\mu$ a positive constant and $\phi$ a non-negative locally integrable function. Such dynamic is a generalization of the Poisson process which is usually considered when modeling order flows, see among others \cite{Cont2010,cont2010stochastic,smith2003statistical}. It is non-parametric and very flexible so that it is really reasonable to assume that the actual order flow can be well approximated by a Hawkes based model. Note that we will not put any restriction
on the Hawkes parameters $\mu$ and $\phi$, except that they are similar for the the buy and sell flows. In this case, it is shown in \cite{J15} that the price process \eqref{eq:price_process_def} satisfies
	\begin{equation}
	\label{eq:price_kernel}
	P_t = P_0 + \int_{0}^{t}\xi(t-s)\mathrm{d}(N^a-N^b)_s,	\end{equation} with
	\begin{equation}	
	\label{eq:xi}\xi(t) =  1+ \Big(1+ \int_0^{+\infty} \psi(u)\mathrm{d}u\Big)\int_{t}^{+\infty}  \phi(u)\mathrm{d}u
	\end{equation}
	and
$$\psi = \sum_{i\geq 1}(\phi)^{*i},
$$ where $(\phi)^{*1}=\phi$ and for $k\geq 2$, $(\phi)^{*k}$ denotes the convolution product of $(\phi)^{*(k-1)}$ with $\phi$.\\

Using a rescaling procedure to describe the macroscopic behavior of \eqref{eq:price_kernel}, we show that only one very subtle specification of the Hawkes processes can lead to a non-trivial market impact, which has to be power-law. Furthermore, it implies that the market is highly endogenous. In addition, depending on the market impact shape, the scaling limit of the price is a rough or hyper-rough Heston model \eqref{eq:intro:price_var_eq}, with a one-to-one correspondence between the exponent of the impact function and the Hurst parameter of the volatility.\\
	
	The paper is organized as follows. In Section \ref{sec:market_impact}, we show that under the assumption that the market impact function is not degenerate, it can only be a power-law with parameter $1-\alpha$ for some $\alpha\in(0,1)$. Then in Section \ref{sec:scaling_limits} we explain that the macroscopic limit of (\ref{eq:price_kernel}) is a rough or hyper-rough Heston model with Hurst parameter $H=\alpha-1/2$.

\section{Market impact is power-law}
\label{sec:market_impact}
In this section, we show that if there exists a non-degenerate market impact function, it has to be a power-law. Moreover we will see that it implies a highly endogenous market. By non-degenerate we essentially mean a market impact function which is ultimately decreasing for buy metaorders (and conversely for sell metaorders), see Assumption \ref{assumption:market_impact}. This is the formalization of the two phases behavior of market impact discussed in the introduction.

\subsection{Asymptotic framework and metaorders modeling}
\label{subsec:scaling_metaorder_parameters}
Let $T$ be our final horizon time for the metaorders we will define in the sequel. Recall that the market order flow on $[0,T]$ (and after $T$) is given by two Hawkes processes with same parameters, $N^a$ for the buy market orders and $N^b$ for the sell orders. Since the time-length of a metaorder is typically large compared to the inter-arrivals of individual market orders, it is natural to consider that $T$ goes to infinity.\\

We want to work in a general setting which enables us to be potentially compatible with empirical studies showing that markets are highly endogenous. In the Hawkes process context, the degree of endogeneity of the market is measured by the $L^1$ norm of $\phi$, denoted by $\|\phi\|_1$, see \cite{FS15,BBH13,JR16b,jaisson2015limit}. Therefore a highly endogenous market corresponds to the case where $\|\phi\|_1$ is close but smaller than unity. So we allow the model parameters to possibly depend on $T$. Thus, from now on, we use the superscript $T$ for all quantities that could depend on $T$. In particular $\|\phi^T \|_1$ may go to one as $T$ tends to infinity. We also write  $N^{a,T}$, $N^{b,T}$, $\mu^T$, $\phi^T$ to describe the market order flow and model parameters corresponding to the time-horizon $T$, and we set $\phi^T = a^T\phi $ for $\phi$ a non-negative function such that $\|\phi\|_1=1$ and $(a^T)_{T\geq0}$ a real sequence in $(0,1)$. Note that we do not impose that $!
 a^T$ goes
  to one. In fact we will show that it is a necessary condition for the existence of a non-degenerate market impact function.\\
	
We finally need to define a formalism for a sequence of buy (say) metaorders which will be added to the global order flow. We assume that a metaorder is split through unitary market orders over $[0,T]$. In the spirit of \cite{J15}, we consider that the arrival times of the market orders are given by a non-homogenous Poisson process with intensity
	\begin{equation*}
	\nu^T(t)=I^Tf(\frac{t}{T}),
	\end{equation*}
with $I^T$ a sequence of non-negative real numbers and $f$ a non-negative continuous function on $[0,1]$ with positive integral. Hence the metaorder average size is $ I^T T\|f\|_1$ and the order of magnitude of its duration is essentially $T$. Note that we allow $f$ to be different from a constant to get more realistic splitting schemes than those given by constant rate Poisson processes, see for example \cite{AC01}.\\
	
To compute the market impact function in practice, one typically considers the empirical mean of the price movements over many metaorders with similar durations and volumes counted in proportion of the total traded volume. So in our setting, it is natural to take $I^T\times T$ essentially proportional to the total number of other orders executed over $[0,T]$. Thus we consider
	\begin{equation*}
	I^T = \gamma \beta^T,\,\,\mathrm{with}\,\, \beta^T = \mu^T(1-a^T)^{-1},	
	\end{equation*}
where $\gamma<1$ and $\beta^T$ is the average intensity of a stationary version\footnote{Rigorously speaking, our Hawkes processes are not stationary since they start at time $t=0$ and not $t=-\infty$.} of the Hawkes process $N^{a,T}$. Thus, the proportion of the order flow which is due to the considered metaorder is essentially $\gamma/(1+\gamma)$ and $\gamma$ will be considered reasonably small.

\subsection{Market impact in the Hawkes setting}
\label{subsec:market_impact_hawkes}
In this section the parameter $T$ is fixed.  Assuming that the volume of our metaorder is small enough, the total order flow is not deeply modified by it. Hence other agents do not observe significant changes in the order flow dynamics. So the way the market reacts to the incoming orders remains unchanged. Recall that in our model, the market reaction to the order flow (without our metaorder) is given by \eqref{eq:price_kernel}.\\

We work under the setting of the previous section assuming that the number of assets bought through our metaorder is a non-homogenous Poisson process $(n^T_{t})_{t\geq 0}$. Therefore we obtain
	\begin{equation*}
	    P^T_t = P_0 +  \int_{0}^{t}\xi^T(t-s)\mathrm{d}(N^{a,T}-N^{b,T}+n^T)_s,
	\end{equation*}
where $(N^{a,T}_t, N^{b,T}_t)_{t\geq 0}$ corresponds to the aggregated order flows of all other agents. Thus the market digests the order flow as if it is a bivariate Hawkes process with parameters $\mu$ and $\phi$ ($\gamma$ being small).\\
	
Now we are in the position to properly compute the market impact function of our metaorder. We have
\begin{equation*}
	MI^T(t) = \mathbb{E}[P^T_t-P_0] = \int_{0}^{t}\xi^T(t-s)\mathbb{E}[\mathrm{d}n^T_s].
\end{equation*}
This equation together with \eqref{eq:xi} shows that for any $t\geq 0$, the market impact function can be decomposed into two terms as follows:
\begin{equation*}
    MI^T(t) = PMI^T(t) + TMI^T(t),
\end{equation*}
where
$$
PMI^T(t) = \mathbb{E}[n^T_t]
$$ 
and
$$
TMI^T(t) = \int_0^t \Gamma^T(t-s)\mathbb{E}[\mathrm{d}n^T_s],
$$
with
\begin{equation*}
    \Gamma^T(s) = (1 - a^T)^{-1} \int_{s}^{+\infty}  \phi^T(u)\mathrm{d}u,
\end{equation*}
where we have used the fact that
$$
\int_0^{+\infty}\psi^T(u) \mathrm{d}u =\sum_{k\geq 1}^{+\infty} \big(\int_0^{+\infty}\phi^T(u)\mathrm{d}u\big)^k= \frac{a^T}{1-a^T}.
$$
Note that the definition of $PMI^T(t)$ is compatible with that of $PMI$ given in the introduction. Indeed since the order intensity from our metaorder is eventually null and $\Gamma^T(t)$ tends to zero as $t$ goes to infinity,  we get  
\begin{equation*}
    \underset{t\rightarrow +\infty}{\lim}TMI^T(t) =0.
\end{equation*}
The effect of the term $TMI^T$ is thus only temporary. That is why this term is called transient part of the market impact. 

\subsection{Scaling limit of the market impact}
\label{subsec:scaling_limit}
We now rescale the market impact function as the horizon time $T$ goes to infinity. If the sequence of rescaled market impact functions converges, we call its limit macroscopic market impact function.\\

First we reparametrize in time and consider $(MI^T(f,tT))_{t\in\mathbb{R}^+}$ (we put the function $f$ as parameter of $MI^T$ to insist on the fact that the market impact function depends on the metaorder strategy). Thus $t=1$ corresponds essentially to the end of the metaorder. Regarding the scaling in space, since in our framework the size of a metaorder is measured relatively to the total volume, which is of order $T\beta^T$ on $[0,T]$, we finally define our rescaled market impact function $\overline{MI}^T$ on $\mathbb{R}^+$ by
\begin{equation*}
	\overline{MI}^T(f,t) = \frac{MI^T(f,tT)}{T \beta^T}= \overline{PMI}^T(f,t) + \overline{TMI}^T(f,t),
	\end{equation*}
	with
	$$\overline{PMI}^T(f,t) = \gamma\int_0^t f(x)\mathrm{d}x$$
	and
	$$\overline{TMI}^T(f,t) = \gamma \frac{a^T(1-a^T)^{-1}}{T}\int_0^{Tt}f(t-x/T) \int_{x}^{+\infty}\phi(u)\mathrm{d}u\mathrm{d}x.
$$
Remark that the permanent impact term does not depend on $T$. Thus there always exists a macroscopic permanent market impact function and the convergence of the sequence $(\overline{MI}^T(f,\cdot))_{T\geq0}$ is equivalent to that of $(\overline{TMI}^T(f,\cdot))_{T\geq0}$. Motivated by the empirical results on market impact \cite{BLL15,B10,GW15,FLM03,BP03} discussed in the introduction, we make the following natural assumption.
	\begin{assumption}
		\label{assumption:market_impact}
		For constant execution rate, that is $f = \mathbf{1}_{[0,s]}$ for some $s \in (0,1]$, the scaling limit of the market impact function exists pointwise and is non-increasing after time $s$. Furthermore, there exists $t>s$ such that the value of this limiting function at time $t$ is smaller than that at time $s$. 
	\end{assumption}
We will see that under some sets of parameters, Assumption \ref{assumption:market_impact} is indeed satisfied in our model. It implies that for $f = \mathbf{1}_{[0,s]}$ with $s\in (0,1]$, we can define the macroscopic market impact function $\widehat{MI}(f,t)$ and its transient and permanent components $\widehat{TMI}(f,t)$ and $\widehat{PMI}(f,t)$ as
	\begin{equation*}
	\widehat{MI}(f,t) = \underset{T\rightarrow+\infty}{\lim}\overline{MI}^T(f,t),~~\widehat{TMI}(f,t) = \underset{T\rightarrow+\infty}{\lim}\overline{TMI}^T(f,t),~~\widehat{PMI}(f,t) = \underset{T\rightarrow+\infty}{\lim}\overline{PMI}^T(f,t).
	\end{equation*}
    Using Tauberian theorems, see Appendix \ref{annex:karamata}, we obtain the following result.
	\begin{theorem}
		\label{th:market_impact_limit}
		Under Assumption \ref{assumption:market_impact}, for any measurable non-negative function $f$ defined on $\mathbb{R}^+$, continuous on $[0,1]$ and supported on $[0,1]$, the macroscopic market impact function and its transient part exist. More precisely, there exists a parameter $\alpha\in (0, 1]$ such that for any $t>0$, when $\alpha<1$,
		\begin{equation}
		\label{eq:tail_kernel_a}
	    \underset{T \rightarrow +\infty}{\lim}\overline{TMI}^T(f,t) =\gamma K (1-\alpha)  \int_0^tf(t-u)u^{-\alpha} \mathrm{d}u,
		\end{equation}
		for some $K>0$, and when $\alpha=1$
		\begin{equation}
		\label{eq:tail_kernel_b}
		\underset{T \rightarrow +\infty}{\lim}\overline{TMI}^T(f,t) = \gamma Kf(t).
		\end{equation}
		Furthermore, the Hawkes kernel $\phi$ necessarily satisfies
		\begin{equation*}
		\label{eq:structu_a}
		\int_0^t \int_{s}^{+\infty}\phi(u)\mathrm{d}u\mathrm{d}s  = t^{1-\alpha } \,L(t),
		\end{equation*}
		where $L$ is a slowly varying function (see definition in Appendix \ref{annex:karamata}).
		Finally we necessarily have
		\begin{equation*}
		\label{eq:structu_b}
			(1-a^T)^{-1}T^{-\alpha}L(T){\rightarrow}K,
		\end{equation*}
		and consequently $a^T\rightarrow 1$ (see Proposition \ref{prop:annex_slowly_varying} in Appendix \ref{annex:karamata}).
	\end{theorem}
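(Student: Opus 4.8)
The plan is to read off everything from the tail function $\Phi(x) := \int_x^{+\infty}\phi(u)\,\mathrm{d}u$ and its primitive $G(t) := \int_0^t \Phi(s)\,\mathrm{d}s = \int_0^t\int_s^{+\infty}\phi(u)\,\mathrm{d}u\,\mathrm{d}s$, so that the structural identity in the statement is just the assertion that $G$ is regularly varying of index $1-\alpha$. First I would rewrite, for $f=\mathbf 1_{[0,s]}$ and with $c^T := \gamma a^T\big((1-a^T)T\big)^{-1}$, the rescaled transient impact as $\overline{TMI}^T(\mathbf 1_{[0,s]},t)=c^T G(Tt)$ for $t\le s$ and $\overline{TMI}^T(\mathbf 1_{[0,s]},t)=c^T\big(G(Tt)-G(T(t-s))\big)$ for $t>s$. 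Since $\Phi$ is non-increasing, $G$ is $C^1$, non-decreasing and concave with $G(0)=0$; and since $\overline{PMI}^T(\mathbf 1_{[0,s]},\cdot)$ does not depend on $T$, Assumption~\ref{assumption:market_impact} transfers verbatim to $\overline{TMI}^T$. In particular $g(t):=\lim_{T\to\infty}c^TG(Tt)$ exists, is finite, non-negative and non-decreasing on $(0,s]$.

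Next I would extract regular variation. Because each $\overline{TMI}^T$ is non-negative, so is its limit, hence the strict-decrease clause of Assumption~\ref{assumption:market_impact} forces $g(s)>0$. Concavity of $G$ with $G(0)=0$ gives $G(\lambda u)\ge \lambda G(u)$ for $\lambda\le 1$, so $g(t)\ge (t/s)\,g(s)>0$ on all of $(0,s]$. There the ratio $\chi(\lambda):=\lim_{u\to\infty}G(\lambda u)/G(u)=g(\lambda s)/g(s)\in(0,1]$ is finite and positive, and the Characterization Theorem for regularly varying functions (Appendix~\ref{annex:karamata}) then yields $\chi(\lambda)=\lambda^{\rho}$ for some $\rho\in\mathbb R$ and $G(t)=t^{\rho}L(t)$ with $L$ slowly varying, which is exactly the claimed form once we set $\rho=1-\alpha$.

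It remains to pin down $\rho\in[0,1)$. Monotonicity of $G$ gives $\rho\ge 0$, and concavity (so that $G(t)/t$ is a non-increasing regularly varying function of index $\rho-1$) gives $\rho\le 1$. To exclude $\rho=1$ I would invoke the decay side of the hypothesis: setting $\kappa:=\lim_T c^TG(T)=g(s)s^{-\rho}\in(0,\infty)$, the regular variation just established gives, for $t>s$, the explicit value $\widehat{TMI}(\mathbf 1_{[0,s]},t)=\kappa\big(t^{\rho}-(t-s)^{\rho}\big)$; when $\rho=1$ this equals the constant $\kappa s$ for every $t>s$, contradicting the existence of some $t>s$ with a strictly smaller value. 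Hence $\rho<1$ and $\alpha=1-\rho\in(0,1]$. The relation $\gamma a^T(1-a^T)^{-1}T^{-\alpha}L(T)\to\kappa$ then forces $a^T\to 1$: for $\alpha>0$ one has $T^{-\alpha}L(T)\to 0$, so the factor $(1-a^T)^{-1}$ must blow up; dividing out the now-convergent factor $a^T$ yields $(1-a^T)^{-1}T^{-\alpha}L(T)\to K:=\kappa/\gamma$ and the consequence $a^T\to1$ (see Proposition~\ref{prop:annex_slowly_varying}).

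Finally I would treat a general $f$ by writing, after the change of variables $x=Tu$, $\overline{TMI}^T(f,t)=\gamma a^T(1-a^T)^{-1}\int_0^t f(t-u)\Phi(Tu)\,\mathrm{d}u$. For $\alpha\in(0,1)$ the Monotone Density Theorem upgrades $G\in RV_{1-\alpha}$ to $\Phi\in RV_{-\alpha}$ with $\Phi(t)\sim(1-\alpha)t^{-\alpha}L(t)$; substituting this, the combination $(1-a^T)^{-1}T^{-\alpha}L(Tu)\to K$ (uniform convergence theorem for the slowly varying part) together with $a^T\to1$ produces \eqref{eq:tail_kernel_a}. This interchange of limit and integral is where I expect the main work to lie: I would dominate the integrand near the singularity $u=0$ by Potter's bounds and away from it by the uniform convergence theorem, so that dominated convergence applies, the limiting density $u^{-\alpha}$ being integrable precisely because $\alpha<1$ and $f$ being bounded on $[0,1]$. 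The boundary case $\alpha=1$ corresponds to $\Phi\in L^1$ with $m:=\int_0^{+\infty}\Phi<\infty$: there $f(t-u)\to f(t)$ on the effective support as $T\to\infty$ and dominated convergence collapses the integral onto a point mass, giving $\widehat{TMI}(f,t)=\gamma K f(t)$, that is \eqref{eq:tail_kernel_b}.
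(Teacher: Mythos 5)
Your skeleton coincides with the paper's own proof in its structural core: everything is reduced to the tail primitive $R(t)=\int_0^t\int_s^{+\infty}\phi(u)\,\mathrm{d}u\,\mathrm{d}s$ (your $G$), the pointwise limits supplied by Assumption \ref{assumption:market_impact} for indicator metaorders are fed into the Characterisation Theorem of Appendix \ref{annex:karamata} to get $G(t)=t^{1-\alpha}L(t)$, the exponent is pinned into $[0,1]$ by monotonicity and concavity, the endpoint is excluded via the strict-decrease clause, and $(1-a^T)^{-1}T^{-\alpha}L(T)\to K$ together with $a^T\to 1$ is read off exactly as in the paper. This part of your argument is correct, and your exclusion of $\rho=1$ (constant limiting impact $\kappa s$ after time $s$ for the indicator metaorder) is in fact slightly more direct than the paper's, which first derives the general-$f$ formula and then notes it would be non-decreasing. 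Where you genuinely diverge is the passage to general $f$. The paper never needs pointwise asymptotics of $\Phi(x)=\int_x^{+\infty}\phi$: it normalizes $\overline{\Gamma}^T(u)\,\mathrm{d}u$ into probability measures, identifies their weak limit ($m^{\beta}_{a,b}$, resp.\ $\delta_0$) from the convergence of primitives alone, and integrates continuous test functions. You instead upgrade $G$ to $\Phi(t)\sim(1-\alpha)t^{-\alpha}L(t)$ via the Monotone Density Theorem and then use Potter bounds plus dominated convergence. For $\alpha<1$ this is viable (since $\Phi$ is monotone), but note that Potter's bounds only control $\Phi(Tu)/\Phi(T)$ on $\{u:\,Tu\ge X_0\}$; on the remaining sliver $(0,X_0/T)$ you must argue separately, e.g.\ using $\Phi\le\|\phi\|_1=1$ and $(1-a^T)^{-1}T^{-1}=\big[(1-a^T)^{-1}T^{-\alpha}L(T)\big]\,T^{\alpha-1}/L(T)\to 0$ because $\alpha<1$; your sketch omits this.

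The genuine gap is your treatment of $\alpha=1$. You assert this case ``corresponds to $\Phi\in L^1$'', i.e.\ $\int_0^{+\infty}\Phi<\infty$; that implication is false, and your dominated-convergence argument leans on the integrable majorant $\|f\|_\infty\Phi$ that it would provide. Slow variation of $G$ does not force boundedness: take $\Phi(x)=\min(1,1/x)$, so that $\phi(x)=x^{-2}\mathbf{1}_{x\ge 1}$ has $\|\phi\|_1=1$ and $G(t)=1+\log t$ for $t\ge 1$ is slowly varying and unbounded; choosing $1-a^T=\log(T)/(KT)$, one checks that Assumption \ref{assumption:market_impact} holds (the limiting impact equals $\gamma t+\gamma K$ for $t\le s$ and $\gamma s$ for $t>s$), so this configuration is admissible with $\alpha=1$ yet $\Phi\notin L^1$. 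The conclusion \eqref{eq:tail_kernel_b} is still true, but the correct mechanism is the concentration argument the paper uses, which also covers your $L^1$ special case: slow variation gives $G(T\epsilon)/G(Tt)\to 1$ for every $\epsilon\in(0,t]$, hence the probability measures $\mathbf{1}_{[0,t]}\Phi(Tu)\,\mathrm{d}u\big/\int_0^t\Phi(Tv)\,\mathrm{d}v$ converge weakly to $\delta_0$; combining this with $\gamma a^T(1-a^T)^{-1}\int_0^t\Phi(Tu)\,\mathrm{d}u=c^TG(Tt)\to\gamma K$ and continuity of $f$ yields $\widehat{TMI}(f,t)=\gamma K f(t)$ for $t\le 1$, with the paper's additional domination by a continuous extension of $f$ handling $t>1$.
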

Considering for example $f = \mathbf{1}_{[0,1]}$, Theorem \ref{th:market_impact_limit} shows that under no-arbitrage together with the assumption of the existence of the macroscopic market impact function, the transient part of the market impact is power-law while the permanent part is linear. Moreover Equation \eqref{eq:tail_kernel_a} gives that the decay of the market impact is essentially a power-law with exponent $-\alpha$, see Figure \ref{fig:market_impact_d_5} for illustration.\\

	\begin{figure}[tbph]
	\centering
	\label{fig:market_impact_d_5}
	\includegraphics[width = \figurewidth, height = \figureheight]{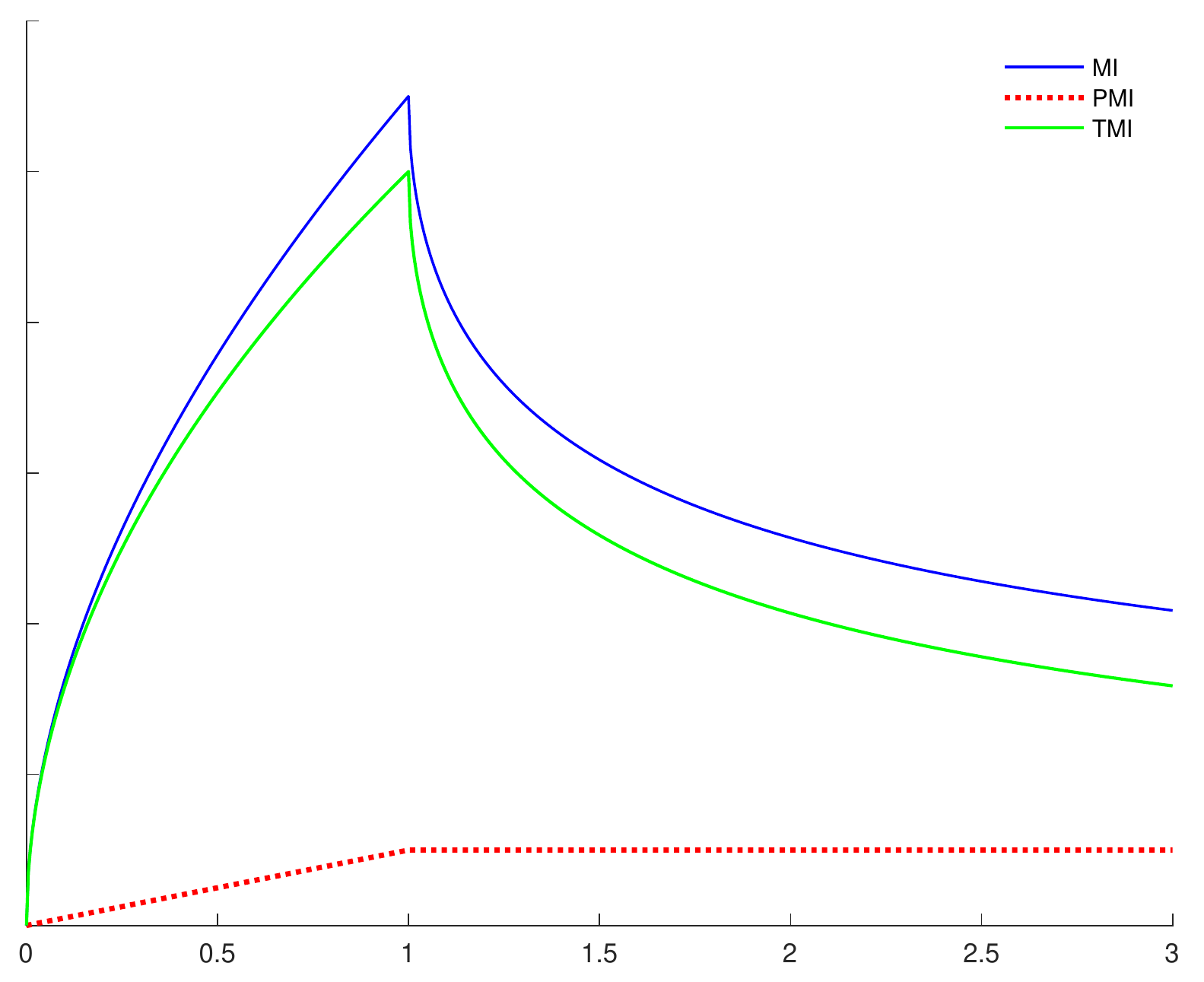}
	\caption{Illustration of the decomposition of the macroscopic market impact function for a metaorder executed uniformly over $[0,1]$, with $\alpha = 0.5$. Time is on the $x-$axis.}
\end{figure} 
	
The fact that $a_T$ goes to one implies that the non-linear transient part of the market impact (case $\alpha<1$) can arise only in a highly endogenous market. This non-linearity means that the market reacts differently to child market orders of a metaorder depending on their position within the metaorder. This is possible only if the time range of the persistence of the global order flow is of similar magnitude as the typical length of a metaorder, which is large compared to the inter-arrivals of market orders. Using the population approach to Hawkes processes, see \cite{BM14a,FS15,jaisson2015limit,JR16b}, it is easily seen that in our model such property can hold only provided $a_T$ goes to one.\\
	
In this regard the case $\alpha = 1$ is quite degenerate since the market has somehow no memory and reacts the same way to market orders, independently of their position within the metaorder. Even more, the price instantaneously decreases to its permanent level when the metaorder is completed. This means that the market is able to detect instantaneously the end of a metaorder, which seems unrealistic and incompatible with empirical measurements.

\section{Macroscopic limit of the price}
	\label{sec:scaling_limits}
We finally show in this section that under Assumption \ref{assumption:market_impact}, the macroscopic price, that is the limit as time goes to infinity of the properly rescaled microscopic price (\ref{eq:price_kernel}), is diffusive with rough or hyper-rough volatility. Moreover, we explicit the link between the market impact shape exponent and the Hurst parameter of the volatility. 
	
	\subsection{Scaling limit of the price process}
	We start with an assumption which is necessary to get a non-trivial long term limit for the price (\ref{eq:price_kernel}).
	\begin{assumption}
		\label{assumption:structure}
	For some $\delta>0$, we have 
		\begin{equation*}
		\label{assumption:mu}
		(1-a^T)\mu^T T\underset{T\rightarrow+\infty}{\rightarrow} \delta.
		\end{equation*}
	\end{assumption}
	Assumption \ref{assumption:structure} is classical in the context of Hawkes processes with kernel whose $L^1$ norm tends to one, see \cite{JR16b}. Indeed, it ensures that the number of events does not explode asymptotically.\\
	
	According to Equation \eqref{eq:price_process_def}, price and volume are homogenous. Therefore we rescale the price the same way as the metaorders. Taking for simplicity and without loss of generality $P_0=0$, we define for $t\in[0,1]$
\begin{equation*}
	\overline{P}_t^T = \frac{1}{T\beta^T} P^T_{tT}=\frac{1-a^T}{T\mu_T} \int_{0}^{t}\xi^T\big(T(t-s)\big)\mathrm{d}\big(N^{a,T}_{tT} - N^{b,T}_{tT}\big),
	\end{equation*}
	where
	\begin{equation*}
	\xi^T(t) = \big(1 + \int_0^{+\infty} \psi^T(u)\mathrm{d}u\big)\big(1 - \int_{0}^{t}\phi^T(u)\mathrm{d}u\big).
	\end{equation*}
Let $\alpha$ be the parameter of the market impact function in Theorem \ref{th:market_impact_limit}, $K$ the constant introduced in Equations \eqref{eq:tail_kernel_a} and \eqref{eq:tail_kernel_b} and $\lambda = (K\Gamma(2-\alpha))^{-1}$. Let $B^a$ and $B^b$ be two independent Brownian motions and $X^a$ and $X^b$ be defined by
\begin{equation*}
X^a_t = \int_{0}^{t}F^{\alpha, \lambda}(s) \mathrm{d}s + \frac{1}{\sqrt{\delta\lambda}}\int_{0}^{t}F^{\alpha, \lambda}(t-s)\mathrm{d}B^a_{X^a_s}
\end{equation*}
and $X^b$ is solution of the same equation replacing the superscript $a$ by $b$. We have the following result for the macroscopic limit of the price process, whose proof is given in Section \ref{proof:th:scaling_limit_price}.

\begin{theorem}
		\label{th:scaling_limit_price}
		Under Assumptions \ref{assumption:market_impact} and \ref{assumption:structure}, the sequence of rescaled price processes $(\overline{P}^T)_{T\geq0}$ converges in law for the Skorokhod topology towards a process $\widehat{P}$ such that for $t\in [0,1]$
		\begin{equation*}
		\widehat{P}_t =  \frac{1}{\sqrt{\delta}}\big(B^a_{X^a_t} - B^b_{X^b_t}\big).
		\end{equation*}
		In particular, there exists a Brownian motion $W$ such that the integrated variance $X = \big( X^a + X^b\big) / \delta$ of $\widehat{P}$ is solution of the stochastic rough Volterra equation
		\begin{equation}
		\label{eq:price_var_eq}
		X_t = \frac{2}{\delta}\int_{0}^{t}F^{\alpha, \lambda}(s) \mathrm{d}s + \frac{1}{\delta\sqrt{\lambda}}\int_{0}^{t}F^{\alpha, \lambda}(t-s)\mathrm{d}W_{X_s}.
		\end{equation}
		Moreover, for any $\varepsilon>0$, the process $X$ has H\"older regularity $1\wedge(2\alpha - \varepsilon)$. It is continuously differentiable for $\alpha>1/2$ and not continuously differentiable for $\alpha\leq 1/2$.
\end{theorem}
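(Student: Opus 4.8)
The plan is to reduce the convergence of the rescaled price to the analysis of two \emph{independent} rescaled Hawkes martingales, to establish a scaling limit for the Hawkes resolvent kernel by Tauberian arguments, and finally to identify the limiting coupled system as the stochastic Volterra equation \eqref{eq:price_var_eq}. The first step is a structural simplification. Writing $\xi^T(u) = (1-a^T)^{-1}\big(1-\Phi^T(u)\big)$ with $\Phi^T(u)=\int_0^u\phi^T$, and using $\mathrm{d}N^{a,T}_s=\lambda^{a,T}_s\,\mathrm{d}s+\mathrm{d}M^{a,T}_s$ together with Fubini, one checks that $\int_0^t\Phi^T(t-s)\,\mathrm{d}N^{a,T}_s = Z^{a,T}_t-\mu^T t$, where $Z^{a,T}_t=\int_0^t\lambda^{a,T}_s\,\mathrm{d}s$ is the compensator and $M^{a,T}$ the associated martingale. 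This collapses the price to the clean identity $P^T_t=(1-a^T)^{-1}\big(M^{a,T}_t-M^{b,T}_t\big)$, so that $\overline{P}^T_t=\overline{M}^{a,T}_t-\overline{M}^{b,T}_t$ with $\overline{M}^{a,T}_t=(T\mu^T)^{-1}M^{a,T}_{tT}$. It therefore suffices to prove convergence of each $\overline{M}^{a,T}$, a purely discontinuous martingale with predictable quadratic variation $V^{a,T}_t=(T\mu^T)^{-2}Z^{a,T}_{tT}$.

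I would then derive a prelimit Volterra equation for the variance. Solving the linear Volterra equation for the intensity with the resolvent $\psi^T$ of $\phi^T$ and integrating in time gives $Z^{a,T}_t=\mu^T\int_0^t(1+\Psi^T(u))\,\mathrm{d}u+\int_0^t\Psi^T(t-s)\,\mathrm{d}M^{a,T}_s$ with $\Psi^T=\int_0^{\cdot}\psi^T$, whence, after rescaling,
\[
V^{a,T}_t = \frac{1}{T\mu^T}\int_0^t\big(1+\Psi^T(Tv)\big)\,\mathrm{d}v + \frac{1}{T\mu^T}\int_0^t\Psi^T\big(T(t-u)\big)\,\mathrm{d}\overline{M}^{a,T}_u .
\]
The analytic core is the scaling limit of the resolvent: using the tail characterisation of $\phi$ from Theorem \ref{th:market_impact_limit}, the Karamata/Tauberian results of the appendix, and Assumption \ref{assumption:structure}, I would show $(1-a^T)\Psi^T(T\,\cdot)\to F^{\alpha,\lambda}$ locally uniformly, with $\lambda=(K\Gamma(2-\alpha))^{-1}$. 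This is proved on Laplace transforms via $\widehat{\psi^T}=\widehat{\phi^T}/(1-\widehat{\phi^T})$, checking that the rescaled transform converges to the Mittag--Leffler transform; here $a^T\to1$ is essential so that $1-\widehat{\phi^T}$ is dominated by a fractional term of order $z^\alpha$. Combining this with $(T\mu^T)^{-1}(1-a^T)^{-1}\to\delta^{-1}$, both the finite-variation and the stochastic-convolution parts above converge in form to the limiting equation.

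Next I would establish tightness and identify the limit. The jumps of $\overline{M}^{a,T}$ are of size $(T\mu^T)^{-1}\to0$ and $V^{a,T}$ is controlled on compacts, so by the Rebolledo/Aldous criteria the pair $(\overline{M}^{a,T},V^{a,T})$ is $C$-tight. Any limit point $(\overline{M}^a,V^a)$ consists of a continuous martingale with $\langle\overline{M}^a\rangle=V^a$, and passing to the limit in the prelimit equation shows that $V^a$ solves the limiting stochastic Volterra equation driven by $\overline{M}^a$. By Dambis--Dubins--Schwarz I write $\overline{M}^a=\delta^{-1/2}B^a_{X^a}$ with $X^a=\delta V^a$ and $B^a$ a Brownian motion, so that $X^a$ obeys the stated equation; \emph{uniqueness in law} for this (possibly hyper-rough) Volterra equation then pins down the limit and upgrades subsequential to full convergence, while independence of the $a$ and $b$ sides is preserved in the limit. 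Summing gives $\overline{P}^T\Rightarrow\delta^{-1/2}(B^a_{X^a}-B^b_{X^b})$; since the sum of two independent time-changed Brownian motions is again a Brownian motion $W$ time-changed by $X^a+X^b$, the process $X=(X^a+X^b)/\delta$ satisfies \eqref{eq:price_var_eq}.

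Finally, for the regularity statement I would note that the drift $\tfrac{2}{\delta}\int_0^{\cdot}F^{\alpha,\lambda}$ is $C^1$ since $F^{\alpha,\lambda}$ is continuous, so all roughness comes from the stochastic convolution. Using the Mittag--Leffler asymptotics $F^{\alpha,\lambda}(u)\sim c\,u^{\alpha}$ as $u\to0$, I would bound moments of increments with the Burkholder--Davis--Gundy inequality and the time change $\mathrm{d}\langle W_X\rangle=\mathrm{d}X$; a self-consistent bootstrap on the Hölder exponent (feeding increments of order $h^{H}$ back through $\int F^{\alpha,\lambda}(t+h-s)^2\,\mathrm{d}X_s$) forces $H=2\alpha$, giving Hölder regularity $1\wedge(2\alpha-\varepsilon)$ via the Kolmogorov criterion. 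For $\alpha>1/2$ I would differentiate \eqref{eq:price_var_eq} to recover the rough Heston form \eqref{eq:intro_rough_vol} with $H=\alpha-1/2$, whose variance $V=X'$ exists and is Hölder-$H$, so $X\in C^1$; for $\alpha\le1/2$ a matching lower bound on increment variances shows $X$ is not $C^1$. \textbf{The main obstacle is the hyper-rough regime $\alpha\le1/2$}: there the kernel $F^{\alpha,\lambda}$ and the time-changed representation $\mathrm{d}W_{X_s}$ fall outside the classical $L^2$-Volterra and semimartingale theory, so establishing existence, uniqueness in law, and the sharp $2\alpha$ regularity for \eqref{eq:price_var_eq} is the genuinely new mathematics on which both the identification of the limit and the regularity claim rest.
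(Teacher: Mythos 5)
Your skeleton coincides with the paper's Steps 2--4: your identity $P^T=(1-a^T)^{-1}(M^{a,T}-M^{b,T})$ is exactly the paper's martingale rewriting (since $1+\int_0^{\infty}\psi^T=(1-a^T)^{-1}$), your Laplace--transform/Karamata argument for $(1-a^T)\Psi^T(T\,\cdot)\to F^{\alpha,\lambda}$ is the paper's key convergence lemma, and the tightness, Dambis--Dubins--Schwarz and summation-of-independent-sides steps match. The genuine gap is the one you yourself flag at the end and then leave unfilled: identification of the limit. You invoke \emph{uniqueness in law} for the hyper-rough Volterra equation \eqref{eq:price_var_eq} to upgrade subsequential to full convergence, but you give no argument for it, and this is not a citable fact --- the paper explicitly says that the direct uniqueness approach (available in the literature for $\alpha>1/2$) ``seems quite intricate to adapt for $\alpha\leq 1/2$'', and indeed it never proves uniqueness in law for the equation at all. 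The missing idea is the paper's detour: using the branching (population) representation of Hawkes processes it derives an exponential formula for the characteristic functional $\mathbb{E}\big[\exp\big((ih*\mathrm{d}N)(t)\big)\big]$, rescales it, and shows that \emph{every} limit point $X^a$ of $(X^{a,T})_{T\geq 0}$ has characteristic functional $\exp\big(\int_0^t g(s)\,\mathrm{d}s\big)$, where $g$ is the unique continuous solution of the \emph{deterministic} Volterra Riccati equation $g=f^{\alpha,\lambda}*\big(\tfrac{1}{2\delta}g^{2}+ih\big)$. Uniqueness of this deterministic equation (known from the rough Heston literature) is what forces all limit points to share one law; stochastic uniqueness is bypassed entirely. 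Without this (or a genuine proof of uniqueness in law in the hyper-rough regime), your argument yields tightness and the form of the limiting equation, but not the convergence in law asserted by the theorem.

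A second gap concerns non-differentiability for $\alpha\leq 1/2$. ``A matching lower bound on increment variances'' cannot prove it: moment lower bounds are compatible with almost surely differentiable paths, and the claim is an almost-sure pathwise statement. The paper argues pathwise: inverting the convolution in \eqref{eq:price_var_eq} gives $D^{\alpha}X_t=-\lambda X_t+\frac{2\lambda}{\delta}t+\frac{\sqrt{\lambda}}{\delta}W_{X_t}$; assuming $X\in C^1$ and picking $t$ with $X'_t\neq 0$ (such $t$ exists since $X\not\equiv 0$), the law of the iterated logarithm makes $\limsup_{s\to t^-}\big(D^{\alpha}X_t-D^{\alpha}X_s\big)\big/\sqrt{2(X_t-X_s)\log\log\big((X_t-X_s)^{-1}\big)}$ equal to $\sqrt{\lambda}/\delta>0$, while the direct estimate $|D^{\alpha}X_t-D^{\alpha}X_s|\leq C\,\|X'\|_{\infty}\big((t-s)^{1-\alpha}+(t-s)s^{-\alpha}\big)$ forces that same $\limsup$ to vanish (as $1-\alpha\geq 1/2$), a contradiction. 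An argument of this pathwise type is needed to close the last sentence of the theorem; your differentiation argument for $\alpha>1/2$ and the H\"older exponent $1\wedge(2\alpha-\varepsilon)$ are fine (the paper imports both from earlier work on nearly unstable heavy-tailed Hawkes processes).
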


Theorem \ref{th:scaling_limit_price} shows that the no-arbitrage principle together with the existence of market impact imply that the macroscopic price is a diffusive process whose cumulative variance is solution of a stochastic rough Volterra equation (except when $\alpha =1$ which corresponds to the classical Heston model, see Corollary \ref{th:price_vol}). Note that $X$ plays the role of an integrated variance and that when $\alpha\leq 1/2$ it is not continuously differentiable. Thus, in that case, the spot variance is not well-defined and only its integrated version makes sense. This is why for $\alpha\leq 1/2$, we call this model hyper-rough volatility model (more precisely hyper-rough Heston model, see below).\\

	From Theorem 3.2 in \cite{JR16b}, we have that for $\alpha>1/2$, the process $X^a$ is almost surely differentiable and its derivative $Y^a$ is the unique solution of 
	\begin{equation*}
		Y^a_t = \frac{\lambda}{\Gamma(\alpha)}\big(\int_{0}^{t}(t-s)^{\alpha-1} (1- Y^a_s) \mathrm{d}s + \frac{1}{\sqrt{\delta\lambda}}\int_{0}^{t}(t-s)^{\alpha-1}\sqrt{Y^a_s}\mathrm{d}B^a_s\big).
	\end{equation*}
The same result holds for $Y^b$ replacing the superscript $a$ by $b$. We deduce that when $\alpha>1/2$, the integrated volatility admits a derivative and the macroscopic limit of the price follows a rough Heston model. More precisely, we have the following corollary.
	\begin{corollary}
		\label{th:price_vol}
		When $\alpha>1/2$, the process $X$ is differentiable almost surely and its derivative $Y$ is the unique solution of the stochastic rough Volterra equation
		\begin{equation*}
		Y_t = (Y_t^a+Y_t^b)/\delta=\frac{\lambda}{\Gamma(\alpha)}\big(\int_{0}^{t}(t-s)^{\alpha-1} (\frac{2}{\delta}-Y_s) \mathrm{d}s + \frac{1}{\delta\sqrt{\lambda}}\int_{0}^{t}(t-s)^{\alpha-1}\sqrt{Y_s}\mathrm{d}W_s\big),
		\end{equation*}
with $W$ a Brownian motion. Furthermore the dynamic of the price $\widehat{P}$ is
		\begin{equation*}
		    \mathrm{d}\widehat{P}_t =\frac{1}{\sqrt{\delta}}\big( \sqrt{Y^a_t}\mathrm{d}B^a_t - \sqrt{Y^b_t}\mathrm{d}B^b_t \big).
		\end{equation*}
	\end{corollary}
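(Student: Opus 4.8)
The plan is to transfer the differentiability statement of Theorem \ref{th:scaling_limit_price} from the two components $X^a$ and $X^b$ to their normalized sum $X$, and then to identify the equation satisfied by the derivative. Since Theorem 3.2 of \cite{JR16b} already gives, for $\alpha>1/2$, that each $X^a$ (resp. $X^b$) is almost surely differentiable with derivative $Y^a$ (resp. $Y^b$) solving the displayed rough Heston-type Volterra equation driven by $B^a$ (resp. $B^b$), the first step is simply to set $Y:=(Y^a+Y^b)/\delta$ and note that almost sure differentiability is preserved under finite linear combinations; hence $X=(X^a+X^b)/\delta$ is differentiable with derivative $Y$.

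The heart of the argument is to check that $Y$ solves the announced scalar Volterra equation driven by a \emph{single} Brownian motion $W$. I would start from the two component equations for $Y^a$ and $Y^b$, divide each by $\delta$, and add them. The drift terms combine immediately: $\frac{\lambda}{\Gamma(\alpha)}\int_0^t(t-s)^{\alpha-1}\big(\tfrac{2}{\delta}-Y_s\big)\mathrm{d}s$ appears after writing $(1-Y^a_s)/\delta+(1-Y^b_s)/\delta=\tfrac{2}{\delta}-Y_s$. The delicate point is the diffusion term: one must show that
\begin{equation*}
\frac{1}{\delta\sqrt{\delta\lambda}}\int_0^t(t-s)^{\alpha-1}\big(\sqrt{Y^a_s}\,\mathrm{d}B^a_s+\sqrt{Y^b_s}\,\mathrm{d}B^b_s\big)
\end{equation*}
can be rewritten, for a suitably constructed Brownian motion $W$, as $\frac{1}{\delta\sqrt{\lambda}}\int_0^t(t-s)^{\alpha-1}\sqrt{Y_s}\,\mathrm{d}W_s$. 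The natural device is to define $W$ through its stochastic differential
\begin{equation*}
\mathrm{d}W_s=\frac{\sqrt{Y^a_s}\,\mathrm{d}B^a_s+\sqrt{Y^b_s}\,\mathrm{d}B^b_s}{\sqrt{\delta}\,\sqrt{Y_s}},
\end{equation*}
and to verify via L\'evy's characterization that $W$ is indeed a Brownian motion. Since $B^a$ and $B^b$ are independent, the quadratic variation of the numerator over $[0,s]$ is $\int_0^s(Y^a_u+Y^b_u)\,\mathrm{d}u=\delta\int_0^s Y_u\,\mathrm{d}u$, so $\langle W\rangle_s=s$; continuity and the martingale property are clear, giving a genuine Brownian motion. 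Substituting $\sqrt{\delta Y_s}\,\mathrm{d}W_s=\sqrt{Y^a_s}\,\mathrm{d}B^a_s+\sqrt{Y^b_s}\,\mathrm{d}B^b_s$ into the summed diffusion term produces exactly the stated coefficient $\frac{1}{\delta\sqrt{\lambda}}$.

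The main obstacle is the well-definedness of this construction on the set where $Y_s=0$: the representation of $\mathrm{d}W_s$ divides by $\sqrt{Y_s}$, so one must argue that the integrand is unambiguously defined (for instance by setting it to an arbitrary fixed value on the negligible time set $\{Y_s=0\}$, which does not affect the It\^o integral) and that the resulting $W$ is well-defined independently of this choice. This is the same subtlety that arises in the classical multidimensional-to-scalar reduction of CIR-type dynamics, and it is handled by the standard argument that the set $\{s:Y_s=0\}$ has Lebesgue measure zero contribution to the quadratic variation. Finally, uniqueness of $Y$ as a solution follows from the uniqueness part of Theorem 3.2 in \cite{JR16b} applied to the scalar equation, and the price dynamics $\mathrm{d}\widehat P_t=\frac{1}{\sqrt{\delta}}(\sqrt{Y^a_t}\,\mathrm{d}B^a_t-\sqrt{Y^b_t}\,\mathrm{d}B^b_t)$ is obtained by differentiating the representation $\widehat P_t=\frac{1}{\sqrt{\delta}}(B^a_{X^a_t}-B^b_{X^b_t})$ of Theorem \ref{th:scaling_limit_price}, using that $X^a,X^b$ are absolutely continuous with densities $Y^a,Y^b$ and the time-change formula $\mathrm{d}B^a_{X^a_t}=\sqrt{Y^a_t}\,\mathrm{d}B^a_t$ (in law) for the time-changed Brownian motion.
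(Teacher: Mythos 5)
Your proposal follows essentially the same route as the paper: the paper likewise deduces the corollary from Theorem 3.2 of \cite{JR16b} applied to each component $X^a$ and $X^b$, sums the two resulting equations for $Y^a$ and $Y^b$, and merges the two driving Brownian motions into a single $W$ (exactly as in Step 4 of the proof of Theorem \ref{th:scaling_limit_price}), with the price dynamics read off by time-change from $\widehat{P}_t=\frac{1}{\sqrt{\delta}}\big(B^a_{X^a_t}-B^b_{X^b_t}\big)$. Your explicit L\'evy-characterization construction of $W$ and the handling of the set $\{Y_s=0\}$ (where both integrands vanish, since $Y^a$ and $Y^b$ are non-negative) merely fill in details the paper leaves implicit; the argument is the same and is correct.
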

This result highlights the fact that at the macroscopic limit, the correlation $\rho_t$ between the two Brownian motions driving price and volatility is stochastic. More precisely we have
\begin{equation*}
    \rho_t= \frac{Y^a_t - Y_t^b}{Y^a_t + Y^b_t}.
\end{equation*}
Hence the correlation sign depends on that of $Y^a_t - Y_t^b$. The process $Y^a$ (resp. $Y^b$) corresponding to the volatility of the ask (resp. bid) side of the market (see Step 4 in Section \ref{proof:th:scaling_limit_price}), this can be interpreted in terms of order flow dynamics. Indeed suppose that $Y^a\gg Y^b$ and that price is increasing. Then the instantaneous imbalance has the same sign as price returns. Thus the volatility increases as the order flow excites the price dynamic. Conversely, if the price increases and $Y^a \ll Y^b$, the volatility decreases since the order flow tends to compensate the upward price variation.\\

To prove the convergence in law in Theorem \ref{th:scaling_limit_price}, we show that $(\overline{P}^T)_{T\geq 0}$ is tight and that all limit points have the same law. This is done using the characteristic function of Hawkes processes in the spirit of \cite{EER16}. 
A direct proof would consist in obtaining uniqueness in law for solutions of Equation \eqref{eq:price_var_eq} as done in \cite{JLP17} for $\alpha>1/2$. However, such approach seems quite intricate to adapt for $\alpha\leq \frac{1}{2}$. We have the following result whose proof is given in Section \ref{proof:th:characteristic_function_variance}.
	\begin{theorem}
		\label{th:characteristic_function_variance}
		Let $X$ be the cumulated variance process given in Theorem \ref{th:scaling_limit_price} and $h$ a continuously differentiable function from $\mathbb{R}^{+}$ to $\mathbb{R}$ such that $h(0)=0$. The function
		\begin{equation*}
		K(h, t) = \mathbb{E}\big[\emph{exp}\big(\int_0^t i h(t-s)\mathrm{d}X_{s}\big)\big]
		\end{equation*}
		satisfies
		\begin{equation*}
		K(h,t) = \emph{exp}\big( \int_{0}^{t}g(s) \mathrm{d}s \big),
		\end{equation*}
		with $g$ the unique continuous solution of the Volterra Riccati equation
		\begin{equation}
		\label{eq:volterra_riccati}
		g(t) = \int_0^t f^{\alpha, \lambda}(t-s)\big(\delta^{-1}\frac{1}{ 4}g(s)^2 + \delta^{-1}2ih(s)\big)\mathrm{d}s,
		\end{equation}
		where $f^{\alpha, \lambda}$ is the Mittag-Leffler density function, see Appendix \ref{annex:mittag}. 
	\end{theorem}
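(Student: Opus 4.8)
The plan is to prove the identity by approximation at the microscopic Hawkes level followed by passage to the scaling limit, in the spirit of \cite{EER16}, rather than through an It\^o-type argument on \eqref{eq:price_var_eq}: the latter is delicate because for $\alpha\le 1/2$ the process $X$ is hyper-rough and not a semimartingale, which is precisely why a direct route (as in \cite{JLP17}) is hard to push below $\alpha=1/2$. A first reduction is a factorization. Since $B^a$ and $B^b$ are independent, $X^a$ and $X^b$ are independent, and with $X=(X^a+X^b)/\delta$ one gets $K(h,t)=\widetilde K(t)^2$, where $\widetilde K(t)=\mathbb E[\exp(\tfrac{i}{\delta}\int_0^t h(t-s)\,\mathrm dX^a_s)]$, the two factors coinciding by symmetry. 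If $\widetilde K=\exp(\int_0^\cdot\widetilde g)$ with $\widetilde g(t)=\int_0^t f^{\alpha,\lambda}(t-s)(\tfrac{1}{2\delta}\widetilde g(s)^2+\tfrac{i}{\delta}h(s))\,\mathrm ds$, then $g:=2\widetilde g$ solves exactly \eqref{eq:volterra_riccati}; hence it suffices to analyse one side and double the exponent. The assumptions $h\in C^1$, $h(0)=0$ enter here: integrating by parts with $X_0=0$ gives $\int_0^t h(t-s)\,\mathrm dX_s=\int_0^t h'(t-s)X_s\,\mathrm ds$, so the exponential is a bounded, a.s.\ continuous functional of the path for the Skorokhod topology (the limit being continuous), and Theorem \ref{th:scaling_limit_price} transfers the computation of $K(h,t)$ to the limit of the prelimit characteristic functions.

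Next I would compute the prelimit object. Writing $\widehat X^{a,T}$ for the rescaled counting process associated with $N^{a,T}$, the quantity $\tfrac{i}{\delta}\int_0^t h(t-s)\,\mathrm d\widehat X^{a,T}_s$ equals $\int_0^{tT}u^T(r)\,\mathrm dN^{a,T}_r$ for a deterministic weight $u^T(r)=c_T\,\tfrac{i}{\delta}h(t-r/T)$ whose amplitude $c_T\to0$ under the scaling. I would then use the Hawkes Laplace-functional identity: with $\lambda^{a,T}$ the intensity and $C^T$ the solution of the backward equation $C^T(r)=u^T(r)+\int_r^{tT}(e^{C^T(v)}-1)\phi^T(v-r)\,\mathrm dv$, the exponential martingale $\exp(\int_0^\cdot C^T(r)\,\mathrm dN^{a,T}_r-\int_0^\cdot(e^{C^T(r)}-1)\lambda^{a,T}_r\,\mathrm dr)$ yields $\widetilde K^T(t)=\exp(\int_0^{tT}\mu^T(e^{C^T(r)}-1)\,\mathrm dr)$. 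After reversing time ($\sigma=t-r/T$) this becomes a forward Volterra equation, and since $C^T$ is of order $c_T$ I would expand $e^{C^T}-1=C^T+\tfrac12(C^T)^2+O(|C^T|^3)$; the linear part, fed back through the rescaled kernel, produces the drift $\tfrac{i}{\delta}h$, while the surviving quadratic part produces the nonlinearity $\tfrac{1}{2\delta}\widetilde g^2$.

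The passage to the limit is where the Mittag-Leffler kernel appears. The rescaled integral operator with kernel $\phi^T$ is governed by its resolvent $\psi^T=\sum_{k\ge1}(\phi^T)^{*k}$; under Assumption \ref{assumption:structure} together with the regular-variation structure forced by Theorem \ref{th:market_impact_limit} (namely $\int_0^t\int_s^\infty\phi\,\mathrm du\,\mathrm ds=t^{1-\alpha}L(t)$ and $(1-a^T)^{-1}T^{-\alpha}L(T)\to K$), the suitably rescaled resolvent converges to $f^{\alpha,\lambda}$ with $\lambda=(K\Gamma(2-\alpha))^{-1}$, exactly as in \cite{JR16b,EER16}; recall that $f^{\alpha,\lambda}$ is precisely the resolvent of $\lambda t^{\alpha-1}/\Gamma(\alpha)$, so its emergence is natural. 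Tracking the normalising constants via $(1-a^T)\mu^T T\to\delta$, the rescaled backward equation for $C^T$ converges to the single-sided Riccati equation for $\widetilde g$ and the exponent $\int_0^{tT}\mu^T(e^{C^T}-1)\,\mathrm dr$ converges to $\int_0^t\widetilde g$; combined with the factorization this gives $K(h,t)=\exp(\int_0^t g(s)\,\mathrm ds)$ with $g$ solving \eqref{eq:volterra_riccati}.

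Finally I would settle existence and uniqueness for \eqref{eq:volterra_riccati}. The kernel $f^{\alpha,\lambda}$ is a probability density, hence in $L^1(\mathbb R^+)$, but carries an integrable singularity $\sim s^{\alpha-1}$ at the origin when $\alpha<1$, so this is a weakly singular Volterra equation with locally Lipschitz (quadratic) nonlinearity; a Picard fixed-point argument on $C([0,\tau])$ for small $\tau$ gives local existence and uniqueness, and a singular-kernel Gronwall estimate propagates uniqueness on $[0,t]$. The main obstacle is global existence, that is, ruling out finite-time blow-up of the quadratic equation with complex data: I would obtain this from the probabilistic side, since the solution arising as the limit of $\widetilde g^T$ is automatically global and the bound $|\widetilde K|\le1$ forces $\mathrm{Re}\int_0^t\widetilde g\le0$, providing the a priori control that the purely analytic argument lacks. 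This interplay between the singular kernel and the quadratic nonlinearity in the hyper-rough regime $\alpha\le1/2$ --- where the associated variance process is not even a semimartingale --- is the genuinely delicate point, and is exactly why the microscopic route is preferable to a direct treatment of \eqref{eq:price_var_eq}.
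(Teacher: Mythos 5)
Your proposal is correct and takes essentially the same route as the paper: the paper likewise factorizes $K(h,t)$ into the square of a one-sided characteristic function using the independence of $X^a$ and $X^b$ (with the same substitution $g=2\tilde g$), and obtains that factor from the microscopic Hawkes computation --- its Proposition \ref{prop:characteristic_function_vol}, proved via the Hawkes characteristic-function identity, the second-order expansion, the convergence of the rescaled resolvent to $f^{\alpha,\lambda}$ (Lemma \ref{lemma:convergence_mittag}), and uniqueness of the continuous solution of the Volterra Riccati equation. The only cosmetic differences are that the paper derives the Hawkes identity from the cluster (population) representation rather than from an exponential martingale, and cites \cite{EER16} for uniqueness of the Riccati solution rather than running a Picard/Gronwall argument.
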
 
Theorem \ref{th:characteristic_function_variance} extends some already known results about characteristic functions related to rough Heston models for $\alpha > \frac{1}{2}$, see \cite{EGR18, EER16,JLP17}. Note that the characteristic function of the macroscopic price process $\widehat{P}_t$ can also be obtained using the same type of proof as that for Theorem \ref{th:characteristic_function_variance}.
	
\subsection{Conclusion}
Using only the no-arbitrage principle and the assumption that market impact exists and has a transient component, we have shown in a general framework for the order flow that the market impact function can only be a power-law with exponent $1-\alpha$ for some $\alpha\in(0,1)$ (we drop here the case $\alpha=1$ which leads to a somehow degenerate market impact function). The parameter $\alpha$ also appears necessarily in the tail of the kernel of the Hawkes process driving the order flow: $\phi(x)\sim x^{-(1+\alpha)}$ as $x$ goes to infinity. Furthermore, this also implies that the market is highly endogenous. Even more interestingly, we obtain that the macroscopic behavior of the price is that of a rough or hyper-rough Heston model with Hurst parameter $H=\alpha-1/2$.\\

The relationship between market impact, tail of Hawkes kernel and volatility Hurst parameter allows us to confront our results to empirical measurements. In \cite{BLL15} it is found that the market impact function fits a power-law with exponent $0.45$. In \cite{GJR14} it is shown that volatility is rough with a Hurst parameter of order $0.1$. Finally in \cite{BBH13}, the authors calibrate a Hawkes process on market orders arrival and obtain that the kernel decays as a power-law function with exponent around $-1.45$. All these measurements are compatible with our results (and suggest that market impact is close to square root).
	
	\section*{Acknowledgments}
	We thank Jean-Philippe Bouchaud, Omar El Euch, Masaaki Fukasawa and Jim Gatheral for many interesting discussions. The authors gratefully acknowledge the financial support of the {\it ERC Grant 679836 Staqamof} and of the chair {\it Analytics and Models for Regulation}.
	
	\section{Proofs}
	\label{sec:proofs}
	
	\subsection{Proof of Theorem \ref{th:market_impact_limit}}
	\label{proof:th:market_impact_limit}
	Let $f = \mathbf{1}_{[0,s]}$, $s\in (0,1]$. From Assumption \ref{assumption:market_impact}, we have the pointwise convergence of $(\overline{MI}^T(f,\cdot))_{T\geq0}$. As previously explained, this is equivalent to the convergence of $(\overline{TMI}^T(f,\cdot))_{T\geq0}$. Moreover, $(\overline{PMI}^T(f,\cdot))_{T\geq0}$ being independent of $T$, Assumption \ref{assumption:market_impact} implies that the sequence of functions
	\begin{equation}
	\label{eq:impact_a}
		\overline{TMI}^T(f,t) =\gamma\int_0^{t}a^T(1-a^T)^{-1}\int_{yT}^{+\infty}\phi(u)\mathrm{d}u f(t-y)\mathrm{d}y,
	\end{equation}
	converges pointwise. The function $\phi$ being non-negative and integrable, $\overline{TMI}^T(f,\cdot)$ is non-negative, non-decreasing and concave on $[0,s]$ and then non-increasing. Hence $\overline{TMI}^T(f,\cdot)$ reaches its maximum in $s$. By pointwise convergence, $\widehat{TMI}(f, \cdot)$ has the same properties. Since we have assumed that $\widehat{MI}(f,t)<\widehat{MI}(f,s)$ for some $t>s$ and $\widehat{PMI}(f, \cdot)$ is non-decreasing, we deduce that $\widehat{TMI}(f,s)>0$.\\
	
	Let $g(t) = \gamma^{-1}\widehat{TMI}(\mathbf{1}_{[0,t]},t)$ for $t\in (0,1]$ and consider
	\begin{equation*}
	    R(t) = \int_0^t \int_y^{+\infty}\phi(u)\mathrm{d}u\mathrm{d}y>0.
	\end{equation*}
	According to Equation \eqref{eq:impact_a}, we have for $t\in (0,1]$
	\begin{equation}
	\label{eq:impact_c}
		\frac{R(Tt)}{R(T)}  \underset{T\rightarrow+\infty}{\rightarrow}\frac{g(t)}{g(1)}>0.
	\end{equation}
	By the characterisation theorem, see Theorem \ref{th:characterisation} in Appendix \ref{annex:karamata}, we deduce that the previous limit holds for all $t>0$ with some suitable extension of the function $g$. Moreover there exist some $\beta\in \mathbb{R}$, $K>0$ and $ L$ a slowly varying function such that for $t>0$
	\begin{equation*}
		g(t) = Kt^{\beta},\,\, R(t) = L(t)t^{\beta}.
	\end{equation*}
	Remark that for $t\in (0,1]$, we have $g(t) = \widehat{TMI}(\mathbf{1}_{[0,1]},t)$, which is concave. Thus $\beta\in [0,1]$.	Taking $s = t = 1$ in the pointwise convergence \eqref{eq:impact_a}, we get
	\begin{equation}
	\label{eq:impact_b}
	 a^T\frac{(1-a^T)^{-1}}{T}\int_0^{T}\int_{y}^{+\infty}\phi(u)\mathrm{d}u\mathrm{d}y  =  a^T(1-a^T)^{-1} T^{\beta-1}L(T) \underset{T\rightarrow +\infty}{\rightarrow} g(1) = K>0.
	\end{equation}
	Consider now the sequence of functions 
	\begin{equation*}
	    \overline{\Gamma}^T(y) = a^T(1-a^T)^{-1}\int_{Ty}^{+\infty}\phi(u)\mathrm{d}u.
	\end{equation*}
	We get from \eqref{eq:impact_c}, \eqref{eq:impact_b} and property of slowly varying function that for  any $t>0$
	\begin{equation*}
	\underset{T\rightarrow +\infty}{\lim} \int_0^{t} \overline{\Gamma}^T(y)\mathrm{d}y = K t^{\beta}.
	\end{equation*}
	Suppose that $\beta\neq 0$. Let $0\leq a< b$ and $s\in [a,b]$. We have
	\begin{equation*}
	\underset{T\rightarrow +\infty}{\lim} \int_a^{s}\frac{ \overline{\Gamma}^T(u)}{\int_a^{b} \overline{\Gamma}^T(v)\mathrm{d}v}\mathrm{d}u  = \frac{s^{\beta}-a^{\beta}}{b^{\beta}-a^{\beta}}.
	\end{equation*}
	The right hand side is the cumulative distribution function of a random variable with support on $[a,b]$ whose law is denoted by $m_{a,b}^{\beta}$. Hence we have the convergence in law
	\begin{equation*}
	    \mathbf{1}_{[a,b]}\frac{ \overline{\Gamma}^T(u)\mathrm{d}u}{\int_a^{b} \overline{\Gamma}^T(v)\mathrm{d}v} \underset{T\rightarrow +\infty}{\rightarrow}m_{a,b}^{\beta}(\mathrm{d}u).
	\end{equation*}
	So for any bounded continuous function $g$ on $[a,b]$, we get
	\begin{equation*}
	\underset{T\rightarrow +\infty}{\lim} \int_a^{b}\frac{ \overline{\Gamma}^T(u)}{\int_a^{b} \overline{\Gamma}^T(v)\mathrm{d}v}g(u)\mathrm{d}u  = \int_a^b g(u)m_{a,b}^{\beta}(\mathrm{d}u).
	\end{equation*}
	Consequently,
	\begin{equation*}
	\underset{T\rightarrow +\infty}{\lim} \int_a^{b}\overline{\Gamma}^T(u)g(u)\mathrm{d}u  =  K \int_a^b g(u)m_{a,b}^{\beta}(\mathrm{d}u)\big(b^{\beta}-a^{\beta} \big) = K \beta \int_a^b g(u)u^{\beta-1}\mathrm{d}u .
	\end{equation*}
Now let $f$ be a non-negative measurable function defined on $\mathbb{R}^+$, continuous on $[0,1]$ and supported on $[0,1]$. For $t\leq 1$ we have
	\begin{equation*}
	    \int_0^t f(t-u) \overline{\Gamma}^T(u)\mathrm{d}u \underset{T\rightarrow +\infty}{\rightarrow} K \beta \int_0^t f(t-u) u^{\beta-1}\mathrm{d}u
	\end{equation*}
	and for $t> 1$
	\begin{equation*}
	    \int_0^t f(t-u) \overline{\Gamma}^T(u)\mathrm{d}u = \int_{t-1}^t f(t-u) \overline{\Gamma}^T(u)\mathrm{d}u\underset{T\rightarrow +\infty}{\rightarrow} K \beta \int_{t-1}^t f(t-u) u^{\beta-1}\mathrm{d}u.
	\end{equation*}
	Finally for any $t\geq 0$
	\begin{equation*}
	    \widehat{TMI}(f, t) = \underset{T\rightarrow+\infty}{\lim}\overline{TMI}^T(f, t) =  \gamma K\beta \int_0^t f(t-u)u^{\beta-1}\mathrm{d}u.
	\end{equation*}
	Thus when $\beta\in (0,1]$, we have the existence of a macroscopic limit for the transient part of the market impact function (and therefore for the market impact function). 
	Remark that for $\beta  = 1$
    \begin{equation*}
	\widehat{TMI}(f,t) =\gamma K\int_0^{t}f(u)\mathrm{d}u.
	\end{equation*}
	Consequently, in that case, $\widehat{TMI}(\mathbf{1}_{[0,1]},\cdot)$ is a non-decreasing function. This is in contradiction with Assumption \ref{assumption:market_impact}, hence $\beta$ cannot be equal to $1$.\\
	
	Suppose that $\beta = 0$. For any $t>0$ we have
	\begin{equation*}
	    \mathbf{1}_{[0,t]}\frac{ \overline{\Gamma}^T(u)\mathrm{d}u}{\int_0^{t} \overline{\Gamma}^T(v)\mathrm{d}v} \underset{T\rightarrow +\infty}{\rightarrow}\delta_0(\mathrm{d}u),
	\end{equation*}
	where $\delta_0$ is the Dirac measure in $0$. Then for any bounded continuous function $g$
	\begin{equation*}
	\underset{T\rightarrow +\infty}{\lim} \int_0^{t}\frac{ \overline{\Gamma}^T(u)}{\int_0^{t} \overline{\Gamma}^T(v)\mathrm{d}v}g(u)\mathrm{d}u  = g(0).
	\end{equation*}
Now let $f$ be a non-negative measurable function defined on $\mathbb{R}^+$, continuous on $[0,1]$ and supported on $[0,1]$. For $t\leq 1$ we have
	\begin{equation*}
	    \int_0^t f(t-s) \overline{\Gamma}^T(s)\mathrm{d}s \underset{T\rightarrow +\infty}{\rightarrow} K f(t)
	\end{equation*}
	and for $t>1$
	\begin{equation*}
	    0 \leq \int_0^t f(t-s) \overline{\Gamma}^T(s)\mathrm{d}s \leq \int_0^t \tilde{f}(t-s) \overline{\Gamma}^T(s)\mathrm{d}s \underset{T\rightarrow +\infty}{\rightarrow} 0,
	\end{equation*}
	with $\tilde{f}$ is a non-negative continuous extension of $f\mathbf{1}_{[0,1]}$ on $\mathbb{R}^+$ supported on $[0,1+\frac{t-1}{2}]$. Finally for any $t\geq 0$
    \begin{equation*}
	    \widehat{TMI}(f, t)=\gamma K f(t).
	\end{equation*}
	Consequently for $\beta=0$, we also have the existence of a macroscopic limit for the transient part of the market impact function (and therefore of the market impact function). We obtain the result letting $\alpha = 1-\beta$.\\
	
Finally note that there do exist some model parameters such that Assumption \ref{assumption:market_impact} is satisfied. For example any kernel $\phi$ such that $\phi(t)\sim_{+\infty} c t^{-\alpha-1}$ with $c>0$.
	
	\subsection{Proof of Theorem \ref{th:scaling_limit_price}}
	\label{proof:th:scaling_limit_price}
We proceed in five steps.
	\begin{enumerate}
			\item Step $1$: We first prove a preliminary result on the characteristic function of Hawkes processes that we use later in Step $3$.
		\item Step $2$: We rewrite the sequence $(\overline{P}^T)_{T\geq 0}$ in a convenient way.
		\item Step $3$: We adapt results from \cite{EER16} and \cite{JR16b} on scaling limits of nearly unstable heavy-tailed Hawkes processes to our more general framework.
		\item Step $4$: We deduce from the previous steps the convergence in law for the Skorokhod topology of the sequence $(\overline{P}^T)_{T\geq 0}$ and explicit the equation satisfied by the limit.
        \item Step $5$: We prove the results on the regularity of solutions of Equation \eqref{eq:price_var_eq}.
	\end{enumerate} 
For simplicity and without loss of generality we take $P_0 = 0$.

    \subsubsection{Step $1$}
    We derive a result on the characteristic function of Hawkes processes using similar arguments as those introduced in \cite{EER16}. Recall that the notation $*$ stands for the convolution product on $\mathbb{R}^+$. More precisely for $f$ and $g$ suitable measurable functions and $m$ a measure
	\begin{equation*}
	    (f*g)(t) = \int_0^t f(t-s)g(s) \mathrm{d}s
	\end{equation*}
	and 
    \begin{equation*}
	    (f*\mathrm{d}m)(t) = \int_0^t f(t-s)m(\mathrm{d}s). 
	\end{equation*}
We have the following proposition.	
		\begin{property}\label{prop:characteristic_function_hawkes}
		Let $N$ be a Hawkes process with parameters $(\nu,\phi)$, with $\nu$ a locally integrable non-negative function and $\phi$ a non-negative measurable function such that $\|\phi\|_1<1$. For any continuous function $h$ from $\mathbb{R^{+}}$ into $\mathbb{R}$, 
		\begin{equation*}
		L(h,t) = \mathbb{E}[\emph{exp}((ih* \mathrm{d}N)(t))]
		\end{equation*}
		satisfies
		\begin{equation*}
		L(h,t) = \emph{exp}\big( \int_0^t(C(h,s) -1)\nu(t-s) \mathrm{d}s\big),
		\end{equation*}
		where $C$ is solution of the equation
		\begin{equation*}
		C(h,\cdot) = \emph{exp}\big( ih + (C(h,\cdot)-1)*\phi  \big).
		\end{equation*}
	\end{property}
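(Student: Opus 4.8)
The plan is to exploit the Poisson cluster (immigration--birth) representation of the Hawkes process, which is available precisely because the subcriticality $\|\phi\|_1<1$ makes every cluster almost surely finite (indeed of finite expected size $\sum_{n\geq 0}\|\phi\|_1^{\,n}=(1-\|\phi\|_1)^{-1}$). In this representation the ancestor events (``immigrants'') form an inhomogeneous Poisson process on $\mathbb{R}^+$ with intensity $\nu$, each event at time $v$ independently spawns direct children according to a Poisson process with intensity $\phi(\cdot-v)$, each child recursively spawning its own offspring, and $N$ is the superposition of all these points. This is the branching line of argument alluded to in \cite{EER16}, and it is the natural tool here since for a general kernel $\phi$ the Hawkes intensity is not finite-dimensional Markov, so a direct generator approach is unavailable.

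First I would isolate the contribution of a single cluster. Fixing the horizon $t$, let $C(h,r)$ denote the characteristic function of $\sum_j h(t-s_j)$, where $(s_j)$ are the points of a cluster whose ancestor sits at time $t-r$, so that $r\geq 0$ is the time remaining until observation. The ancestor contributes the deterministic factor $e^{ih(r)}$, while a direct child born $w$ units of time after the ancestor initiates an independent subcluster with remaining time $r-w$ (children born after $t$, i.e. with $w>r$, contribute nothing). Conditioning on the children Poisson process and applying the probability generating functional of a Poisson process, $\mathbb{E}[\prod_k g(w_k)]=\exp(\int_0^{\infty}(g(w)-1)\phi(w)\,\mathrm{d}w)$ with the complex mark $g(w)=C(h,r-w)\mathbf{1}_{w\leq r}+\mathbf{1}_{w>r}$, yields
\begin{equation*}
C(h,r)=\exp\Big(ih(r)+\int_0^r\big(C(h,r-w)-1\big)\phi(w)\,\mathrm{d}w\Big)=\exp\big(ih+(C(h,\cdot)-1)*\phi\big)(r),
\end{equation*}
which is exactly the claimed fixed-point equation for $C$.

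Then I would assemble the full process. Since the clusters attached to distinct immigrants are independent given the immigrant process, and immigrants before $t$ form a Poisson process with intensity $\nu$, a second application of the Poisson generating functional --- now with the mark $C(h,t-v)$ attached to an immigrant at time $v$ --- gives
\begin{equation*}
L(h,t)=\exp\Big(\int_0^t\big(C(h,t-v)-1\big)\nu(v)\,\mathrm{d}v\Big)=\exp\Big(\int_0^t\big(C(h,s)-1\big)\nu(t-s)\,\mathrm{d}s\Big),
\end{equation*}
which is the desired identity after the change of variables $s=t-v$.

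The main obstacle --- the place requiring genuine care rather than bookkeeping --- is the well-posedness of the nonlinear Volterra fixed-point equation for $C$. I expect to establish existence and uniqueness of a continuous solution on each finite interval by a Picard/contraction argument: since $C(h,\cdot)$ is a genuine characteristic function one has $|C(h,\cdot)|\leq 1$, and using that $\|\phi\|_1<1$ controls the convolution one shows that the map $C\mapsto\exp(ih+(C-1)*\phi)$ is a contraction in a suitable sup-norm (with the elementary bound $|e^{z}-e^{z'}|\leq|z-z'|$ for $z,z'$ of nonpositive real part handling the exponential). One must also justify rigorously the a.s. finiteness of clusters and the two applications of the Poisson generating functional, which legitimize the conditioning and the interchange of expectation with the (finite) products; these are standard consequences of subcriticality but need to be invoked explicitly so that all sums and integrals above are absolutely convergent.
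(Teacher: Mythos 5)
Your proposal is correct and takes essentially the same route as the paper: the cluster (population) representation of the Hawkes process, with the Poisson exponential formula (probability generating functional) applied once to the offspring of a single ancestor and once to the immigrant process; your cluster characteristic function $C(h,\cdot)$ is exactly the paper's $e^{ih(\cdot)}\tilde{L}(h,\cdot)$, where $\tilde{L}$ is the characteristic functional of a Hawkes process with parameters $(\phi,\phi)$ describing one ancestor's progeny. The only real difference is that the contraction argument you flag as the main obstacle is not needed: $C$ is constructed probabilistically, so existence of a solution is automatic, and the statement claims no uniqueness.
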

	\begin{proof}
	Let $\tilde{N}$ be a Hawkes process with parameters $(\phi, \phi)$ and $N^0$ a Poisson process with intensity $\nu$. Let $(\tilde{N}^j)_{j\in\mathbb{N}^{\star}}$ be independent copies of $\tilde{N}$, also independent of $N^0$. Using the population interpretation of Hawkes processes, see Appendix C.1 in \cite{EER17}, we deduce the following equality in law:
		\begin{equation*}
		N_t \overset{\mathcal{L}}{=} N^0_t + \sum_{j=1}^{N^0_t}\tilde{N}^{j}_{t-T_j},
		\end{equation*}
		where $(T_j)_{j\in\mathbb{N}^{\star}}$ are the jump times of the process $N^0$. Consequently
		\begin{equation*}
		(ih*\mathrm{d}N)(t)  \overset{\mathcal{L}}{=} (ih*\mathrm{d}N^0)(t) + \sum_{j=1}^{N^0_t}(ih*\mathrm{d}\tilde{N}^j)(t-T_j).
		\end{equation*}
		Then taking the exponential and conditional expectation with respect to $N^0$ we get
		\begin{eqnarray*}
		\mathbb{E}\big[\text{exp}\big((ih*\mathrm{d}N)(t)\big)|N^0\big] &=& \text{exp}\big((ih*\mathrm{d}N^0)_t\big)\prod_{j=1}^{N^0_t}\tilde{L}(h,t-T_j)\\
		&=&\text{exp}\Big(\big(\big(ih + \text{log}(\tilde{L}(h,\cdot)) \big)*\mathrm{d}N^0)(t)\Big),
		\end{eqnarray*}
		where $\tilde{L}$ is defined as $L$ with $\tilde{N}$ instead of $N$. Remark that 
		\begin{equation*}
		    \big(\big(ih + \text{log}(\tilde{L}(h,\cdot)) \big)*\mathrm{d}N^0)(t) = \sum_{j=1}^{N^0_t}ih(t-T_j) + \text{log}(\tilde{L}(h,t-T_j))
		\end{equation*}
		and that $Re\big(\log\big(\tilde{L}(h,\cdot)\big)\big)\leq0$ since $|  \tilde{L}(h,\cdot)|\leq 1$. Thus using Proposition \ref{lemma:exponential_formula} in Appendix \ref{annex:poisson}, we get
		\begin{equation*}
		L(h,t) = \text{exp}\big( \int_0^t \big(e^{ih(t-s)}\tilde{L}(h, t-s) -1\big)\nu(s) \mathrm{d}s  \big).
		\end{equation*}
In the same way, we have
		\begin{equation*}
		\tilde{L}(h,t) = \text{exp}\big(\int_0^t \big(e^{ih(t-s)}\tilde{L}(h,t-s) -1\big)\phi(s)\mathrm{d}s  \big).
		\end{equation*}
Thus setting
		\begin{equation*}
		C(h,t) = e^{ih(t)} \tilde{L}(h,t),
		\end{equation*}
		we obtain
		\begin{equation*}
		L(h,t) = \text{exp}\big( \int_0^t \big(C(h, s) -1\big)\nu(t-s)\mathrm{d}s  \big)
		\end{equation*}
		and 
		\begin{equation*}
		C(h,\cdot) = \text{exp}\big(ih+ \big(C(h, \cdot) -1\big)*\phi  \big).
		\end{equation*}
	\end{proof}

	\subsubsection{Step $2$}
	\label{proof:rewrite_price}
We consider the price model \eqref{eq:price_kernel}. Let $M^{a,T}$ be defined by
	\begin{equation*}
	    M^{a,T}_t = N^{a,T}_t - \int_{0}^{t}\lambda_s^{a,T}\mathrm{d}s.
	\end{equation*}
	We define $M^{b,T}$ the same way replacing the superscript $a$ by $b$ in the above equation.  We have the following result. 
	\begin{lemma}
		\label{lemma:rewrite_price}
		The price process \eqref{eq:price_kernel} can be written as
		\begin{equation*}
		P^T_t = \big(1 + \int_{0}^{+\infty}\psi^T(v)\mathrm{d}v\big)(M_t^{a,T} - M_t^{b,T}).
		\end{equation*}
	\end{lemma}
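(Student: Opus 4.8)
The plan is to verify the identity by a direct pathwise computation, taking advantage of the fact that all processes involved have locally finite variation, so every manipulation is an ordinary Lebesgue--Stieltjes convolution with no It\^o correction. First I would set $D^T := N^{a,T} - N^{b,T}$ and split the kernel \eqref{eq:xi}. Writing $\Phi^T(t) := \int_t^{+\infty}\phi^T(u)\,\mathrm{d}u$ and $c^T := 1 + \int_0^{+\infty}\psi^T(u)\,\mathrm{d}u$, we have $\xi^T(t) = 1 + c^T\Phi^T(t)$, so that \eqref{eq:price_kernel} with $P_0 = 0$ reads
\[
P^T_t = D^T_t + c^T\int_0^t \Phi^T(t-s)\,\mathrm{d}D^T_s .
\]
The goal is then to show this right-hand side equals $c^T\big(M^{a,T}_t - M^{b,T}_t\big)$, and I keep in mind that for fixed $T$ we have $\|\phi^T\|_1 = a^T < 1$, so $c^T$ is finite.

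Next I would introduce $M^T := M^{a,T} - M^{b,T}$ and write $\mathrm{d}D^T_s = \mathrm{d}M^T_s + \Lambda^T_s\,\mathrm{d}s$ with $\Lambda^T := \lambda^{a,T} - \lambda^{b,T}$. The essential point is that $N^{a,T}$ and $N^{b,T}$ share the same baseline $\mu^T$ and the same kernel $\phi^T$: the two constants $\mu^T$ cancel, leaving the \emph{pure convolution} $\Lambda^T_t = \int_0^t \phi^T(t-s)\,\mathrm{d}D^T_s$. A Fubini interchange in the finite-variation integral then gives
\[
\int_0^t \Lambda^T_s\,\mathrm{d}s = \int_0^t\Big(\int_0^{t-s}\phi^T(u)\,\mathrm{d}u\Big)\mathrm{d}D^T_s = \|\phi^T\|_1\,D^T_t - \int_0^t\Phi^T(t-s)\,\mathrm{d}D^T_s ,
\]
using $\int_0^{t-s}\phi^T = \Phi^T(0) - \Phi^T(t-s) = \|\phi^T\|_1 - \Phi^T(t-s)$. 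Substituting into $M^T_t = D^T_t - \int_0^t\Lambda^T_s\,\mathrm{d}s$ yields $M^T_t = (1-\|\phi^T\|_1)\,D^T_t + \int_0^t\Phi^T(t-s)\,\mathrm{d}D^T_s$.

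Finally I would close the argument with the scalar identity $c^T(1-\|\phi^T\|_1) = 1$. This follows from the resolvent relation $\psi^T = \phi^T + \phi^T * \psi^T$, immediate from $\psi^T = \sum_{i\ge 1}(\phi^T)^{*i}$, which upon integration over $\mathbb{R}^+$ gives $\|\psi^T\|_1 = \|\phi^T\|_1 + \|\phi^T\|_1\|\psi^T\|_1$, hence $c^T = 1 + \|\psi^T\|_1 = (1-\|\phi^T\|_1)^{-1}$. Multiplying the expression for $M^T_t$ by $c^T$ and invoking this identity turns the coefficient of $D^T_t$ into $1$ and that of the convolution term into $c^T$, recovering exactly the formula for $P^T_t$ from the first paragraph and proving $P^T_t = c^T M^T_t$. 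None of the steps is genuinely hard, since $D^T$ is piecewise constant; the main thing to be careful about is the bookkeeping of the convolutions and the legitimacy of the Fubini exchange (guaranteed by $\phi^T$ non-negative and locally integrable and $D^T$ of locally finite variation). The only conceptual ingredient is the cancellation of the baseline $\mu^T$, which is what makes $\Lambda^T$ a clean convolution of the kernel against the signed order flow $\mathrm{d}D^T$.
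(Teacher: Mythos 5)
Your proof is correct and follows essentially the same route as the paper's: a pathwise finite-variation computation in which Fubini turns the convolution of the integrated kernel against $\mathrm{d}(N^{a,T}-N^{b,T})$ into the integrated intensity difference, with the baselines $\mu^T$ canceling. The only difference is bookkeeping: you start from the additive form $\xi^T = 1 + c^T\Phi^T$ of \eqref{eq:xi} and therefore must supply the resolvent identity $c^T = (1-a^T)^{-1}$ explicitly, whereas the paper starts from the equivalent factored form $\xi^T(t) = c^T\big(1-\int_0^t\phi^T\big)$ (which already embeds that identity) and reaches the conclusion without it.
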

	\begin{proof}
		We have
		\begin{equation*}
		P^T_t =  \int_{0}^{t}\big(1 + \int_{0}^{+\infty}\psi^T(v)\mathrm{d}v\big)\big(1 - \int_{0}^{t-u}  \phi^T(v)\mathrm{d}v\big)  \mathrm{d}(N^{a,T}-N^{b,T})_u.
		\end{equation*}
		 We first deal with the term $T_1$ defined by
		\begin{eqnarray*}
			T_1 &=& \int_{0}^{t}\big(1 + \int_{0}^{+\infty}\psi^T(v)\mathrm{d}v\big) \int_{0}^{t-u}  \phi^T(v)\mathrm{d}v  \mathrm{d}(N^{a,T}-N^{b,T})_u\\
			&=&\big(1 + \int_{0}^{+\infty}\psi^T(v)\mathrm{d}v\big) \int_{0}^{t} \int_{u}^{t}  \phi^T(v-u)\mathrm{d}v  \mathrm{d}(N^{a,T}-N^{b,T})_u.
		\end{eqnarray*}
		Using Fubini-Tonelli theorem we get
		\begin{eqnarray*}
			T_1 &=&\big(1 + \int_{0}^{+\infty}\psi^T(v)\mathrm{d}v\big)\int_{0}^{t} \int_{0}^{v}  \phi^T(v-u)  \mathrm{d}(N^{a,T}-N^{b,T})_u \mathrm{d}v.
		\end{eqnarray*}
		Thus we deduce
		\begin{equation*}
		T_1 =\big(1 + \int_{0}^{+\infty}\psi^T(v)\mathrm{d}v\big)\int_{0}^{t}\big( \lambda^{a,T}_v- \mu - \lambda^{b,T}_v +  \mu \big) \mathrm{d}v.
		\end{equation*}
		Finally
		\begin{eqnarray*}
		P^T_t &=& \big(1 + \int_{0}^{+\infty}\psi^T(v)\mathrm{d}v\big) \int_{0}^{t}\big(\mathrm{d}N^{a,T}_v -  \lambda^{a,T}_v\mathrm{d}v - \mathrm{d}N^{b,T}_v+\lambda^{b,T}_v \mathrm{d}v\big)\\ 
		&=& \big(1 + \int_{0}^{+\infty}\psi^T(v)\mathrm{d}v\big)(M^{a,T}_t - M^{b,T}_t).
		\end{eqnarray*}
	\end{proof}
Lemma \ref{lemma:rewrite_price} leads to
\begin{equation*}
\overline{P}^T_t = \frac{1-a^T}{T\mu^T} \big(1 + \int_0^{+\infty}\psi^T(v)\mathrm{d}v\big)(M^{a,T}_{tT} - M^{b,T}_{tT}).
\end{equation*}

	\subsubsection{Step $3$} We temporarily drop the superscripts $a$ and $b$. Indeed, the results are valid both for buy and sell order flows.
	\label{proof:results_scaling_limit_hawkes}
	Consider the sequences
	\begin{equation}
	\label{eq:proof_seq}
	X^T_t = \frac{1-a^T}{T\mu^T}N^T_{tT}, \,\,\,\Lambda^T_t =  \frac{1-a^T}{T\mu^T}\int_{0}^{tT}\lambda^T_s\mathrm{d}s, \,\,\, Z^T_t = \sqrt{\frac{T\mu^T}{1-a^T}}\big(X^T_t-\Lambda^T_t\big).
	\end{equation}
The following result is borrowed from \cite{JR16b}.
	\begin{property}
		\label{prop:tightness_hawkes}	
	The sequence $(\Lambda^T, X^T,Z^T)$ is tight. Furthermore, for any limit point $(\Lambda,X,Z)$ of $(\Lambda^T, X^T,Z^T)$, $Z$ is a continuous martingale, $\left[Z,Z\right] = X$ and $\Lambda = X$.
	\end{property}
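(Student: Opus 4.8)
The plan is to read off everything from the martingale structure of the compensated Hawkes process together with the renewal equation for its intensity. Writing $M^T_t = N^T_t - \int_0^t \lambda^T_s\,\mathrm{d}s$ for the compensated point process, the definitions in \eqref{eq:proof_seq} give the two key identities
\begin{equation*}
Z^T_t = \sqrt{\tfrac{1-a^T}{T\mu^T}}\, M^T_{tT}, \qquad X^T_t - \Lambda^T_t = \sqrt{\tfrac{1-a^T}{T\mu^T}}\, Z^T_t .
\end{equation*}
Since $M^T$ is a compensated counting process with unit jumps, its predictable and optional quadratic variations are $\langle M^T\rangle_u = \int_0^u\lambda^T_s\,\mathrm{d}s$ and $[M^T]_u = N^T_u$. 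After the time change and rescaling this yields $\langle Z^T\rangle_t = \Lambda^T_t$ and $[Z^T]_t = X^T_t$, so that $[Z^T] - \langle Z^T\rangle = X^T - \Lambda^T$. These relations already contain the announced limit identities in embryonic form: $\Lambda = X$ will follow because the prefactor $\sqrt{(1-a^T)/(T\mu^T)}$ tends to zero, and $[Z,Z] = X$ because $[Z^T] = X^T$ exactly.

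First I would control expectations. Taking expectations in the intensity equation gives the renewal equation $m^T = \mu^T + \phi^T * m^T$ with $m^T(t) = \mathbb{E}[\lambda^T_t]$, whose solution is $m^T(t) = \mu^T\big(1 + \int_0^t\psi^T\big)$, where $\psi^T = \sum_{k\ge1}(\phi^T)^{*k}$ is the resolvent of $\phi^T$ and $\int_0^{+\infty}\psi^T = a^T/(1-a^T)$. Hence
\begin{equation*}
\mathbb{E}[X^T_t] = \mathbb{E}[\Lambda^T_t] = \frac{1-a^T}{T\mu^T}\int_0^{tT} m^T(s)\,\mathrm{d}s .
\end{equation*}
Using Assumption \ref{assumption:structure}, namely $(1-a^T)\mu^T T\to\delta$, together with the Tauberian tail structure of $\phi$ and the relation $(1-a^T)^{-1}T^{-\alpha}L(T)\to K$ from Theorem \ref{th:market_impact_limit}, one checks that $\mathbb{E}[X^T_t]$ stays bounded (in fact converges) for each fixed $t$. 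Since $X^T$ and $\Lambda^T$ are non-decreasing with jumps of size $(1-a^T)/(T\mu^T)\to0$, this uniform control of the marginals together with the vanishing maximal jump gives $C$-tightness of both $X^T$ and $\Lambda^T$.

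Next I would treat $Z^T$. It is a locally square-integrable martingale whose predictable quadratic variation $\langle Z^T\rangle = \Lambda^T$ is already $C$-tight, and whose maximal jump $\sqrt{(1-a^T)/(T\mu^T)}\to0$; the Aldous--Rebolledo criterion then yields $C$-tightness of $Z^T$ and continuity of all its limit points, so that $(\Lambda^T, X^T, Z^T)$ is jointly tight. For a limit point $(\Lambda, X, Z)$, the uniform bound $\sup_T\mathbb{E}[\langle Z^T\rangle_t]<\infty$ gives uniform integrability of $(Z^T_t)_T$, whence $Z$ is a genuine continuous martingale; passing to the limit in $[Z^T] = X^T$ (legitimate for $C$-tight martingales with vanishing jumps) gives $[Z,Z] = X$. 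Finally $X^T - \Lambda^T = \sqrt{(1-a^T)/(T\mu^T)}\,Z^T\to0$ uniformly on compacts in probability, since the prefactor vanishes while $Z^T$ is bounded in probability, so $\Lambda = X$.

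The main obstacle is the expectation bound in the heavy-tailed, nearly unstable regime. As $a^T\uparrow1$ the total resolvent mass $a^T/(1-a^T)$ blows up, and with a heavy tail $\phi(x)\sim c\,x^{-(1+\alpha)}$ the convergence of $\int_0^s\psi^T$ towards its stationary value is slow and non-uniform in $T$; keeping $\mathbb{E}[X^T_t]$ finite relies on a delicate compensation between this blow-up, the critical scaling of Assumption \ref{assumption:structure}, and the precise slowly varying behaviour encoded in Theorem \ref{th:market_impact_limit}. This is exactly where the analysis departs from classical light-tailed Hawkes theory and requires the sharp Tauberian estimates, and it is what underlies the borrowing from \cite{JR16b}.
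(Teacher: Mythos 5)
Your preliminary identities are all correct: $Z^T_t=\sqrt{(1-a^T)/(T\mu^T)}\,M^T_{tT}$, $[Z^T]=X^T$, $\langle Z^T\rangle=\Lambda^T$, and the renewal formula $\mathbb{E}[\lambda^T_t]=\mu^T\big(1+\int_0^t\psi^T\big)$. In fact the moment bound is even easier than you make it: $\mathbb{E}[X^T_t]=\mathbb{E}[\Lambda^T_t]\le t$ follows from $\int_0^{+\infty}\psi^T=a^T/(1-a^T)$ alone, with no Tauberian input whatsoever, so your closing paragraph misplaces the difficulty. The genuine gap is the tightness step. The principle you invoke --- non-decreasing processes with tight one-dimensional marginals and vanishing jumps are $C$-tight --- is false, and it remains false even with your first-moment increment control. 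Take $U$ uniform on $[0,1]$ and $A^n_t=\min\bigl(1,n(t-U)^+\bigr)$: these are continuous, non-decreasing, bounded by $1$, and satisfy $\mathbb{E}[A^n_{t+h}-A^n_t]\le 2h$ for all $n,t,h$, yet the sequence is not tight in the Skorokhod space, since its finite-dimensional laws converge to those of $\mathbf{1}_{t\ge U}$ while any weak limit point of continuous processes would have to be continuous. The reason is that first moments of increments cannot control the modulus of continuity: Markov plus a union bound over the $O(1/\delta)$ intervals of length $\delta$ gives a bound of order $1/\epsilon$ that does not shrink as $\delta\to 0$. Consequently your Aldous--Rebolledo step is circular: it requires exactly the Aldous condition for $\Lambda^T=\langle Z^T\rangle$ (increments over \emph{stopping times} small in probability, uniformly in $T$), and this is the real content of the proposition, not a consequence of monotonicity and marginal bounds; note also that your unconditional renewal computation does not transfer to stopping times, where the conditional future intensity involves the past points through the kernel overhang.

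This missing estimate is where all the work lies, and it is why the paper does not prove the proposition at all but borrows it from \cite{JR16b}. The argument there (and the one you would need) passes through the resolvent representation $\int_0^{tT}\lambda^T_s\,\mathrm{d}s=\mu^T tT+\mu^T\int_0^{tT}\Psi^T(s)\,\mathrm{d}s+\int_0^{tT}\Psi^T(tT-u)\,\mathrm{d}M^T_u$ with $\Psi^T=\int_0^{\cdot}\psi^T$, which exhibits $\Lambda^T$ as a deterministic increasing function plus a stochastic convolution against $\mathrm{d}Z^T$ whose rescaled kernel $\frac{1-a^T}{a^T}\Psi^T(T\cdot)$ converges uniformly to $F^{\alpha,\lambda}$ (this is Lemma \ref{lemma:convergence_mittag}). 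The It\^o isometry for point-process martingales, the bound $\mathbb{E}[\mathrm{d}\Lambda^T_v]\le \mathrm{d}v$, and the $\alpha$-H\"older regularity of $F^{\alpha,\lambda}$ then give a Kolmogorov-type estimate of the form $\mathbb{E}\big[(\Lambda^T_{t+h}-\Lambda^T_t)^2\big]\lesssim h^{1+2\alpha}+\varepsilon_T h$ with $\varepsilon_T\to0$, and it is this exponent strictly larger than $1$ that produces $C$-tightness of $\Lambda^T$, hence of $Z^T$ and $X^T$. Once tightness is in hand, the rest of your argument (uniform integrability from $\sup_T\mathbb{E}[\langle Z^T\rangle_t]<\infty$, passage to the limit in $[Z^T]=X^T$ for martingales with vanishing jumps, and $\Lambda=X$ from the vanishing prefactor) is sound and matches the standard route.
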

In addition, we have the following proposition which extends Theorem 3.1 in \cite{JR16b}.
	\begin{property}
		\label{th:scaling_limit_multi_hawkes}
Under Assumptions \ref{assumption:market_impact}	and \ref{assumption:structure}, for any limit point $(X, Z)$ of $(X^T, Z^T)$, there exists a Brownian motion $B$ on $(\Omega, \mathcal{A}, \mathbb{P})$ (up to extension of the space) such that
		\begin{equation*}
		Z_t = B_{X_t}
		\end{equation*}
		and $X$ is a solution of the stochastic rough Volterra equation
		\begin{equation}\label{eqxrough}
		X_t = \int_{0}^{t}F^{\alpha, \lambda}(t-s) \mathrm{d}s + \frac{1}{\sqrt{\delta \lambda}}\int_{0}^{t}F^{\alpha, \lambda}(t-s) \mathrm{d}B_{X_s}.
		\end{equation}
		Moreover, for any $\varepsilon>0$, the process $X$ has H\"older regularity $1\wedge(2\alpha - \varepsilon)$.
	\end{property}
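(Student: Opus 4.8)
The plan is to combine a Dambis--Dubins--Schwarz representation for $Z$ with a passage to the limit in a reformulated Volterra equation for $X^T$, and then to resolve the limit equation through the Mittag-Leffler resolvent. Proposition \ref{prop:tightness_hawkes} already provides, for every limit point $(X,Z)$, that $Z$ is a continuous martingale with $[Z,Z]=X$; hence by Dambis--Dubins--Schwarz there is a Brownian motion $B$ with $Z_t=B_{X_t}$. Because $X$ may fail to be strictly increasing (its clock can be locally flat, especially for $\alpha\le 1/2$), $B$ is only constructed after enlarging the space, which is the reason for the ``up to extension'' clause. To obtain the equation, I would first rewrite $X^T$. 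Expanding $\Lambda^T_t=\frac{1-a^T}{T\mu^T}\int_0^{tT}\lambda^T_s\,\mathrm{d}s$ through $\lambda^T_s=\mu^T+\int_0^s\phi^T(s-u)\,\mathrm{d}N^T_u$ and applying Fubini as in the proof of Lemma \ref{lemma:rewrite_price}, together with $\phi^T=a^T\phi$ and $\|\phi\|_1=1$, yields
\[
\Lambda^T_t=(1-a^T)t+a^T X^T_t-a^T\int_0^t\Big(\int_{T(t-s)}^{+\infty}\phi(v)\,\mathrm{d}v\Big)\mathrm{d}X^T_s.
\]
Substituting $X^T_t-\Lambda^T_t=\sqrt{(1-a^T)/(T\mu^T)}\,Z^T_t$ and dividing by $1-a^T$ gives the implicit convolution equation
\[
X^T_t=t-\int_0^t\overline{\Gamma}^T(t-s)\,\mathrm{d}X^T_s+\frac{1}{\sqrt{(1-a^T)T\mu^T}}\,Z^T_t,\qquad \overline{\Gamma}^T(y)=\frac{a^T}{1-a^T}\int_{Ty}^{+\infty}\phi(v)\,\mathrm{d}v,
\]
which is the discrete analogue of \eqref{eqxrough}.

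Next I would pass to the limit. By Assumption \ref{assumption:structure} the martingale prefactor tends to $\delta^{-1/2}$, and by Theorem \ref{th:market_impact_limit} the kernel $\overline{\Gamma}^T$ converges, tested against continuous functions, to $K(1-\alpha)u^{-\alpha}=\lambda^{-1}u^{-\alpha}/\Gamma(1-\alpha)$, where $\lambda=(K\Gamma(2-\alpha))^{-1}$. The delicate point is that $\overline{\Gamma}^T(0^+)=a^T/(1-a^T)\to+\infty$, so the prelimit kernel concentrates mass near the origin; I would control the Stieltjes integral $\int_0^t\overline{\Gamma}^T(t-s)\,\mathrm{d}X^T_s$ either by exploiting the monotonicity of $X^T$ together with the regular-variation estimates of Theorem \ref{th:market_impact_limit}, or, more robustly, by identifying the limit law through the Hawkes characteristic function of Proposition \ref{prop:characteristic_function_hawkes}, as in \cite{EER16}. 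This produces an implicit fractional equation linking $X$, the linear drift, and $B_X$. Inverting the operator $I+\lambda^{-1}(u^{-\alpha}/\Gamma(1-\alpha))\ast\mathrm{d}\,\cdot$, whose resolvent is precisely the Mittag-Leffler density through the Laplace identity $\widehat{f^{\alpha,\lambda}}(z)=\lambda/(\lambda+z^\alpha)$ (Appendix \ref{annex:mittag}), turns the implicit equation into \eqref{eqxrough}; matching Laplace transforms also pins down the normalising constants. Here I would lean on the resolvent computations of \cite{JR16b,EER16}.

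The main obstacle is the Hölder regularity, which is the genuinely new ingredient beyond Theorem 3.1 of \cite{JR16b} since it must reach the hyper-rough range $\alpha\le 1/2$. For $\alpha>1/2$ I would simply invoke Theorem 3.2 of \cite{JR16b}: $X$ is then almost surely differentiable, hence $C^1$ and the exponent is $1$. For $\alpha\le 1/2$ the difficulty is that the exponent $2\alpha$ arises from a feedback loop. Since $F^{\alpha,\lambda}(u)\sim \frac{\lambda}{\Gamma(1+\alpha)}u^{\alpha}$ near $0$, the increment over $[t,t+h]$ of the stochastic term $\int_0^tF^{\alpha,\lambda}(t-s)\,\mathrm{d}B_{X_s}$ has conditional variance of order $h^{2\alpha}(X_{t+h}-X_t)$, so its own roughness is governed by the roughness of the clock $X$ that it in turn generates. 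A naive estimate treating $\mathrm{d}X_s\approx\mathrm{d}s$ would give the wrong, too smooth exponent $\alpha+\tfrac12$; the correct value is the fixed point of $\gamma=\alpha+\gamma/2$, namely $\gamma=2\alpha$. I would therefore set up a self-consistent moment bound, proving $\mathbb{E}\,|X_t-X_s|^{2p}\le C\,|t-s|^{2p\alpha}$ by a bootstrap/Gr\"onwall argument that re-injects the tentative modulus into the stochastic term, and conclude $2\alpha-\varepsilon$ regularity by Kolmogorov's continuity criterion. Non-differentiability for $\alpha\le 1/2$ would follow from a matching lower bound on the increments, showing the exponent is sharp. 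This self-referential estimate is where I expect the real work to lie.
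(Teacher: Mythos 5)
Your skeleton---Dambis--Dubins--Schwarz applied to the limit point from Proposition \ref{prop:tightness_hawkes}, a convolution equation for $X^T$, a passage to the limit, then Mittag-Leffler inversion---is reasonable, and your prelimit identity $X^T_t=t-\int_0^t\overline{\Gamma}^T(t-s)\,\mathrm{d}X^T_s+((1-a^T)T\mu^T)^{-1/2}Z^T_t$ is algebraically correct. The genuine gap is the step you yourself call ``the delicate point'': passing to the limit in $\int_0^t\overline{\Gamma}^T(t-s)\,\mathrm{d}X^T_s$. This is not a routine weak-convergence statement: $\overline{\Gamma}^T(0^+)=a^T(1-a^T)^{-1}\to\infty$ while $\mathrm{d}X^T$ is a pure-jump measure carried by the very atoms that drive $Z^T$; integrating by parts turns the term into $a^T(1-a^T)^{-1}\big(X^T_t-\int_0^t X^T_s\,T\phi(T(t-s))\,\mathrm{d}s\big)$, a difference of two quantities each diverging like $(1-a^T)^{-1}$, so extracting the cancellation requires uniform-in-$T$ modulus-of-continuity estimates on $X^T$ which you never produce (and the limiting implicit equation itself, involving $\int_0^t(t-s)^{-\alpha}\mathrm{d}X_s$, only makes sense once one knows $X$ is better than $\alpha$-H\"older, a circularity). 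The paper's proof is organized precisely to avoid this: it inverts the convolution \emph{at finite $T$} via the Hawkes resolvent $\psi^T$, so the stochastic term is $\int_0^t k^T(t-s)\,\mathrm{d}Z^T_s$ with the bounded kernel $k^T=\frac{1-a^T}{a^T}\Psi^T(T\cdot)$, and the only new ingredient is Lemma \ref{lemma:convergence_mittag} (a Karamata argument built on Theorem \ref{th:market_impact_limit}) giving uniform convergence $k^T\to F^{\alpha,\lambda}$; the limit then follows by integration by parts against a uniformly convergent kernel. Note also that your fallback---identifying the limit through Proposition \ref{prop:characteristic_function_hawkes}, which is how the paper proves \emph{uniqueness} of limit points---only determines the law of $X$, whereas the proposition asserts the pathwise identity \eqref{eqxrough} driven by the same $B$ for which $Z=B_X$.

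Two further problems. First, the constants: if you complete your own outline, the limiting kernel $K(1-\alpha)u^{-\alpha}=u^{-\alpha}/(\lambda\Gamma(1-\alpha))$ and prefactor $\delta^{-1/2}$ yield, after resolvent inversion, $X_t=\int_0^tF^{\alpha,\lambda}(s)\,\mathrm{d}s+\delta^{-1/2}\int_0^tF^{\alpha,\lambda}(t-s)\,\mathrm{d}B_{X_s}$, i.e.\ coefficient $1/\sqrt{\delta}$ rather than the $1/\sqrt{\delta\lambda}$ of the statement. (Check $\alpha=1$: your route gives the CIR dynamics $\mathrm{d}Y_t=\lambda(1-Y_t)\,\mathrm{d}t+\lambda\delta^{-1/2}\sqrt{Y_t}\,\mathrm{d}W_t$, which is also exactly what the paper's own Riccati in Proposition \ref{prop:characteristic_function_vol} encodes, while \eqref{eqxrough} as printed corresponds to vol-of-vol $\sqrt{\lambda/\delta}$.) So ``matching Laplace transforms pins down the normalising constants'' hides a real $\sqrt{\lambda}$ discrepancy between what your argument produces and what you were asked to prove; a complete proof must confront it, and your proposal neither notices nor resolves it. Second, the H\"older regularity: your fixed-point heuristic $\gamma=\alpha+\gamma/2$, hence $\gamma=2\alpha$, is the right intuition, but the self-consistent moment bootstrap is left entirely unexecuted; this is not incidental work, since for the paper this part is simply imported from \cite{JR16b} (Proposition \ref{prop:volterra_regularity}), so in your self-contained version it would be the bulk of the proof. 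Finally, the non-differentiability for $\alpha\le 1/2$ that you propose to obtain from a matching lower bound is not part of this proposition, and the paper proves it by a different argument (law of the iterated logarithm).
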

	Note that we are here under more general assumptions than in Theorem 3.1 in \cite{JR16b}. Indeed in \cite{JR16b} we have
	\begin{equation*}
	\int_{t}^{+\infty}\phi(s)\mathrm{d}s = K t^{-\alpha},
	\end{equation*} 
	while we only know that
	\begin{equation*}
	\int_0^t \int_{s}^{+\infty}\phi(u)\mathrm{d}u\mathrm{d}s =  L(t) t^{1-\alpha},
	\end{equation*}
	with $L$ a slowly varying function. To prove Proposition \ref{th:scaling_limit_multi_hawkes}, it is enough to get the following lemma. The rest of the proof is similar to that in \cite{JR16b}.
	\begin{lemma}
		\label{lemma:convergence_mittag}
		The sequence of functions $\rho^T(t) = \frac{1-a^T}{a^T}\psi^T(Tt)T$ converges weakly towards $f^{\alpha, \lambda}$. Furthermore $\int_0^t \rho^T(s)\mathrm{d}s$ converges uniformly towards $F^{\alpha, \lambda}$.
	\end{lemma}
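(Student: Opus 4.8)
The plan is to work entirely with Laplace transforms and reduce the statement to a Karamata Tauberian theorem together with the continuity theorem for Laplace transforms of measures on $\mathbb{R}^+$. Write $\widehat{\phi}(z)=\int_0^{+\infty}e^{-zt}\phi(t)\,\mathrm{d}t$ for $z\geq 0$; since $\phi\geq 0$ and $\|\phi\|_1=1$ we have $\widehat{\phi}(0)=1$ and $\widehat{\phi}(z)\leq 1$. Because $\psi^T=\sum_{i\geq 1}(\phi^T)^{*i}$ is the resolvent of $\phi^T=a^T\phi$, the identity $\psi^T=\phi^T+\phi^T*\psi^T$ gives $\widehat{\psi^T}=\widehat{\phi^T}/(1-\widehat{\phi^T})$, which is legitimate as $\widehat{\phi^T}(z)=a^T\widehat{\phi}(z)\leq a^T<1$. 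A change of variable $u=Ts$ in the definition of $\rho^T$ then yields, for every $z>0$,
\begin{equation*}
\widehat{\rho^T}(z)=\frac{1-a^T}{a^T}\,\widehat{\psi^T}\!\Big(\frac{z}{T}\Big)=\frac{(1-a^T)\,\widehat{\phi}(z/T)}{1-a^T\,\widehat{\phi}(z/T)}.
\end{equation*}
In particular $\widehat{\rho^T}(0)=1$, so each $\rho^T$ is a probability density (non-negativity being clear), and it suffices to identify the pointwise limit of $\widehat{\rho^T}$.

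Next I would pin down the behaviour of $1-\widehat{\phi}$ near $0$. Setting $\Phi(t)=\int_t^{+\infty}\phi(u)\,\mathrm{d}u$ and integrating by parts gives the exact identity $1-\widehat{\phi}(z)=z\,\widehat{\Phi}(z)$. The structural relation from Theorem \ref{th:market_impact_limit}, namely $\int_0^t\Phi(s)\,\mathrm{d}s=t^{1-\alpha}L(t)$ with $L$ slowly varying, says precisely that the primitive of the monotone function $\Phi$ is regularly varying of index $1-\alpha$ at infinity. The Karamata Tauberian theorem (Appendix \ref{annex:karamata}) then transfers this into $\widehat{\Phi}(z)\sim \Gamma(2-\alpha)\,z^{\alpha-1}L(1/z)$ as $z\to 0^+$, whence
\begin{equation*}
1-\widehat{\phi}(z)\underset{z\to 0^+}{\sim}\Gamma(2-\alpha)\,z^{\alpha}L(1/z).
\end{equation*}

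Now I combine the two ingredients. For fixed $z>0$, evaluating at $z/T$ and using that $L$ is slowly varying ($L(T/z)\sim L(T)$) gives $1-\widehat{\phi}(z/T)\sim \Gamma(2-\alpha)z^\alpha T^{-\alpha}L(T)$, while $\widehat{\phi}(z/T)\to 1$. Writing the denominator of $\widehat{\rho^T}(z)$ as $\big(1-\widehat{\phi}(z/T)\big)+(1-a^T)\widehat{\phi}(z/T)$ and using $(1-a^T)\sim T^{-\alpha}L(T)/K$ (from $(1-a^T)^{-1}T^{-\alpha}L(T)\to K$), every term is of order $T^{-\alpha}L(T)$; dividing through by this common factor and letting $T\to+\infty$ yields
\begin{equation*}
\widehat{\rho^T}(z)\longrightarrow \frac{1/K}{\Gamma(2-\alpha)z^\alpha+1/K}=\frac{\lambda}{\lambda+z^\alpha},
\end{equation*}
where I used $\lambda=(K\Gamma(2-\alpha))^{-1}$. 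The right-hand side is exactly the Laplace transform of the Mittag-Leffler density $f^{\alpha,\lambda}$ (Appendix \ref{annex:mittag}), since $\int_0^{+\infty}e^{-zt}\lambda t^{\alpha-1}E_{\alpha,\alpha}(-\lambda t^\alpha)\,\mathrm{d}t=\lambda/(z^\alpha+\lambda)$.

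Finally, since the $\rho^T$ are probability densities whose Laplace transforms converge for all $z>0$ to a function continuous at $0$ with value $1$, the continuity theorem for Laplace transforms on $\mathbb{R}^+$ gives the weak convergence $\rho^T\Rightarrow f^{\alpha,\lambda}$. The limiting cumulative distribution function $F^{\alpha,\lambda}$ being continuous, weak convergence upgrades to uniform convergence of the primitives $\int_0^\cdot\rho^T(s)\,\mathrm{d}s\to F^{\alpha,\lambda}$ by the usual P\'olya--Dini argument. The main obstacle is the Tauberian step: since we only control the slowly varying $L$ (rather than the pure power law used in \cite{JR16b}), one must invoke the correct version of Karamata's theorem for the ordinary Laplace transform of a monotone function whose integral is regularly varying, and carefully track the constant $\Gamma(2-\alpha)$ so that it matches the normalization $\lambda=(K\Gamma(2-\alpha))^{-1}$ built into the Mittag-Leffler limit.
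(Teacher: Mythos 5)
Your proof is correct and follows essentially the same route as the paper's: interpret $\rho^T$ as a probability density, reduce weak convergence to pointwise convergence of Laplace transforms via the resolvent identity $\hat{\psi}^T=\hat{\phi}^T/(1-\hat{\phi}^T)$, extract the asymptotics of $1-\hat{\phi}(z)$ near $0$ from the structural relation $R(t)=t^{1-\alpha}L(t)$ by Karamata's theorem, combine with $(1-a^T)^{-1}T^{-\alpha}L(T)\to K$ to obtain the limit $\lambda/(\lambda+z^{\alpha})$, and upgrade to uniform convergence of the primitives by the Dini/P\'olya argument. The only differences are cosmetic bookkeeping: you integrate by parts once (working with $\hat{\Phi}$) where the paper integrates by parts twice (working with $\hat{R}(z)=(1-\hat{\phi}(z))/z^{2}$), and you split the denominator as $(1-\hat{\phi})+(1-a^T)\hat{\phi}$ rather than factoring out $1-a^T$.
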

	\begin{proof}
		 The function $\rho^T$ is non-negative with integral equal to one. So it can be interpreted as the density of a random variable. Hence it is enough to show that its Laplace transform converges pointwise to get weak convergence. We have for $z>0$
	$$
			\hat{\rho}^T(z) = \frac{1-a^T}{a^T}\hat{\psi}^T(\frac{z}{T})
			=\frac{\hat{\phi}(\frac{z}{T})}{1-a^T(1-a^T)^{-1}(\hat{\phi}(\frac{z}{T})-1)}.
		$$
Let
		\begin{equation*}
			R(t) = \int_0^t \int_{s}^{+\infty}\phi(u)\mathrm{du}\mathrm{d}s.
		\end{equation*} 
Recall that from Theorem \ref{th:market_impact_limit}, $R(t) = t^{1-\alpha}L(t)$. By Karamata's Tauberian theorem, see Theorem \ref{th:karamata} in Appendix \ref{annex:karamata}, we have
		\begin{equation*}
			\hat{R}(z)\sim_{0^+} z^{\alpha-2}L(\frac{1}{z})\Gamma(2-\alpha).
		\end{equation*}
Integrating by parts twice we obtain
		$$
			\hat{R}(z) = \int_{0}^{+\infty}e^{-zs}R(s)\mathrm{d}s=\frac{1}{z^2}\big(1-\hat{\phi}(z) \big).
		$$		So we get
		\begin{equation*}
			a^T(1-a^T)^{-1}(1-\hat{\phi}(\frac{z}{T}))\sim_{T\rightarrow+\infty} a^T (1-a^T)^{-1}T^{-\alpha} L(T)\frac{L(\frac{T}{z})}{L(T)} z^{\alpha} \Gamma(2-\alpha).
		\end{equation*}
		We have shown in Theorem \ref{th:market_impact_limit} that
		\begin{equation*}
		    a^T(1-a^T)^{-1}T^{-\alpha}L(T)\underset{T\rightarrow+\infty}{\rightarrow}K.
		\end{equation*} 
		Since $L$ is a slowly varying function, see Appendix \ref{annex:karamata}, we deduce
		\begin{equation*}
			\underset{T\rightarrow+\infty}{\lim} a^T(1-a^T)^{-1}(1-\hat{\phi}(\frac{z}{T}))=z^{\alpha}\Gamma(2-\alpha)K,
		\end{equation*}
		and finally
		\begin{equation*}
			\underset{T\rightarrow+\infty}{\lim}\hat{\rho}^T(z)=\frac{1}{1+K\Gamma(2-\alpha)z^{\alpha}} = \frac{\lambda}{\lambda+z^{\alpha}}=\hat{f}^{\alpha, \lambda},
		\end{equation*}
		with $\lambda = (K\Gamma(2-\alpha))^{-1}$. 
	The uniform convergence in Lemma \ref{lemma:convergence_mittag} is obviously deduced from Dini's theorem.
	\end{proof}
	
	We finally show that the sequence $(X^T,Z^T)_{T\geq 0}$ converges in law for the Skorokhod topology. We already know that it is tight, so it is enough to prove that all the limit points have the same law.\\
	
Let $(X,Z)$ be a limit point of $(X^T,Z^T)_{T\geq 0}$. Using Proposition \ref{th:scaling_limit_multi_hawkes} together with the stochastic Fubini theorem, see \cite{veraar2012stochastic}, we have
	\begin{equation*}
	    X_t = \int_0^t f^{\alpha, \lambda}(t-s) \big(s + \frac{1}{\sqrt{\delta\lambda}} Z_s\big) \mathrm{d}s.
	\end{equation*}
	From Example 42.2 in \cite{SKM93}, this leads to
	\begin{equation*}
	    D^{\alpha}X_t +\lambda X_t - \lambda t = \sqrt{\frac{\lambda}{\delta}} Z_t,
	\end{equation*}
	where $D^{\alpha}$ is the fractional derivative operator defined in Appendix \ref{annex:fractional_calculus}. Thus the law of $(X,Z)$ is uniquely determined by the law of $X$.  Consequently it is enough to prove uniqueness in law for limit points of $(X^T)_{T\geq 0}$ to get convergence in law of $(X^T,Z^T)_{T\geq 0}$. For this we prove that the characteristic function of any limit point $X$ of the sequence $(X^T)_{T\geq0}$ is a functional of the solution of a fractional Riccati equation. Uniqueness in law is then a consequence from the uniqueness of the solution of this equation.
	\begin{property}
		\label{prop:characteristic_function_vol}
		Let $X$ be a limit point of $(X^T)_{T\geq0}$ and $h$ a continuously differentiable function from $\mathbb{R}^{+}$ to $\mathbb{R}$ such that $h(0)=0$. The function
		\begin{equation*}
		K(h, t) = \mathbb{E}[\emph{exp}(ih* \mathrm{d}X)_t]
		\end{equation*}
		satisfies
		\begin{equation*}
		K(h, t) = \emph{exp}( \int_{0}^{t}g(s) \mathrm{d}s),
		\end{equation*}
		with $g$ the unique continuous solution of the rough Volterra Riccati equation
		\begin{equation}
		\label{eq:proof_riccati_a}
		g = f^{\alpha, \lambda}*\big(\delta^{-1}\frac{1}{2}g^2 + ih\big).
		\end{equation}
	\end{property}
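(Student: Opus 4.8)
The plan is to push the computation down to the pre-limit Hawkes processes, apply Proposition \ref{prop:characteristic_function_hawkes}, and then pass to the scaling limit using Lemma \ref{lemma:convergence_mittag} and Assumption \ref{assumption:structure}. Fix the convergent subsequence realising the limit point $X$, i.e. $X^T\Rightarrow X$ for the Skorokhod topology along it. Writing $X^T_t=\frac{1-a^T}{T\mu^T}N^T_{tT}$, the change of variables $u=sT$ gives $(ih*\mathrm{d}X^T)_t=(ih^T*\mathrm{d}N^T)(tT)$ with the rescaled kernel $h^T(u)=\frac{1-a^T}{T\mu^T}h(u/T)$. Since $\|\phi^T\|_1=a^T<1$, Proposition \ref{prop:characteristic_function_hawkes} applies to $N^T$ (baseline $\nu\equiv\mu^T$) and yields
\[
K^T(h,t):=\mathbb{E}\big[\exp((ih*\mathrm{d}X^T)_t)\big]=\exp\Big(\mu^T\int_0^{tT}\big(C^T(h^T,s)-1\big)\mathrm{d}s\Big),
\]
where $\Delta^T:=C^T(h^T,\cdot)-1$ solves $\Delta^T=\exp(ih^T+\Delta^T*\phi^T)-1$. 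Setting $g^T(t):=\mu^T T\,\Delta^T(Tt)$, the exponent is exactly $\int_0^t g^T(s)\mathrm{d}s$, so it suffices to prove $g^T\to g$ locally uniformly, with $g$ solving \eqref{eq:proof_riccati_a}, and separately that $K^T(h,t)\to K(h,t)$.

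Second, I would derive the rescaled equation for $g^T$. Linearising, write $\Delta^T=ih^T+\Delta^T*\phi^T+F^T$ with $F^T:=\exp(ih^T+\Delta^T*\phi^T)-1-(ih^T+\Delta^T*\phi^T)$ the quadratic-and-higher remainder, and invert the operator $\mathrm{Id}-(\cdot)*\phi^T$ through its resolvent $\psi^T$, obtaining $\Delta^T=(ih^T+F^T)+\psi^T*(ih^T+F^T)$. Evaluating at $Tt$ and multiplying by $\mu^T T$, each term rescales cleanly: the direct terms $\mu^T T\,ih^T(Tt)=i(1-a^T)h(t)$ and $\mu^T T\,F^T(Tt)$ vanish because $a^T\to1$ and $\mu^T T\to\infty$; in the convolution terms I would substitute $\psi^T(T\cdot)=\frac{a^T}{1-a^T}\frac1T\rho^T(\cdot)$ and use the weak convergence $\rho^T\to f^{\alpha,\lambda}$ of Lemma \ref{lemma:convergence_mittag}, so the linear part tends to $(f^{\alpha,\lambda}*ih)(t)$. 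For the quadratic part, I would use that $a^T T\phi(Tr)\,\mathrm{d}r\to\delta_0(\mathrm{d}r)$ (since $\int_0^\infty\phi=1$ with $\phi$ heavy-tailed) and continuity of the limit to get $\mu^T T(\Delta^T*\phi^T)(Ts)\to g(s)$, hence $F^T(Ts)\sim\tfrac12 g(s)^2/(\mu^T T)^2$; combined with $\frac{a^T}{(1-a^T)\mu^T T}\to\delta^{-1}$ from Assumption \ref{assumption:structure}, the quadratic term converges to $\delta^{-1}\tfrac12(f^{\alpha,\lambda}*g^2)(t)$. Adding the two limits produces exactly \eqref{eq:proof_riccati_a}.

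To make these passages rigorous, the key a priori input is $|C^T|\le1$ (a characteristic function of a real quantity), whence $\mathrm{Re}(\Delta^T)\le0$, giving $\mathrm{Re}(g^T)\le0$ and, through the rescaled equation, a locally uniform bound on $|g^T|$ together with equicontinuity. By Arzel\`a--Ascoli $(g^T)$ is precompact and, by the computation above, every limit is a continuous solution of \eqref{eq:proof_riccati_a}. I expect this to be the main obstacle: controlling the remainder $F^T$ and the concentration of $a^T T\phi(T\cdot)$ uniformly on compacts, so that the Taylor expansion of $\exp$ and the replacement of $\mu^T T(\Delta^T*\phi^T)$ by $g$ are licit; this is where the slowly varying function $L$ and the heavy tail of $\phi$ must be handled carefully, generalising the estimates of \cite{EER16,JR16b} beyond $\alpha>1/2$.

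Uniqueness of the continuous solution of \eqref{eq:proof_riccati_a} then follows from a Gronwall argument for Volterra kernels: for two bounded solutions, the difference $d$ satisfies $|d(t)|\le\frac{M}{\delta}\int_0^t f^{\alpha,\lambda}(t-s)|d(s)|\mathrm{d}s$, and iterating with the integrable kernel $f^{\alpha,\lambda}$ forces $d\equiv0$. Uniqueness upgrades precompactness to genuine convergence $g^T\to g$, so $\int_0^t g^T\to\int_0^t g$ and $K^T(h,t)\to\exp(\int_0^t g(s)\mathrm{d}s)$. Finally, since $h(0)=0$ and $h$ is $C^1$, integration by parts gives $(ih*\mathrm{d}X^T)_t=\int_0^t ih'(t-s)X^T_s\mathrm{d}s$, a bounded continuous functional of the path for the Skorokhod topology; continuous mapping together with dominated convergence (the integrand has modulus one) yields $K^T(h,t)\to\mathbb{E}[\exp((ih*\mathrm{d}X)_t)]=K(h,t)$, and matching the two limits completes the proof.
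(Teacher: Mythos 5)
Your proposal follows the same architecture as the paper's proof: apply Proposition \ref{prop:characteristic_function_hawkes} to $N^T$, rescale the exponent (your $g^T=\mu^T T\,\Delta^T(T\cdot)$ agrees with the paper's $\theta_T=(1-a^T)^{-1}\Delta^T(T\cdot)$ up to the factor $\mu^T T(1-a^T)\to\delta$, and up to the $\delta$-rescaling $g=\delta\theta$ at the end), Taylor-expand (exponential rather than logarithm, which is immaterial), invert $\mathrm{Id}-(\cdot)*\phi^T$ via the resolvent $\psi^T$, pass to the limit using Lemma \ref{lemma:convergence_mittag} and Assumption \ref{assumption:structure}, and conclude by precompactness plus uniqueness for \eqref{eq:proof_riccati_a}. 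Your bookkeeping of the limiting terms is correct (linear part $\to f^{\alpha,\lambda}*ih$, quadratic part $\to\delta^{-1}\tfrac12 f^{\alpha,\lambda}*g^2$, direct terms vanish), your Volterra--Gronwall argument is a legitimate substitute for the paper's citation of Section 6.2.4 of \cite{EER16}, and your justification of $K^T\to K$ by integration by parts and continuous mapping matches the paper's (implicit) argument.

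The genuine gap is exactly where you write that you "expect the main obstacle", and the input you propose does not fill it. From $|C^T|\le 1$ one only gets $|\Delta^T|\le 2$, hence $|g^T|\le 2\mu^T T\to\infty$, which is useless; and neither local-uniform boundedness of $(g^T)_T$ nor its equicontinuity can be extracted "through the rescaled equation" without circularity, since the quadratic convolution term is controlled precisely by the bounds one is trying to establish, and $\rho^T$ is only \emph{weakly} convergent (only $\int_0^{\cdot}\rho^T$ converges uniformly), so convolving against it does not transfer a modulus of continuity for free. The paper supplies two separate technical inputs here: (i) the quantitative a priori estimate $|\Delta^T(Ts)|\le c(h)(1-a^T)$, obtained by adapting Proposition 6.4 of \cite{EER16} (a genuine bootstrap, not a consequence of $|C^T|\le1$), which is what makes the Taylor remainder of order $(1-a^T)^3$ and the non-convolved terms negligible; and (ii) Lemma \ref{proof:lemma:theta_compact}, proving that $\theta_T$ is continuously differentiable with $(\theta_T')_T$ uniformly bounded — established \emph{probabilistically} from the cluster representation of the Hawkes process, not from the equation. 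Input (ii) is used both for Arzel\`a--Ascoli and, via an integration by parts that replaces the weakly convergent kernel by the uniformly convergent $\int_0^{\cdot}T(1-a^T)\psi^T(sT)\mathrm{d}s\to F^{\alpha,\lambda}$, to kill the remainder term $r_2^T$; your passage to the limit in the quadratic term implicitly requires the same uniform control when replacing $\mu^T T(\Delta^T*\phi^T)(T\cdot)$ by its limit. Without these two lemmas the skeleton you describe does not close.
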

	To show this result, we are inspired by the methodology of \cite{EER16}. However, note again that we are in a more general setting.
	\begin{proof}
	    Recall that
		\begin{equation*}
		X^T_t = \frac{1-a^T}{T\mu^T}N^T_{tT}.
		\end{equation*}
		We introduce the following quantities:
		\begin{equation*}
		h^T(t) = \frac{1-a^T}{T\mu^T}h(\frac{t}{T}),~ L^T(h^{T},t) = \mathbb{E}[\text{exp}(ih^{T}* \mathrm{d}N^T)(t)]~\mathrm{and}~K^T = L^T(h^T, tT).
		\end{equation*}
		For every $T$, according to Proposition \ref{prop:characteristic_function_hawkes}, there exists a function $C^T$ solution of
		\begin{equation*}
		C^T = \text{exp}\big(ih^T+ \big(C^T -1\big)*\phi^T  \big)
		\end{equation*}
		such that
		\begin{equation*}
		L^T(h^T,t) = \text{exp}\Big( \int_0^t\big(C^T(s) -1\big)\mu^T \mathrm{d}s \Big).
		\end{equation*}
		Now define the sequence $g^T$
		\begin{equation*}
		g^T(s) = C^T(sT)-1.
		\end{equation*}
		We have
		\begin{equation}
		\label{eq:proof_car_vol_a}
		K^T= \text{exp}\Big( \frac{g^T}{1-a^T}*(T (1-a^T) \mu^T\mathbf{1}_{\mathbb{R}^+}) \Big)\,\, \mathrm{and}\,\,\, g^T +1= \text{exp}\big(\frac{1-a^T}{T\mu^T}ih +  g^T* (T\phi^T(\cdot T)) \big).
		\end{equation}
		An immediate adaptation of Proposition 6.4. in \cite{EER16} gives that for any $s\in[0,t]$
		\begin{equation}
		\label{eq:proof_riccati_ineq}
		|g^T(s)|\leq c(h)(1-a^T),
		\end{equation}
		with $c(h)$ a positive constant depending only on $h$. Hence for $T$ large enough we have
		\begin{equation}
		\label{eq:proof_riccati_log}
		\log(1+g^T) = g^T - \frac{1}{2}(g^T)^2 - \epsilon^T(h, \cdot),
		\end{equation}
		with $|\epsilon^T(h, \cdot)|\leq c(h) (1-a^T)^3$. According to Equations \eqref{eq:proof_car_vol_a} and \eqref{eq:proof_riccati_log}, we get for every $s\in [0,t]$
		\begin{equation*}
		g^{T}(s) = \frac{1}{2}g^{T}(s)^2 + \epsilon^{T}(h, s) + \frac{1-a^{T}}{{T}\mu^{T}}ih(s) +  g^{T}* \big({T}\phi^{T}(\cdot {T}) \big)(s).
		\end{equation*}
	Using that		\begin{equation*}
		\sum_{i\geq 1} (T\phi^T(\cdot T))^{*i} = T\psi^T(\cdot T),
		\end{equation*}
		we deduce from Lemma 4.1 in \cite{jaisson2015limit} that
		\begin{equation*}
		g^T(s) = \big(T\psi^T(\cdot T)\big)*\big(\frac{1}{2}(g^T)^2 + \epsilon^T(h, \cdot) + \frac{1-a^T}{T\mu^T}ih\big)(s) + \frac{1}{2}g^{T}(s)^2 + \epsilon^{T}(h, s) + \frac{1-a^{T}}{{T}\mu^{T}}ih(s).
		\end{equation*}
		Consequently, letting $\theta_T =(1-a^T)^{-1}g^T$ 
		\begin{equation*}
		\theta_T(s) = \big(T(1-a^T)\psi^T(\cdot T)\big)*\big(\frac{1}{2}\theta_T^2 + \frac{1}{\delta}ih\big)(s) + r_1^T(s),
		\end{equation*}
		with 
		\begin{eqnarray*}
		r_1^T(s)  &=&\big(T(1-a^T)\psi^T(\cdot T)\big)*\big(\epsilon^T(h, \cdot)(1-a^T)^{-2} + (\frac{1}{T(1-a^T)\mu^T}-\delta^{-1})ih\big)(s)\\ &+& (1-a^T)^{-1}\frac{1}{2}(g^{T}(s))^2 + (1-a^T)^{-1}\epsilon^{T}(h, s) + \frac{1}{{T}\mu^{T}}ih(s) . 
		\end{eqnarray*}
Since $a^T$ goes to $1$, we know from Lemma \ref{lemma:convergence_mittag} that in the sense of weak convergence
		\begin{equation*}
		T(1-a^T)\psi^T(\cdot T) \underset{T\rightarrow+\infty}{\rightarrow} f^{\alpha, \lambda}.
		\end{equation*}
		Finally we have
		\begin{equation*}
		\theta_T = f^{\alpha, \lambda}*\big(\frac{1}{2}\theta_T^2 + \frac{1}{\delta}ih\big) + r^T_1 + r_2^T,
		\end{equation*}
		where
		\begin{equation*}
		r_2^T = \big(T(1-a^T)\psi^T(\cdot T) - f^{\alpha, \lambda}\big)*\big(\frac{1}{2}\theta_T^2 + \frac{1}{\delta}ih\big).
		\end{equation*}
		We now prove that $(r_1^T)_{T\geq 0}$ and $(r_2^T)_{T\geq 0}$ goes to $0$ in $C^0([0,t], \mathbb{R})$ for the sup-norm.\\
		
		Using Assumption \ref{assumption:structure}, the second part of Lemma \ref{lemma:convergence_mittag} and Equation \eqref{eq:proof_riccati_ineq}, we get that $(r^T_1)_{T\geq 0}$ goes to zero in $C^0([0,t], \mathbb{R})$. The sequence $(\theta_T)_{T\geq 0}$ is bounded for the sup-norm according to Equation \eqref{eq:proof_riccati_ineq}. Moreover according to Lemma \ref{proof:lemma:theta_compact} (see after the proof)  $\theta_T$ is differentiable, and $(\theta_T')_{T\geq 0}$ is bounded for the sup-norm. By integration by parts we have
		\begin{equation*}
		r_2^T(t) = \big(\int_0^{\cdot}T(1-a^T)\psi^T(s T)\mathrm{d}s - F^{\alpha, \lambda}\big)*\big(\theta_T'\theta_T + \frac{1}{\delta}ih'\big)(t),
		\end{equation*}
		where we have used the fact that $\theta_T(0)=0$ and $h(0)=0$. We then conclude that $(r_2^T)_{T\geq0}$ converges towards $0$ in $C^0([0,t], \mathbb{R})$ using dominated convergence. Lemma \ref{proof:lemma:theta_compact} together with the Ascoli theorem gives that the sequence $(\theta_T)_{T\geq0}$ is relatively compact in $\big(C^0([0,t], \mathbb{R}), \| ~\|_{\infty}\big)$. Moreover for any limit point $\theta$ of the sequence $(\theta_T)_{T\geq0}$, we have that $\theta$ is solution of:
		\begin{equation*}
		\theta = f^{\alpha, \lambda}*\big(\frac{1}{2}\theta^2 + \frac{1}{\delta}ih\big).
		\end{equation*}
		The above equation has a unique continuous solution in $C^0([0,t], \mathbb{R})$, see Section 6.2.4 in \cite{EER16}. Thus the sequence $(\theta_T)_{T\geq0}$ converges toward this solution.\\
		
		Finally remark that
		\begin{equation*}
		    K^T(t) = \mathbb{E}[ \text{exp}(i h*\mathrm{d}X^T)(t) ].
		\end{equation*}
		Thus convergence in law of $(X^T)_{T\geq 0}$ towards $X$ implies that $(K^T)_{T\geq0}$ converges pointwise towards the function $K$. Passing to the limit in (\ref{eq:proof_car_vol_a}) we get
		\begin{equation*}
		K(t)= \text{exp}\Big( (\theta*(\delta \mathbf{1}_{\mathbb{R}^+} ))_t \Big) = \text{exp}\Big( \delta\int_{0}^{t}\theta(s)\mathrm{d}s \Big).
		\end{equation*}

		Letting $g = \delta \theta$, we have the result.
		\end{proof}
	
It is enough to characterize the law of $X$ to know $K(h, t)$ for any $t\in \mathbb{R^+}$ and $h\in C^1_0([0,t], \mathbb{R})$. Therefore uniqueness in law for the limit points of $(X^T)_{T\geq0}$ is a corollary from the uniqueness of continuous solution for the Volterra Riccati Equation (\ref{eq:proof_riccati_a}), see Section 6.2.4 in \cite{EER16}.\\

    We now give the lemma we used in the proof of Proposition \ref{prop:characteristic_function_vol}.
	\begin{lemma}
	    \label{proof:lemma:theta_compact}
	    The functions $(\theta_T)_{T\geq0}$ are continuously differentiable and $(\theta_T')_{T\geq0}$ is bounded in $C^0([0,t], \mathbb{R})$.
	\end{lemma}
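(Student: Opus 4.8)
The plan is to transfer everything to the function $C^T$ of Proposition~\ref{prop:characteristic_function_hawkes}, since by construction $g^T(s)=C^T(sT)-1$ and $\theta_T(s)=(1-a^T)^{-1}(C^T(sT)-1)$, so that $\theta_T\in C^1$ iff $C^T\in C^1$, with $\theta_T'(s)=(1-a^T)^{-1}T\,C^{T\prime}(sT)$. Thus it suffices to prove that $C^T$ is $C^1$ on every interval $[0,\tau]$ together with the bound $\|C^{T\prime}\|_\infty\le\|h^{T\prime}\|_\infty/(1-a^T)$, and then to rescale. Two structural facts will be used throughout: first, because $h^T(0)=0$ one has $C^T(0)=\exp(ih^T(0))=1$ (whence $g^T(0)=\theta_T(0)=0$); second, the representation $C^T=e^{ih^T}\tilde L^T$ obtained in Step~1 gives $|C^T|\le 1$, since $\tilde L^T$ is, up to time change, a characteristic function.

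First I would establish differentiability, which is the delicate point because $\phi$ is only locally integrable and need not be smooth. I would run the Picard iteration $C_0\equiv 1$, $C_{n+1}=\exp(ih^T+(C_n-1)*\phi^T)$, and show by induction that each $C_n$ is $C^1$ with $C_n(0)=1$ and $|C_n|\le 1$. The crucial observation is that, because $C_n(0)-1=0$, differentiating the convolution produces no boundary term, so $((C_n-1)*\phi^T)'=C_n'*\phi^T$, which is continuous; hence $C_{n+1}\in C^1$ with $C_{n+1}'=C_{n+1}(ih^{T\prime}+C_n'*\phi^T)$. Since the arguments have nonpositive real part, the map $\mathcal T\colon C\mapsto\exp(ih^T+(C-1)*\phi^T)$ is Lipschitz with constant $\int_0^\tau\phi^T\le a^T<1$ on $\{|C|\le 1\}$, so the iterates converge geometrically in sup norm. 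The recursion $\|C_{n+1}'\|_\infty\le\|h^{T\prime}\|_\infty+a^T\|C_n'\|_\infty$ bounds $\|C_n'\|_\infty$ uniformly in $n$, and a parallel telescoping estimate shows $(C_n')$ is Cauchy in sup norm; its limit is necessarily $C^{T\prime}$, so $C^T\in C^1$.

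Finally I would differentiate the fixed-point equation to obtain the linear Volterra equation $C^{T\prime}=C^T(ih^{T\prime}+C^{T\prime}*\phi^T)$. Using $|C^T|\le 1$ and $\phi^T\ge 0$, this yields the scalar Volterra inequality $|C^{T\prime}|\le\|h^{T\prime}\|_\infty+|C^{T\prime}|*\phi^T$; comparing with the solution of the corresponding equality and using the resolvent identity $\int_0^{+\infty}\psi^T=a^T/(1-a^T)$ established in Section~\ref{subsec:market_impact_hawkes} gives $\|C^{T\prime}\|_\infty\le\|h^{T\prime}\|_\infty/(1-a^T)$. It then remains to rescale: from $h^{T\prime}(t)=(1-a^T)T^{-2}(\mu^T)^{-1}h'(t/T)$ and $\theta_T'(s)=(1-a^T)^{-1}T\,C^{T\prime}(sT)$ one gets $\|\theta_T'\|_\infty\le\|h'\|_\infty/\big((1-a^T)\mu^T T\big)$, which converges to $\|h'\|_\infty/\delta$ by Assumption~\ref{assumption:structure} and is therefore bounded uniformly in $T$. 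The hard part is the differentiability step, everything else being Grönwall-type bookkeeping; note that a naive sup-norm contraction for $\theta_T'$ fails, because with the bound $|g^T|\le c(h)(1-a^T)$ of \eqref{eq:proof_riccati_ineq} the effective kernel norm $(1+c(h)(1-a^T))a^T$ need not be below one, and it is precisely the sharper estimate $|1+g^T|=|C^T|\le 1$ that preserves the constant.
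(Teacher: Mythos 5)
Your proof is correct, but it takes a genuinely different route from the paper's. The paper stays probabilistic: using the cluster representation from Step 1, it writes $\theta_T = (1-a^T)^{-1}\big(\mathbb{E}[\exp\big((ih+ih*\mathrm{d}\tilde{N}^T_{\cdot T})\tfrac{1-a^T}{T\mu^T}\big)]-1\big)$ with $\tilde N^T$ a Hawkes process with parameters $(\phi^T,\phi^T)$, differentiates under the expectation sign, and obtains the bound from $(1-a^T)\mathbb{E}[\tilde N^T_{tT}]\le \int_0^{tT}(1-a^T)\psi^T(s)\mathrm{d}s\le 1$ together with Assumption \ref{assumption:structure}. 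You instead work entirely on the deterministic side: Picard iteration for $C\mapsto\exp\big(ih^T+(C-1)*\phi^T\big)$ on $\{|C|\le1\}$, propagation of $C^1$ regularity through the iterates (the boundary term vanishing because $C_n(0)=1$), uniform convergence of the derivatives, then the linear Volterra equation $C^{T\prime}=C^T\big(ih^{T\prime}+C^{T\prime}*\phi^T\big)$ and the resolvent bound $\|C^{T\prime}\|_\infty\le\|h^{T\prime}\|_\infty/(1-a^T)$. Both arguments are sound and rescale to the same conclusion; the paper's is shorter because the probabilistic representation is already in place from Step 1 (at the cost of a differentiation under the expectation justified only by the integrability of $\tilde N^T_{tT}$), while yours is self-contained and purely analytic, yields the marginally sharper constant $\|h'\|_\infty/\big((1-a^T)\mu^T T\big)$, and makes explicit a point the paper leaves implicit: it is the estimate $|C^T|\le 1$, rather than the weaker $|g^T|\le c(h)(1-a^T)$, that keeps the effective kernel norm below one. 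Two steps you should write out in full if this argument were to replace the paper's: the identification of your Picard fixed point with the $C^T$ of Proposition \ref{prop:characteristic_function_hawkes} (uniqueness in $\{|C|\le1\}$ from the contraction, plus $|C^T|=|e^{ih^T}\tilde L^T|\le1$ from the representation), and the justification of $\big((C_n-1)*\phi^T\big)'=C_n'*\phi^T$ for merely integrable $\phi^T$, e.g.\ by checking via Fubini that $\big((C_n-1)*\phi^T\big)(t)=\int_0^t(C_n'*\phi^T)(u)\mathrm{d}u$ with $C_n'*\phi^T$ continuous.
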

	\begin{proof}
    Using the proof of Proposition \ref{prop:characteristic_function_hawkes} we have
	   \begin{equation*}
	        \theta_T = (1-a^T)^{-1} \Big( \mathbb{E}[\text{exp}\big((ih+ih*\mathrm{d}\tilde{N}_{\cdot T}^T)\frac{1-a^T}{T\mu^T}\big)]-1 \Big),
	    \end{equation*}
	    with $\tilde{N}$ a Hawkes processes with parameters $(\phi^T, \phi^T)$ where $\phi^T = a^T\phi$. Since $h(0)=0$, $h*\mathrm{d}\tilde{N}_{\cdot T}^T$ admits a derivative and for any $s\in [0,t]$
	    \begin{equation*}
	        (h*\mathrm{d}\tilde{N}_{\cdot T}^T)'(s) = (h'*\mathrm{d}\tilde{N}_{\cdot T}^T)(s).
	    \end{equation*}
	    Furthermore we have
	    \begin{equation*}
	        | (h'*\mathrm{d}\tilde{N}_{\cdot T}^T)(s) |\leq \| h'\|_{\infty} \tilde{N}_{tT}^T.
	    \end{equation*}
	    Using that
	    \begin{equation*}
	    \tilde{\lambda}_s^T = \psi^T(s) + \int_0^t \psi^T(t-s)\mathrm{d}\tilde{M}^T_s,
	    \end{equation*}
	    we get
$$
	        (1-a^T)\mathbb{E}[\tilde{N}^T_{tT}]\leq(1-a^T)\mathbb{E}[\int_0^{tT}\tilde{\lambda}^T_s\mathrm{d}s]\\
	        \leq \int_0^{tT}(1-a^T)\psi^T(s)\mathrm{d}s\\
	       \leq 1.
	$$
	    Consequently using derivation for integral with parameters, $\theta^T$ is differentiable and
	    \begin{equation*}
	        \theta_T' = (1-a^T)^{-1}\mathbb{E}[\big( ih' + ih'*\mathrm{d}\tilde{N}_{\cdot T}^T \big)\frac{1-a^T}{T\mu^T} \text{exp}\big((ih+ih*\mathrm{d}\tilde{N}_{\cdot T}^T)\frac{1-a^T}{T\mu^T}\big)].
	    \end{equation*}
Thus we have for all $s\in [0,t]$
	    \begin{equation*}
	        |\theta_T'(s)| \leq \frac{1}{T\mu^T(1-a^T)}(1-a^T)\mathbb{E}[\|h'\|_{\infty} + \|h'\|_{\infty} \tilde{N}_{tT}^T].
	    \end{equation*}
	    The right hand side is finite and independent of $s$, 
	    consequently the sequence $(\theta_T')_{T\geq 0}$ is bounded in $C^0([0,t], \mathbb{R})$.
	    \end{proof}
	Finally we have proved that the sequence $(X^T,Z^T)_{T\geq 0}$ converges in law for the Skorokhod topology.

	\subsubsection{Step $4$}
	\label{subsubsec:consequences_price_scaling_limit}
	Consider the sequence $(X^{a,T},Z^{a,T})_{T\geq 0}$ (resp. $(X^{b,T},Z^{b,T})_{T\geq 0}$) defined the same way as in Equation \eqref{eq:proof_seq} with $(N^{a,T})_{T\geq 0}$ (resp. $(N^{b,T})_{T\geq 0}$) instead of $(N^T)_{T\geq 0}$. According to Lemma \ref{lemma:rewrite_price} we have
$$
        \overline{P}^T_t = \frac{1-a^T}{T\mu^T} \big(1 + \int_0^{+\infty}\psi^T(u)\mathrm{d}u\big)\big(M^{a,T}_{tT} - M^{b,T}_{tT}\big)
= \frac{1}{T\mu^T} \big(M^{a,T}_{tT} - M^{b,T}_{tT}\big).
$$
	Thus,
	\begin{equation*}
	 	\overline{P}^T_t = \frac{1}{\sqrt{T\mu^T(1-a^T)}} \big(Z^{a,T}_{t} - Z^{b,T}_{t}  \big).
	\end{equation*}
Using Step 3, we get that $(Z^{a,T})_{T\geq0}$, and $(Z^{b,T})_{T\geq0}$ converge for the Skorohod topology. These sequences being independent,   $(\overline{P}^{T})_{T\geq0}$ converges towards a process $\widehat{P}$ in the Skorokhod topology. Furthermore we deduce from Proposition \ref{th:scaling_limit_multi_hawkes} together with Assumption \ref{assumption:structure} that there exist two independent Brownian motions $B^a$ and $B^b$ such that
	\begin{equation*}
	\widehat{P}_t = \frac{1}{\sqrt{\delta}}\big(B^a_{X_t^a} - B^b_{X_t^b}\big),
	\end{equation*}
	where $X^a$ (resp. $X^b$) is the limit of the sequence $(X^{a,T})_{T\geq 0}$ (resp. $(X^{a,T})_{T\geq 0}$) and is solution of Equation \eqref{eqxrough} with Brownian motion $B^a$ (resp. $B^b$). Hence $X = \frac{X^a + X^b}{\delta}$ is solution of
	\begin{equation*}
	X_t = \frac{2}{\delta}\int_{0}^{t}F^{\alpha, \lambda}(t-s) \mathrm{d}s + \frac{1}{\delta\sqrt{\lambda}}\int_{0}^{t}F^{\alpha, \lambda}(t-s)\frac{1}{\sqrt{\delta}}\mathrm{d}\big(B^a_{X^a_s}  + B^b_{X^b_s }\big).
	\end{equation*}
	Moreover there exists a Brownian motion $W$ such that $W_{X_t} = \frac{1}{\sqrt{\delta}}(B^a_{X^a_t} + B^b_{X^b_t})$. Consequently
	\begin{equation*}
	X_t = \frac{2}{\delta}\int_{0}^{t}F^{\alpha, \lambda}(t-s) \mathrm{d}s + \frac{1}{\delta\sqrt{ \lambda}}\int_{0}^{t}F^{\alpha, \lambda}(t-s)\mathrm{d}W_{X_s }.
	\end{equation*}

	\subsubsection{Step $5$}
	\label{proof:regularity}
	We first recall a result from \cite{JR16b}.
	\begin{property}
		\label{prop:volterra_regularity}
	Let $X$ be a solution of the stochastic Volterra equation \eqref{eq:price_var_eq}. Then for any $\varepsilon>0$, almost surely, $X$ has H\"older regularity $1\wedge (2\alpha-\varepsilon)$. And if $\alpha>1/2$, $X$ is almost surely differentiable.
	\end{property}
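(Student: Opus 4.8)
The plan is to split \eqref{eq:price_var_eq} as $X_t = D_t + \frac{1}{\delta\sqrt{\lambda}}Z_t$, with deterministic drift $D_t=\frac{2}{\delta}\int_0^t F^{\alpha,\lambda}(s)\,\mathrm{d}s$ and stochastic convolution $Z_t=\int_0^t F^{\alpha,\lambda}(t-u)\,\mathrm{d}W_{X_u}$, and to establish the regularity through the Burkholder--Davis--Gundy (BDG) and Kolmogorov continuity criteria. Since $F^{\alpha,\lambda}$ is a continuous cumulative distribution function, $D$ is continuously differentiable with continuous derivative $\frac{2}{\delta}F^{\alpha,\lambda}$, hence Lipschitz and differentiable; it therefore contributes regularity $1$ and only matters through the Lipschitz cap. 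All the analysis is thus carried by $Z$. The key structural remark is that $M_u:=W_{X_u}$ is, after the time change, a continuous martingale with bracket $[M]_u=X_u$, so that $Z_t=\int_0^t F^{\alpha,\lambda}(t-u)\,\mathrm{d}M_u$ is a Volterra convolution driven by a martingale whose quadratic variation is $X$ itself. This self-reference produces both the scaling exponent and the central difficulty.

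The deterministic inputs come from the behaviour of the Mittag-Leffler density and distribution at the origin (Appendix \ref{annex:mittag}): $F^{\alpha,\lambda}(\tau)\le C\tau^{\alpha}$ on $[0,1]$, while $f^{\alpha,\lambda}$ is nonincreasing with $f^{\alpha,\lambda}(v)\sim \frac{\lambda}{\Gamma(\alpha)}v^{\alpha-1}$ near $0$. I would first obtain uniform moment bounds $\sup_{t\le 1}\mathbb{E}[X_t^p]<\infty$ for every $p\ge 1$; the base case $\mathbb{E}[X_t]=D_t\le Ct$ holds because the martingale part of $Z$ has zero mean, and higher moments follow by feeding the equation into BDG and bootstrapping in $p$, converting brackets into moments of $X$ via $[M]_u=X_u$.

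The core is a Kolmogorov bound of the form $\mathbb{E}[|X_t-X_s|^p]\le C_p|t-s|^{\kappa_p}$, from which the almost sure Hölder exponent equals $\lim_{p\to\infty}\kappa_p/p$. For $t>s$, $\tau=t-s$, I split $Z_t-Z_s=I_1+I_2$ with $I_1=\int_s^t F^{\alpha,\lambda}(t-u)\,\mathrm{d}M_u$ and $I_2=\int_0^s\big(F^{\alpha,\lambda}(t-u)-F^{\alpha,\lambda}(s-u)\big)\,\mathrm{d}M_u$. By BDG and $F^{\alpha,\lambda}(t-u)\le C\tau^{\alpha}$ on $[s,t]$ one gets $\mathbb{E}[|I_1|^p]\le C_p\tau^{\alpha p}\,\mathbb{E}[(X_t-X_s)^{p/2}]$, so an increment of $X$ reappears. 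For $I_2$ I would split the bracket $\int_0^s(\cdots)^2\,\mathrm{d}X_u$ into a near-diagonal part $u\in[s-\tau,s]$, bounded by $C\tau^{2\alpha}(X_s-X_{s-\tau})$ and hence again by an increment of $X$, and a far part $u\le s-\tau$, where the mean value bound $F^{\alpha,\lambda}(t-u)-F^{\alpha,\lambda}(s-u)\le\tau f^{\alpha,\lambda}(s-u)$ together with the monotonicity of $f^{\alpha,\lambda}$ yields the same kind of estimate. In every term the bound at order $p$ is controlled by the bound at order $p/2$, which leads to the recursion $\kappa_p=\alpha p+\kappa_{p/2}$; iterating the geometric series gives $\kappa_p/p\to 2\alpha$, hence almost sure Hölder regularity $2\alpha-\varepsilon$, capped at $1$ by the Lipschitz drift, that is $1\wedge(2\alpha-\varepsilon)$. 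The main obstacle is precisely this self-referential recursion: the sharp exponent $2\alpha$ is dictated by the large-$p$ asymptotics rather than by the (more favourable) second moment, so one must control all moments uniformly and check that neither the far part of $I_2$ nor the low-moment base case degrades the limit.

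For the differentiability statement when $\alpha>1/2$, I would exhibit the derivative directly: the candidate is $\frac{2}{\delta}F^{\alpha,\lambda}(t)+\frac{1}{\delta\sqrt{\lambda}}\int_0^t f^{\alpha,\lambda}(t-u)\,\mathrm{d}M_u$, and this stochastic integral is well defined exactly when $\alpha>1/2$, since $\mathbb{E}\int_0^t f^{\alpha,\lambda}(t-u)^2\,\mathrm{d}X_u\le C\int_0^t (t-u)^{2\alpha-2}\,\mathrm{d}u<\infty$ if and only if $2\alpha-2>-1$. Applying the stochastic Fubini theorem \cite{veraar2012stochastic}, as in Step $3$ of Section \ref{proof:th:scaling_limit_price}, identifies $X$ as the antiderivative of this continuous process, so $X\in C^1$ almost surely. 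This whole scheme, and in particular the bootstrap closing at $2\alpha$, parallels \cite{JR16b}, from which the statement is borrowed; the one delicate point is the self-referential Kolmogorov estimate, which is also where the differentiability threshold $\alpha=1/2$ arises.
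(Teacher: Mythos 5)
The paper does not actually prove this proposition: in Step 5 it is explicitly \emph{recalled} from \cite{JR16b}, and the only regularity result proved in the paper itself is the complementary one (non-differentiability for $\alpha\le 1/2$, Proposition \ref{lemma:regularity_var}). So your attempt can only be judged on its own merits and against the cited reference, which, as you acknowledge, it parallels. Your skeleton is the right one: split \eqref{eq:price_var_eq} into the Lipschitz drift and the Volterra convolution against $M_u=W_{X_u}$ with $[M]_u=X_u$, run BDG and Kolmogorov, and close the self-referential recursion $\eta_p=\min\big(1,\alpha+\tfrac{1}{2}\eta_{p/2}\big)$ (with $\eta_p=\kappa_p/p$, base case $\eta_1=1$ from the Lipschitz first moment and monotonicity of $X$), whose fixed point is $1\wedge 2\alpha$; and the differentiability threshold does come from $f^{\alpha,\lambda}(v)\sim\frac{\lambda}{\Gamma(\alpha)}v^{\alpha-1}$ being square integrable near $0$ exactly when $\alpha>1/2$.

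Two steps, however, are asserted precisely where the work lies. First, the far part of $I_2$: with only the mean-value bound and monotonicity of $f^{\alpha,\lambda}$, you get $f^{\alpha,\lambda}(s-u)\le f^{\alpha,\lambda}(t-s)\le C(t-s)^{\alpha-1}$ on $\{u\le s-(t-s)\}$, hence a bracket bounded by $C(t-s)^{2\alpha}X_s$ and, after taking moments, the exponent $\alpha p$ only: the bootstrap gain $\kappa_{p/2}$ is lost, and with it the sharp exponent $2\alpha$ that is the whole point of the statement. To retain it you must slice $[0,s-(t-s)]$ into dyadic shells $s-u\in[2^k(t-s),2^{k+1}(t-s)]$, pair $f^{\alpha,\lambda}\le C\big(2^k(t-s)\big)^{\alpha-1}$ with the level-$p/2$ bound on the increment $X_{s-2^k(t-s)}-X_{s-2^{k+1}(t-s)}$, and sum; the sum then either reproduces $(t-s)^{\alpha p+\kappa_{p/2}}$ or saturates at the Lipschitz order $(t-s)^p$, which is harmless but is not ``the same kind of estimate'' as the near-diagonal part. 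A cleaner route, consistent with the paper's own manipulations in Steps 3 and 5, is pathwise: from $D^{\alpha}X_t=-\lambda X_t+\frac{2\lambda}{\delta}t+\frac{\sqrt{\lambda}}{\delta}W_{X_t}$ and the H\"older-improving property of the fractional integral $I^{\alpha}$ \cite{SKM93}, an a.s.\ H\"older exponent $h$ for $X$ upgrades to $\min(1,\alpha+h/2)-\varepsilon$, and one iterates from a crude initial exponent; this bypasses the far-field moment estimate entirely. Second, for $\alpha>1/2$, stochastic Fubini only identifies $X_t=\int_0^t Y_r\,\mathrm{d}r$ with $Y_r=\frac{2}{\delta}F^{\alpha,\lambda}(r)+\frac{1}{\delta\sqrt{\lambda}}\int_0^r f^{\alpha,\lambda}(r-u)\,\mathrm{d}M_u$, i.e.\ absolute continuity of the paths; to conclude that $X$ is differentiable (let alone $C^1$) you still owe a proof that $Y$ admits a continuous modification, which is another Kolmogorov estimate on a convolution with the singular kernel $(r-u)^{\alpha-1}$ and is not automatic. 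Minor but worth stating: the zero-mean claims used in the base case require a localization argument, since integrability of the bracket $\int_0^t F^{\alpha,\lambda}(t-u)^2\,\mathrm{d}X_u$ is part of what is being established.
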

	We now give a new result on the regularity of the solution of Equation \eqref{eq:price_var_eq}.
	\begin{property}
	\label{lemma:regularity_var}
Let $\alpha\leq\frac{1}{2}$. Let $X$ be a solution of the stochastic Volterra equation \eqref{eq:price_var_eq}. Then, almost surely, $X$ is not continuously differentiable.
	\end{property}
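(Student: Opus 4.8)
The plan is to argue by contradiction: suppose that, with positive probability, $X$ is continuously differentiable on $[0,1]$, and then extract a contradiction from the local behaviour of the Mittag-Leffler kernel at the origin. Note first that the sharp regularity statement of Proposition~\ref{prop:volterra_regularity} is of no help here, since it only bounds the smoothness of $X$ from below (it gives H\"older regularity $2\alpha-\varepsilon$) whereas non-differentiability is a lower bound on the \emph{roughness}; this is exactly the content we must supply. Write $Y=X'$ for the assumed continuous derivative. Since $X=(X^a+X^b)/\delta$ is a limit of normalised counting processes it is non-decreasing with $X_0=0$, so $Y\ge0$ is not identically zero, and by continuity $Y$ is bounded away from $0$ on some (random) subinterval. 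As $Y$ is then the density of the quadratic variation of the continuous martingale $s\mapsto W_{X_s}$ (recall $[W_X]=X$ from Proposition~\ref{prop:tightness_hawkes}), the Dambis--Dubins--Schwarz representation furnishes a Brownian motion $B$ with $W_{X_s}=\int_0^s\sqrt{Y_u}\,\mathrm dB_u$. Substituting into \eqref{eq:price_var_eq} and isolating the smooth drift $\frac{2}{\delta}\int_0^t F^{\alpha,\lambda}(s)\,\mathrm ds$, the claim reduces to showing that the stochastic convolution
\[
\tilde M_t=\int_0^t F^{\alpha,\lambda}(t-s)\sqrt{Y_s}\,\mathrm dB_s
\]
cannot be continuously differentiable when $\alpha\le\tfrac12$.

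The core is a second-moment lower bound on the increments of $\tilde M$ coming from freshly injected noise. For $t<t+h$ split $\tilde M_{t+h}-\tilde M_t=A_h+B_h$, with $A_h=\int_0^t\big(F^{\alpha,\lambda}(t+h-s)-F^{\alpha,\lambda}(t-s)\big)\sqrt{Y_s}\,\mathrm dB_s$ and $B_h=\int_t^{t+h}F^{\alpha,\lambda}(t+h-s)\sqrt{Y_s}\,\mathrm dB_s$. These are It\^o integrals over the disjoint intervals $[0,t]$ and $[t,t+h]$, hence orthogonal in $L^2$ (indeed $\mathbb E[B_h\mid\mathcal F_t]=0$), so the It\^o isometry gives
\[
\mathbb E\big[(\tilde M_{t+h}-\tilde M_t)^2\big]\ \ge\ \mathbb E\big[B_h^2\big]=\int_0^h F^{\alpha,\lambda}(u)^2\,\mathbb E\big[Y_{t+h-u}\big]\,\mathrm du .
\]
Using the Mittag-Leffler asymptotics $F^{\alpha,\lambda}(u)\sim \frac{\lambda}{\Gamma(\alpha+1)}\,u^{\alpha}$ as $u\to0$ (Appendix~\ref{annex:mittag}) together with the continuity and positivity of $s\mapsto\mathbb E[Y_s]$ on the good interval, the right-hand side is bounded below by $c_t\,h^{2\alpha+1}$ with $c_t>0$. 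Since $2\alpha+1\le2$ for $\alpha\le\tfrac12$, this is incompatible with $\tilde M$ being Lipschitz in time: if $\tilde M$ were $C^1$ with $\|\tilde M'\|_\infty\le R$ then $|\tilde M_{t+h}-\tilde M_t|\le Rh$ would force the left-hand side to be $O(h^2)=o(h^{2\alpha+1})$ when $\alpha<\tfrac12$, a contradiction. For the borderline $\alpha=\tfrac12$ the two orders coincide, and instead of divergence one rules out convergence of the difference quotient directly: conditionally on $\mathcal F_t$, the fresh term $B_h/h$ retains a non-degenerate Gaussian fluctuation (of order $\sqrt{Y_t}$) that cannot be cancelled by the $\mathcal F_t$-measurable term $A_h/h$, so $\lim_{h\to0}(\tilde M_{t+h}-\tilde M_t)/h$ does not exist.

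The delicate point is that the event $\{X\in C^1\}$ and the localisation constants depend on the whole path, so they are not $\mathcal F_t$-measurable and interact awkwardly with the martingale orthogonality above. I would handle this by fixing, on the assumed event, deterministic constants $R,\eta>0$ and a deterministic interval $[t_0,t_1]\subset(0,1)$ for which $\Omega_0=\{\tilde M\in C^1,\ \|\tilde M'\|_{\infty}\le R,\ Y\ge\eta\text{ on }[t_0,t_1]\}$ has positive probability, and then running the increment estimate conditionally on $\mathcal F_t$ for $t\in[t_0,t_1)$: on $\Omega_0$ the conditional variance of $B_h$ is at least $\eta\int_0^h F^{\alpha,\lambda}(u)^2\,\mathrm du\gtrsim\eta\,h^{2\alpha+1}$, whereas the Lipschitz bound forces the conditional second moment of the increment to be $O(h^2)$; letting $h\to0$ contradicts $\mathbb P(\Omega_0)>0$, hence $X\in C^1$ has probability zero. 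The main obstacle is making this conditioning rigorous — decoupling the path-dependent good event from the fresh noise carried by $B$ on $[t,t+h]$ while preserving the lower bound — together with the careful treatment of $\alpha=\tfrac12$, where the contradiction is non-existence of the limiting difference quotient rather than its blow-up.
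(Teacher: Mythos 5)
There is a genuine gap, and it sits exactly at the point you yourself flag as ``the main obstacle'': the decoupling of the path-dependent good event from the fresh noise. Your contradiction compares an \emph{unconditional} second-moment lower bound $\mathbb{E}\big[(\tilde M_{t+h}-\tilde M_t)^2\big]\geq c\,h^{2\alpha+1}$ with a \emph{pathwise} Lipschitz bound $|\tilde M_{t+h}-\tilde M_t|\leq Rh$ that holds only on an event $\Omega_0$ of positive, possibly small, probability. These two statements are perfectly compatible when $\mathbb{P}(\Omega_0)<1$: the large increments simply live on $\Omega_0^c$. So the argument must be run with the indicator $\mathbf{1}_{\Omega_0}$ inserted, and there it collapses: $\Omega_0$ is not $\mathcal{F}_t$-measurable, the orthogonality $\mathbb{E}[\mathbf{1}_{\Omega_0}A_hB_h]=0$ fails, and, worse, $\Omega_0$ is precisely an event on which \emph{all} future increments are small, hence it is negatively correlated with $\{|B_h| \text{ large}\}$; there is no reason for $\mathbb{E}[\mathbf{1}_{\Omega_0}B_h^2]$ to be of order $\mathbb{P}(\Omega_0)\,h^{2\alpha+1}$. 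Repairing this requires a genuinely different, multi-scale argument (conditional small-ball estimates for $B_h$ given $\mathcal{F}_{t_i}$, iterated over many disjoint intervals $[t_i,t_i+h]\subset[t_0,t_1]$ to force $\mathbb{P}(\Omega_0)\leq q^N\to 0$, with stopping times replacing the non-adapted constraint $Y\geq\eta$, and a conditional DDS step because $B_h$ is not conditionally Gaussian); none of this is supplied. Two further problems: the objects $Y=X'$, $\mathbb{E}[Y_s]$ and the representation $W_{X_s}=\int_0^s\sqrt{Y_u}\,\mathrm{d}B_u$ are manipulated as if $X$ were almost surely differentiable, whereas you assume this only on $\Omega_0$ (representation theorems cannot be applied ``on an event''); and the borderline case $\alpha=\tfrac12$, where your exponents coincide, is not proved at all --- the claim that the fresh fluctuation ``cannot be cancelled'' by the $\mathcal{F}_t$-measurable part is again exactly the unproved decoupling statement.

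For contrast, the paper avoids every one of these measurability issues by arguing pathwise. It first converts \eqref{eq:price_var_eq} into the fractional form $D^{\alpha}X_t=-\lambda X_t+\tfrac{2\lambda}{\delta}t+\tfrac{\sqrt{\lambda}}{\delta}W_{X_t}$, then invokes the law of the iterated logarithm for $W$ --- an almost-sure statement --- to get, at any point $t$ with $X_t'\neq 0$,
\begin{equation*}
\limsup_{s\rightarrow t^-}\frac{D^{\alpha}X_t-D^{\alpha}X_s}{\sqrt{2(X_t-X_s)\log\log\big((X_t-X_s)^{-1}\big)}}=\frac{\sqrt{\lambda}}{\delta}>0,
\end{equation*}
while on the same path the assumed bound $\|X'\|_\infty<\infty$ forces $|D^{\alpha}X_t-D^{\alpha}X_s|=O\big((t-s)^{1-\alpha}+(t-s)s^{-\alpha}\big)$, which is $o$ of the denominator for every $\alpha\leq\tfrac12$ --- note that the $\log\log$ factor is exactly what absorbs the critical case $\alpha=\tfrac12$, which your $L^2$ scaling cannot see. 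The contradiction is obtained on each individual path in the $C^1$ event, so no conditioning, no decoupling, and no separate treatment of $\alpha=\tfrac12$ are needed. If you wish to salvage your route, replace the moment bounds by an almost-sure roughness statement (an LIL-type or multi-scale small-ball argument) so that the comparison with the Lipschitz bound can be made pathwise.
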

	\begin{proof}
As already seen in Step 3, $X$ satisfies
	\begin{equation}
	\label{eq:regularity_a}
	    D^{\alpha}X_t = -\lambda X_t + \frac{2\lambda}{\delta}t + \frac{\sqrt{\lambda}}{\delta} W_{X_t}.
	\end{equation}
Applying the law of iterated logarithm we get for $0\leq t\leq 1$
	\begin{equation*}
	    \underset{s\rightarrow t^-}{\limsup}  \frac{D^{\alpha}X_t - D^{\alpha}X_s  - \frac{2\lambda}{\delta}(t-s)}{\sqrt{2(X_t - X_s)\log\log\big((X_t - X_s)^{-1}\big)}}  = \frac{\sqrt{\lambda}}{\delta}.
	\end{equation*}
Assume that $X$ is continuously differentiable. According to Appendix \ref{annex:fractional_calculus} we have
	\begin{equation*}
	    D^{\alpha}X_t = \frac{1}{\Gamma(1-\alpha)}\int_0^t (t-s)^{-\alpha}X_s'\mathrm{d}s.
	\end{equation*}
	Let $t$ be such that $X'_t \neq 0$. Such a point almost surely exists because $X$ is not constant. Indeed suppose it is constant, as $X_0 = 0$ it implies that $X=0$. But obviously the null function is not solution of Equation \eqref{eq:regularity_a}. For such $t$ using that
	\begin{equation*}
	    X_t - X_s \sim_{s\rightarrow t}(t-s)X'_t,
	\end{equation*}
	we have
	\begin{equation*}
	    \underset{s\rightarrow t^-}{\lim}  \frac{t-s}{\sqrt{2(X_t - X_s)\log\log\left[(X_t - X_s)^{-1}\right]}}  = 0.
	\end{equation*}
	Hence
	\begin{equation}
	\label{eq:proof_regularity}
	    \underset{s\rightarrow t^-}{\limsup}  \frac{D^{\alpha}X_t - D^{\alpha}X_s  }{\sqrt{2(X_t - X_s)\log\log\left[(X_t - X_s)^{-1}\right]}}  = \frac{\sqrt{\lambda}}{\delta}.
	\end{equation}
We now give a bound on	$|D^{\alpha}X_t - D^{\alpha}X_s|$, for $s<t$, where $\| X'\|_{\infty}$ denotes the supremum norm of $X'$:
	\begin{eqnarray*}
	    |D^{\alpha}X_t - D^{\alpha}X_s|  &=& \big|\int_0^s \big((t-u)^{-\alpha} - (s-u)^{-\alpha} \big)X'_s\mathrm{d}u + \int_s^t(t-u)^{-\alpha} X'_u\mathrm{d}u\big|\\
	    &\leq & \int_0^s \left|(t-u)^{-\alpha} - (s-u)^{-\alpha} \right| \| X'\|_{\infty}\mathrm{d}u + \frac{\| X'\|_{\infty}}{1-\alpha}(t-s)^{1-\alpha}\\
	    &\leq & \left|\int_0^s (t-u)^{-\alpha} - (s-u)^{-\alpha} \right| \mathrm{d}u\| X'\|_{\infty} + \frac{\| X'\|_{\infty}}{1-\alpha}(t-s)^{1-\alpha}\\
	    &\leq & \big(\int_0^{t-s} u^{-\alpha}\mathrm{d}u + \int_s^t u^{-\alpha} \mathrm{d}u \big) \| X'\|_{\infty} + \frac{\| X'\|_{\infty}}{1-\alpha}(t-s)^{1-\alpha}\\
	    &\leq & \big(\frac{1}{1-\alpha}(t-s)^{1-\alpha} + (t-s) s^{-\alpha}\big) \| X'\|_{\infty}+ \frac{\| X'\|_{\infty}}{1-\alpha}(t-s)^{1-\alpha}.
	\end{eqnarray*}
	We get
	\begin{equation*}
	    \underset{s\rightarrow t^-}{\lim}  \frac{D^{\alpha}X_t - D^{\alpha}X_s  }{\sqrt{2(X_t - X_s)\log\log\left[(X_t - X_s)^{-1}\right]}}  = 0.
	\end{equation*}
	This is in contradiction with Equation \eqref{eq:proof_regularity}, hence $X$ cannot be continuously differentiable.
	\end{proof}
	    
	\subsection{Proof of Theorem \ref{th:characteristic_function_variance}}
	\label{proof:th:characteristic_function_variance}    
	We have seen in Section \ref{subsubsec:consequences_price_scaling_limit} that $X = (X^a + X^b)/\delta$, with $X^a$ and $X^b$ independent copies of the limit of the sequence $(X^T)_{T\geq 0}$. From Proposition \ref{prop:characteristic_function_vol}, we immediately obtain Theorem \ref{th:characteristic_function_variance}.
	    
	\appendix
	\section{Appendix}
	\label{sec:annex}
	
	\subsection{Mittag-Leffler functions}
	\label{annex:mittag}
	Let $(\alpha, \beta)\in(\mathbb{R}^{+}_{\star})^2$. The Mittag-Leffler function $E_{\alpha, \beta}$ is defined for $z\in\mathbb{C}$ by\begin{equation*}
		E_{\alpha, \beta}(z) = \sum_{n\geq 0}\frac{z^n}{\Gamma(\alpha n+\beta)}.
	\end{equation*}
	For $(\alpha, \lambda)\in(0,1]\times\mathbb{R}^+$, we also define 
	\begin{eqnarray*}
		f^{\alpha, \lambda}(t)&=&\lambda t^{\alpha-1} E_{\alpha, \alpha}(-\lambda t^{\alpha}), \,\, t>0,\\
		F^{\alpha, \lambda}(t) &=&  \int_{0}^{t}f^{\alpha, \lambda}(s)\mathrm{d}s, \,\, t\geq0.
	\end{eqnarray*}
	The function $f^{\alpha, \lambda}$ is a density function on $\mathbb{R}_+$ called the Mittag-Leffler density function. Its Laplace transform is
	\begin{equation*}
		\hat{f}^{\alpha, \lambda}(z) = \frac{\lambda}{\lambda+z^{\alpha}}.
		\end{equation*}
	When $\alpha = 1$, the Mittag-Leffler density simply corresponds to the exponential law with parameter $\lambda$.

	\subsection{Tauberian theorems}
	\label{annex:karamata}
	The following results can be found in \cite{HT89}.
	\begin{definition}
		\label{def:slowly_varying function}
		A measurable function $L:\mathbb{R}^+\rightarrow \mathbb{R}$ is slowly varying if for all $s>0$
		\begin{equation*}
			\frac{L(st)}{L(t)}\underset{t\rightarrow+\infty}{\rightarrow} 1.
		\end{equation*}
	\end{definition}
	
	\begin{property}
	\label{prop:annex_slowly_varying}
	    Let $L$ be a slowly varying function and $\alpha>0$, then
	    \begin{equation*}
	        t^{-\alpha}L(t) \underset{t\rightarrow+\infty}{\rightarrow} 0.
	    \end{equation*}
	\end{property}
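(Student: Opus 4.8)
The plan is to show that a slowly varying function grows more slowly than any positive power of $t$; once that is established, the stated convergence is immediate. The substantive input I would invoke is a structural description of slowly varying functions, namely Karamata's representation theorem (available in \cite{HT89}): any slowly varying $L$ can be written, for $t$ large enough, as
\begin{equation*}
L(t) = c(t)\exp\Big(\int_{a}^{t}\frac{\epsilon(u)}{u}\mathrm{d}u\Big),
\end{equation*}
where $c(t)\to c\in(0,+\infty)$ and $\epsilon(u)\to 0$ as $u\to+\infty$.

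Given this representation, the argument reduces to a direct estimate. Fix $\alpha>0$. Since $\epsilon(u)\to 0$, I would choose $T\geq a$ so that $|\epsilon(u)|\leq \alpha/2$ for all $u\geq T$. Splitting the integral at $T$, for $t\geq T$ one has
\begin{equation*}
\int_a^t \frac{\epsilon(u)}{u}\mathrm{d}u \leq M + \frac{\alpha}{2}\log\frac{t}{T},\qquad M := \int_a^T\frac{\epsilon(u)}{u}\mathrm{d}u,
\end{equation*}
so that $L(t)\leq c(t)e^{M}T^{-\alpha/2}t^{\alpha/2}$. Since $c(t)$ is bounded near infinity, this gives $t^{-\alpha}L(t)\leq C\,t^{-\alpha/2}\to 0$, which is the claim.

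An alternative, more self-contained route avoids the representation theorem and uses only the uniform convergence theorem for slowly varying functions (also in \cite{HT89}): the convergence $L(\lambda t)/L(t)\to 1$ holds uniformly for $\lambda$ in compact subsets of $(0,+\infty)$. Fixing $\lambda=2$ and $\epsilon>0$, there is $t_0$ with $L(2t)/L(t)\leq 1+\epsilon$ for $t\geq t_0$; telescoping along the dyadic points $t=2^n t_0$ yields $L(2^n t_0)\leq L(t_0)(1+\epsilon)^n = L(t_0)(t/t_0)^{\log_2(1+\epsilon)}$, and one controls intermediate values of $t$ by the local boundedness of $L$. Choosing $\epsilon$ small enough that $\log_2(1+\epsilon)<\alpha$ again gives $L(t)=O(t^{\beta})$ with $\beta<\alpha$, hence $t^{-\alpha}L(t)\to 0$.

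I expect the only real obstacle to be the structural theorem itself (representation or uniform convergence): passing from the pointwise hypothesis $L(st)/L(t)\to 1$ to a uniform-in-$\lambda$ statement, or to the integral representation, is the non-trivial part, relying on measurability of $L$ together with a Baire-category or integration argument. Once that input is granted — which I would simply cite from \cite{HT89}, as it is standard in the theory of regular variation — the power-law bound and the final conclusion are entirely routine.
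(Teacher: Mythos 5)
Your proof is correct, but there is nothing in the paper to compare it against: this proposition is stated in Appendix \ref{annex:karamata} among results quoted from \cite{HT89} (``The following results can be found in \cite{HT89}''), so the paper's own ``proof'' is a citation, not an argument. Your first route, via Karamata's representation theorem, is the standard textbook derivation, and the estimate $L(t)\leq c(t)e^{M}T^{-\alpha/2}t^{\alpha/2}$ followed by $t^{-\alpha}L(t)\leq C\,t^{-\alpha/2}\to 0$ is carried out correctly. Your second route is also sound, with one cosmetic imprecision: local boundedness of $L$ alone does not bound $L$ on $[2^{n}t_0,2^{n+1}t_0]$ in terms of $L(2^{n}t_0)$ uniformly in $n$; what controls the intermediate values is the uniform convergence theorem itself, applied with $\lambda\in[1,2]$ (giving $L(\lambda s)\leq(1+\epsilon)L(s)$ for all $s\geq t_0$ and all such $\lambda$), which you have already invoked, so no real gap remains. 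The trade-off between your approach and the paper's is clear: you gain a self-contained derivation modulo the structural theorems of regular variation, but those inputs (representation theorem, uniform convergence theorem) are precisely the deep part of the theory and are strictly harder than the statement being proved, which is presumably why the authors simply cite it; also note that both of your routes implicitly use positivity of $L$ (as in the representation theorem and in the characterisation theorem of Appendix \ref{annex:karamata}), which the paper's Definition \ref{def:slowly_varying function} leaves tacit.
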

	
	\begin{theorem}
	\label{th:characterisation}(Characterisation theorem)
		Let $U$ be a positive measurable function on $\mathbb{R}_+$ such that for all $s\in C$, with $C$ a set with positive Lebesgue measure
		\begin{equation*}
		\frac{U(ts)}{U(t)}\underset{t\rightarrow+\infty}{\rightarrow}g(s)>0,
		\end{equation*}
		for some function $g$. Then the previous limit can be extended for all $s>0$. Let $\tilde{g}$ be this limiting function extending $g$. There exist $\alpha\in\mathbb{R}$ such that $g(t) = t^{\alpha}$ and a slowly varying function $L$ such that $U(t) = t^{\alpha}L(t)$.\\
	\end{theorem}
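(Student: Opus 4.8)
The plan is to reduce the statement to a multiplicative Cauchy functional equation, to extend the domain of convergence from the positive-measure set $C$ to all of $(0,+\infty)$ by a Steinhaus-type argument, and finally to exploit measurability to identify $g$ as a power function. First I would record that the set
\[
S=\{s>0:\ \lim_{t\to+\infty}U(ts)/U(t)\ \text{exists and is positive}\}
\]
on which $g$ is defined is a multiplicative subgroup of $(0,+\infty)$. Indeed, for $s,s'\in S$ the decomposition
\[
\frac{U(tss')}{U(t)}=\frac{U((ts)s')}{U(ts)}\,\frac{U(ts)}{U(t)},
\]
together with $ts\to+\infty$, gives $g(ss')=g(s)g(s')>0$, so $ss'\in S$; and writing $u=t/s$ yields $U(t/s)/U(t)=\big(U(us)/U(u)\big)^{-1}\to g(s)^{-1}$, so $s^{-1}\in S$ as well. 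Thus $g$ restricted to $S$ is a group homomorphism into $(0,+\infty)$, and by hypothesis $C\subseteq S$.

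The main step, and the one I expect to be the genuine obstacle, is to show $S=(0,+\infty)$. Passing to logarithms, the set $A:=\log C$ has positive Lebesgue measure because $\log$ is a diffeomorphism of $(0,+\infty)$ onto $\mathbb{R}$, and $\log S$ is an additive subgroup of $\mathbb{R}$ containing $A$. By Steinhaus' theorem the difference set $A-A$ contains an open interval $(-\eta,\eta)$ around $0$; since $A-A\subseteq \log S-\log S=\log S$, the subgroup $\log S$ contains $(-\eta,\eta)$, and being closed under addition it then contains $\bigcup_{n\ge1}(-n\eta,n\eta)=\mathbb{R}$. Hence $S=(0,+\infty)$, so the limit exists and is positive for every $s>0$; we call $\tilde g$ this extension. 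This is precisely where the measurability of $U$ is indispensable, through Steinhaus, in contrast to a purely pointwise hypothesis.

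It then remains to identify $\tilde g$ and to produce $L$. The function $\tilde g$ is multiplicative, positive, and measurable, the last point because $\tilde g(s)=\lim_{n\to+\infty}U(ns)/U(n)$ is a pointwise limit of the measurable maps $s\mapsto U(ns)/U(n)$. Setting $k(x)=\log\tilde g(e^{x})$, multiplicativity of $\tilde g$ shows that $k$ is a measurable additive function on $\mathbb{R}$, hence linear: $k(x)=\alpha x$ for some $\alpha\in\mathbb{R}$, that is $\tilde g(s)=s^{\alpha}$. Finally, defining $L(t)=U(t)/t^{\alpha}$, for every $s>0$ one has
\[
\frac{L(ts)}{L(t)}=s^{-\alpha}\,\frac{U(ts)}{U(t)}\xrightarrow[t\to+\infty]{}s^{-\alpha}s^{\alpha}=1,
\]
so $L$ is slowly varying in the sense of Definition \ref{def:slowly_varying function} and $U(t)=t^{\alpha}L(t)$, which completes the argument. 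The two delicate points throughout are the extension step of the second paragraph and the passage from a measurable Cauchy equation to linearity in the third; both rely on measurability rather than on any continuity or monotonicity of $U$.
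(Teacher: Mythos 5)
Your proof is correct, but note that there is essentially nothing in the paper to compare it with: Theorem \ref{th:characterisation} is stated in Appendix \ref{annex:karamata} as a known result imported from the reference \cite{HT89}, and the authors give no proof of it. Your argument is the standard proof of the Characterisation Theorem of regular variation theory (essentially the one found in the cited literature): the set $S$ of scales $s$ for which the limit exists and is positive is a multiplicative group on which $g$ is a homomorphism; Steinhaus' theorem applied to $\log C$ forces $\log S = \mathbb{R}$; and the theorem on measurable solutions of Cauchy's functional equation identifies $\tilde g(s)=s^{\alpha}$, after which $L(t)=U(t)/t^{\alpha}$ is slowly varying by construction. All three steps are carried out soundly. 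The only inaccuracy is a side remark: in the extension step it is not the measurability of $U$ that is ``indispensable, through Steinhaus''; Steinhaus is applied to the set $C$, whose measurability and positive measure are hypotheses independent of $U$. The measurability of $U$ becomes indispensable only in your final step, where it makes $\tilde g$ a pointwise limit of the measurable functions $s\mapsto U(ns)/U(n)$ and hence measurable, ruling out the pathological non-measurable solutions of the Cauchy equation that would not be of power form. This misattribution does not affect the validity of the argument.
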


	\begin{theorem}
	\label{th:karamata}(Karamata's Tauberian theorem) Let $U$ be a measurable non-negative function, $c\geq 0$, $\rho>-1$ and assume $\hat{U}(z) = \int_0^{+\infty}e^{-zs}U(s)\mathrm{d}s$ is finite for any $z>0$. Then
	\begin{equation*}
	    U(t)\sim_{+\infty}ct^{\rho}\frac{L(t)}{\Gamma(1+\rho)}
	\end{equation*}
	for $L$ a slowly varying function implies
	\begin{equation*}
	    \hat{U}(z)\sim_{0^+}cz^{-\rho-1}L(\frac{1}{z}).
	\end{equation*}
	
	\end{theorem}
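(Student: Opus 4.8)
The plan is to prove this Abelian statement by a single change of variables that recasts the Laplace integral in terms of the regular variation of $U$, followed by a dominated convergence argument. First I would substitute $s=t/z$ in $\hat{U}(z)=\int_0^{+\infty}e^{-zs}U(s)\,\mathrm{d}s$, which turns the normalized transform into
\begin{equation*}
\frac{\hat{U}(z)}{z^{-\rho-1}L(1/z)} = \int_0^{+\infty} e^{-t}\,\frac{U(t/z)}{z^{-\rho}L(1/z)}\,\mathrm{d}t .
\end{equation*}
The entire problem then reduces to showing that the right-hand side converges to $c$ as $z\to0^+$.

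Next I would identify the pointwise limit of the integrand. For a fixed $t>0$ the argument $t/z$ tends to $+\infty$ as $z\to0^+$, so the hypothesis $U(s)\sim c\,s^\rho L(s)/\Gamma(1+\rho)$ gives
\begin{equation*}
\frac{U(t/z)}{z^{-\rho}L(1/z)} \sim \frac{c\,t^\rho}{\Gamma(1+\rho)}\,\frac{L(t/z)}{L(1/z)} ,
\end{equation*}
and the ratio $L(t/z)/L(1/z)$ tends to $1$ by slow variation (Definition \ref{def:slowly_varying function}). Hence the integrand converges pointwise to $e^{-t}\,c\,t^\rho/\Gamma(1+\rho)$, whose integral over $(0,+\infty)$ is exactly $c$, since $\int_0^{+\infty}e^{-t}t^\rho\,\mathrm{d}t=\Gamma(1+\rho)$ (finite because $\rho>-1$). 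So up to interchanging limit and integral the claim follows.

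The main obstacle is to legitimately exchange limit and integral, since the asymptotics of $U$ hold only for large arguments and slowly varying functions are not uniformly comparable. I would handle this with Potter's bounds: for every small $\varepsilon>0$ there is a constant with $L(t/z)/L(1/z)\le C\max(t^{\varepsilon},t^{-\varepsilon})$ once $1/z$ is large, so on the region where $t/z$ is large the integrand is dominated by $C'e^{-t}\big(t^{\rho+\varepsilon}+t^{\rho-\varepsilon}\big)$, which is integrable against $e^{-t}$ for $\varepsilon<\rho+1$. On the complementary region $\{t<s_0 z\}$, where $t/z$ stays bounded, I would bound the contribution by $\tfrac{z^{\rho+1}}{L(1/z)}\int_0^{s_0}U(u)\,\mathrm{d}u$ after reverting the substitution; this vanishes as $z\to0^+$ because $\rho+1>0$ dominates the slowly varying denominator, and $U$ is locally integrable (finiteness of $\hat{U}$ forces $\int_0^{T}U<\infty$). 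Combining the pointwise limit with this domination yields $\hat{U}(z)\sim c\,z^{-\rho-1}L(1/z)$. I expect the Potter-bound domination together with the careful treatment of the small-$t$ region to be the only delicate part; the change of variables and the Gamma-integral evaluation are routine.
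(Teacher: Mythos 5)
The paper itself gives no proof of Theorem \ref{th:karamata}: it is stated in Appendix \ref{annex:karamata} purely as a quoted result, with a pointer to \cite{HT89}, so there is no internal argument to compare yours against. Your proof is correct and is the standard proof of this statement, which (as you rightly note) is really the Abelian direction of Karamata's theorem: the substitution $s=t/z$ reduces everything to showing $\int_0^{+\infty}e^{-t}\,U(t/z)/(z^{-\rho}L(1/z))\,\mathrm{d}t\to c$, the pointwise limit follows from slow variation, and the exchange of limit and integral is exactly where Potter's bounds are needed, since slow variation alone gives no uniform control on the ratio $L(t/z)/L(1/z)$ over unbounded ranges of $t$. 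Your treatment of the region $\{t/z\le s_0\}$ is also sound: local integrability of $U$ does follow from finiteness of $\hat{U}$, and the factor $z^{\rho+1}/L(1/z)$ vanishes because $x^{\rho+1}L(x)\to+\infty$, which is Proposition \ref{prop:annex_slowly_varying} applied to the slowly varying function $1/L$. Two small caveats: Potter's bounds are themselves a nontrivial input from regular-variation theory (of comparable depth to the uniform convergence theorem), so your proof is self-contained only modulo that lemma; and the case $c=0$ allowed by the statement must be read as $U(t)=o\big(t^{\rho}L(t)\big)$, under which your argument still runs with the dominating constant replaced by an arbitrary $\varepsilon'$, yielding $\hat{U}(z)=o\big(z^{-\rho-1}L(1/z)\big)$.
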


	\subsection{Fractional derivative}
    \label{annex:fractional_calculus}
	For $\alpha\in [0,1)$, the fractional derivative operator $D^{\alpha}$ is defined for $h$ $\lambda$-H\"older function (with $\lambda>\alpha$) by
	\begin{equation*}
	    D^{\alpha}f(t) = \frac{1}{\Gamma(1-\alpha)}\frac{\mathrm{d}}{\mathrm{d}t}\int_0^t (t-s)^{-\alpha}h(s)\mathrm{d}s.
	\end{equation*}
	Note that if the function $h$ is continuously differentiable and $f(0) =0$. The derivation for integral with parameters gives
    \begin{equation*}
	    D^{\alpha}f(t) = \frac{1}{\Gamma(1-\alpha)}\int_0^t (t-s)^{-\alpha}f'(s)\mathrm{d}s.
	\end{equation*}
	More information on fractional differential operator can be found in \cite{SKM93}.
	
	\subsection{A result on inhomogenous Poisson process}
	\label{annex:poisson}
We recall the following well known result.
	\begin{property}
	\label{lemma:exponential_formula}(Exponential formula)
	Let $N$ be an inhomogenous Poisson process with intensity $\nu$ and $f$ be a complex measurable function defined on $\mathbb{R}^{+}$ such that $Re(f)\leq 0$. Consider the function 
	\begin{equation*}
	    N_{f}(t) = \sum_{i=1}^{N_t}f(T_i),
	\end{equation*}
    where $(T_i)_{i\in \mathbb{N}}$ are the jump times of $N$. For any $t\geq 0$ we have
    \begin{equation*}
        \mathbb{E}[\emph{exp}\big(N_{f}(t)\big)] = \emph{exp}\big(\int_0^t (e^{f(s)}-1)\nu(s) \mathrm{d}s\big).
    \end{equation*}
	\end{property}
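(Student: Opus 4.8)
The plan is to condition on the total number of jumps of $N$ on $[0,t]$ and exploit the classical fact that, conditionally on this number, the jump times are independent and identically distributed. Write $\Lambda(t)=\int_0^t\nu(s)\,\mathrm{d}s$, so that $N_t$ has a Poisson distribution with parameter $\Lambda(t)$; since $\nu$ is locally integrable this is finite, and if $\Lambda(t)=0$ both sides of the claimed identity equal $1$, so I may assume $\Lambda(t)>0$. The structural fact I would invoke is that, conditionally on $\{N_t=n\}$, the unordered jump times $(T_1,\dots,T_n)$ are distributed as $n$ independent random variables, each with density $s\mapsto\nu(s)/\Lambda(t)$ on $[0,t]$. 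Because the functional $N_f(t)=\sum_{i=1}^{N_t}f(T_i)$ is symmetric in the jump times, passing to this unordered i.i.d. description introduces no loss.

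First I would decompose over the value of $N_t$:
\begin{equation*}
\mathbb{E}[\exp(N_f(t))]=\sum_{n\geq0}\mathbb{P}(N_t=n)\,\mathbb{E}\Big[\exp\Big(\sum_{i=1}^n f(T_i)\Big)\,\Big|\,N_t=n\Big].
\end{equation*}
By the conditional i.i.d. description and independence, the inner conditional expectation factorises as $\big(\mathbb{E}[e^{f(U)}]\big)^n=\big(c/\Lambda(t)\big)^n$, where $U$ has density $\nu/\Lambda(t)$ on $[0,t]$ and $c=\int_0^t e^{f(s)}\nu(s)\,\mathrm{d}s$. Substituting the Poisson weights $\mathbb{P}(N_t=n)=e^{-\Lambda(t)}\Lambda(t)^n/n!$ and summing the resulting exponential series then yields
\begin{equation*}
\mathbb{E}[\exp(N_f(t))]=e^{-\Lambda(t)}\sum_{n\geq0}\frac{c^n}{n!}=e^{c-\Lambda(t)}=\exp\Big(\int_0^t\big(e^{f(s)}-1\big)\nu(s)\,\mathrm{d}s\Big),
\end{equation*}
which is exactly the asserted formula.

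The only point requiring care, and the place where the hypothesis $Re(f)\leq0$ enters, is the justification of the interchange of expectation and the infinite sum. Since $Re(f)\leq0$ gives $|e^{f(s)}|\leq1$, each conditional expectation has modulus at most $1$, so the series is dominated in absolute value by $\sum_{n\geq0}\mathbb{P}(N_t=n)=1$; dominated convergence (equivalently, Fubini applied to the nonnegative dominating series) legitimises both the decomposition over $\{N_t=n\}$ and the term-by-term summation, and the same bound gives $|c|\leq\Lambda(t)<\infty$ so that $c$ is well-defined. I expect this integrability bookkeeping to be the main, and essentially the only, obstacle: once the conditional i.i.d. structure of the arrival times is invoked, the algebra of the Poisson series is entirely routine.
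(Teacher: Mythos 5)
Your proof is correct and complete: the paper itself states this exponential formula without proof (it is merely recalled as a well-known result in Appendix \ref{annex:poisson}), and your argument --- conditioning on $N_t$, invoking the order-statistics property that, given $\{N_t=n\}$, the jump times are i.i.d.\ with density $\nu/\Lambda(t)$ on $[0,t]$ (harmless here since $N_f(t)$ is symmetric in the $T_i$), and summing the Poisson series --- is the standard derivation. You also correctly identify the only delicate point, namely that $\mathrm{Re}(f)\leq 0$ gives $|e^{f(s)}|\leq 1$, which makes $c=\int_0^t e^{f(s)}\nu(s)\,\mathrm{d}s$ well-defined and justifies, by dominated convergence, both the decomposition over $\{N_t=n\}$ and the term-by-term summation.
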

		
	\bibliographystyle{plain}
	\bibliography{biblioJusRos}

\begin{thebibliography}{10}

\bibitem{AC01}
Robert Almgren and Neil Chriss.
\newblock Optimal execution of portfolio transactions.
\newblock {\em Journal of Risk}, 3:5--40, 2001.

\bibitem{BLL15}
Emmanuel Bacry, Adrian Iuga, Matthieu Lasnier, and Charles-Albert Lehalle.
\newblock Market impacts and the life cycle of investors orders.
\newblock {\em Market Microstructure and Liquidity}, 1(02):1550009, 2015.

\bibitem{BM14a}
Emmanuel Bacry and Jean-Fran{\c{c}}ois Muzy.
\newblock First and second-order statistics characterization of {H}awkes
  processes and non-parametric estimation.
\newblock {\em IEEE Transactions on Information Theory}, 62(4):2184--2202,
  2016.

\bibitem{BLP16}
Mikkel Bennedsen, Asger Lunde, and Mikko~S. Pakkanen.
\newblock Decoupling the short and long-term behavior of stochastic volatility.
\newblock {\em arXiv:1610.00332}, 2016.

\bibitem{HT89}
Nicholas~H. Bingham, Charles~M. Goldie, and Jef~L. Teugels.
\newblock {\em Regular variation}, volume~27.
\newblock Cambridge university press, 1989.

\bibitem{B10}
Jean-Philippe Bouchaud.
\newblock Price impact.
\newblock {\em Encyclopedia of quantitative finance}, 2010.

\bibitem{Cont2010}
Rama Cont and Adrien De~Larrard.
\newblock {Price dynamics in a Markovian limit order market}.
\newblock {\em SIAM Journal on Financial Mathematics}, 4(1):1--25, 2013.

\bibitem{cont2010stochastic}
Rama Cont, Sasha Stoikov, and Rishi Talreja.
\newblock {A stochastic model for order book dynamics}.
\newblock {\em Operations research}, 58(3):549--563, 2010.

\bibitem{EGR18}
Omar El~Euch, Jim Gatheral, and Mathieu Rosenbaum.
\newblock Roughening heston.
\newblock {\em SSRN 3116887}, 2018.

\bibitem{EEFR16}
Omar El~Euch, Fukasawa Masaaki, and Mathieu Rosenbaum.
\newblock The microstructural foundations of leverage effect and rough
  volatility.
\newblock {\em To appear in Finance and Stochastics}, 2016.

\bibitem{EER16}
Omar El~Euch and Mathieu Rosenbaum.
\newblock The characteristic function of rough {H}eston models.
\newblock {\em To appear in Mathematical Finance}, 2016.

\bibitem{EER17}
Omar El~Euch and Mathieu Rosenbaum.
\newblock Perfect hedging in rough {H}eston models.
\newblock {\em arXiv:1703.05049}, 2017.

\bibitem{FS15}
Vladimir Filimonov and Didier Sornette.
\newblock Apparent criticality and calibration issues in the {H}awkes
  self-excited point process model: application to high-frequency financial
  data.
\newblock {\em Quantitative Finance}, 15(8):1293--1314, 2015.

\bibitem{G10}
Jim Gatheral.
\newblock No dynamic arbitrage and market impact.
\newblock {\em Quantitative finance}, 10(7):749--759, 2010.

\bibitem{GJR14}
Jim Gatheral, Thibault Jaisson, and Mathieu Rosenbaum.
\newblock Volatility is rough.
\newblock {\em To appear in Quantitative Finance}, 2014.

\bibitem{GW15}
Carla Gomes and Henri Waelbroeck.
\newblock Is market impact a measure of the information value of trades ?
  {M}arket response to liquidity vs. informed metaorders.
\newblock {\em Quantitative Finance}, 15(5):773--793, 2015.

\bibitem{BBH13}
Stephen~J. Hardiman, Nicolas Bercot, and Jean-Philippe Bouchaud.
\newblock Critical reflexivity in financial markets: a {H}awkes process
  analysis.
\newblock {\em The European Physical Journal B}, 86(10):442, 2013.

\bibitem{HS04}
Gur Huberman and Werner Stanzl.
\newblock Price manipulation and quasi-arbitrage.
\newblock {\em Econometrica}, 72(4):1247--1275, 2004.

\bibitem{JLP17}
Eduardo~Abi Jaber, Martin Larsson, and Sergio Pulido.
\newblock Affine {V}olterra processes.
\newblock {\em arXiv:1708.08796}, 2017.

\bibitem{J15}
Thibault Jaisson.
\newblock Market impact as anticipation of the order flow imbalance.
\newblock {\em Quantitative Finance}, 15(7):1123--1135, 2015.

\bibitem{jaisson2015limit}
Thibault Jaisson and Mathieu Rosenbaum.
\newblock Limit theorems for nearly unstable {H}awkes processes.
\newblock {\em The Annals of Applied Probability}, 25(2):600--631, 2015.

\bibitem{JR16b}
Thibault Jaisson and Mathieu Rosenbaum.
\newblock Rough fractional diffusions as scaling limits of nearly unstable
  heavy tailed {H}awkes processes.
\newblock {\em The Annals of Applied Probability}, 26(5):2860--2882, 2016.

\bibitem{FLM03}
Fabrizio Lillo, Doyne Farmer, and Rosario~N. Mantegna.
\newblock Econophysics: Master curve for price-impact function.
\newblock {\em Nature}, 421(6919):129, 2003.

\bibitem{LMPR17}
Giulia Livieri, Saad Mouti, Andrea Pallavicini, and Mathieu Rosenbaum.
\newblock Rough volatility: evidence from option prices.
\newblock {\em To appear in IIE Transactions}, 2017.

\bibitem{PRS17}
Mathias Pohl, Alexander Ristig, Walter Schachermayer, and Ludovic Tangpi.
\newblock The amazing power of dimensional analysis: Quantifying market impact.
\newblock {\em To appear in Microstructure and Liquidity}, 2017.

\bibitem{BP03}
Marc Potters and Jean-Philippe Bouchaud.
\newblock More statistical properties of order books and price impact.
\newblock {\em Physica A: Statistical Mechanics and its Applications},
  324(1-2):133--140, 2003.

\bibitem{SKM93}
Stefan~G. Samko, Anatoly~A. Kilbas, and Oleg~I. Marichev.
\newblock Fractional integrals and derivatives.
\newblock {\em Theory and Applications, Gordon and Breach, Yverdon}, 1993:44,
  1993.

\bibitem{smith2003statistical}
E.~Smith, Doyne~J. Farmer, L.~Gillemot, and S.~Krishnamurthy.
\newblock {Statistical theory of the continuous double auction}.
\newblock {\em Quantitative Finance}, 3(6):481--514, 2003.

\bibitem{BDKLLT11}
Bence T{\'o}th, Yves Lemperiere, Cyril Deremble, Joachim De~Lataillade, Julien
  Kockelkoren, and J-P Bouchaud.
\newblock Anomalous price impact and the critical nature of liquidity in
  financial markets.
\newblock {\em Physical Review X}, 1(2):021006, 2011.

\bibitem{veraar2012stochastic}
Mark Veraar.
\newblock The stochastic {F}ubini theorem revisited.
\newblock {\em Stochastics An International Journal of Probability and
  Stochastic Processes}, 84(4):543--551, 2012.

\end{thebibliography}
	
\end{document}